\newtheorem{theorem}{Theorem}[section] 
\newtheorem{corollary}[theorem]{\bf Corollary} 
\newtheorem{proposition}[theorem]{\bf Proposition} \newtheorem{definition}[theorem]{\bf Definition} \newtheorem{lemma}[theorem]{\bf Lemma} \newtheorem{example}[theorem]{\bf Example}  
\newtheorem{algorithm}[theorem]{\bf Algorithm} 
\newtheorem{remark}[theorem]{\bf  Remark} \newtheorem{remarks}[theorem]{\bf  Remarks} \newenvironment{proof}{{\bf Proof.}}{\hspace*{\fill}$\blacksquare$\par\vspace{4mm}} 
\def \bt{ \begin{theorem} }
\def \et{ \end  {theorem} }
\def \bl{ \begin{lemma} }
\def \el{ \end  {lemma} }
\def \bp{ \begin{proposition} }
\def \ep{ \end  {proposition} }
\def \bc{ \begin{corollary} }
\def \ec{ \end  {corollary} }
\def \bd{ \begin{definition} }
\def \ed{ \end  {definition} }
\def \bdp{ \begin{definitionprop} }
\def \edp{ \end  {definitionprop} }
\def \bdt{ \begin{definitiontheorem} }
\def \edt{ \end  {definitiontheorem} }
\def \bpr{ \begin{proof} }
\def \epr{ \end  {proof} }
\def \ba{ \begin{algorithm} }
\def \ea{ \end{algorithm} }
\def \be{ \begin{example} }
\def \eex{ \end{example} }
\def \bes{ \begin{examples} }
\def \eexs{ \end{examples} }
\def \br{ \begin{remark} }
\def \er{ \end{remark} }
\def \brs{ \begin{remarks} }
\def \ers{ \end{remarks} }
\def \bpb{ \begin{problem} }
\def \epb{ \end{problem} }
\newcommand{\Ann} {\mathrm{Ann}}
\newcommand{\Min} {\mathrm{MP}}
\newcommand{\mydiv} {\ \mathrm{div}\ }
\newcommand{\ol} {\overline}
\newcommand{\ul} {\underline}
\newcommand{\ee} {\mathrm{e}}
\newcommand{\J} {\mathrm{J}}
\newcommand{\ra} {\rightarrow}
\newcommand{\Ra} {\Rightarrow}
\newcommand{\vv} {\mathrm{v}}
\newcommand{\LC} {\mathrm{L}}
\newcommand{\F}{\mathbb{F}}
\newcommand{\N}{\mathbb{N}}
\newcommand{\Z}{\mathbb{Z}}
\newcommand{\MR} {\mathrm{MR}}
\newcommand{\tfae} {The following are equivalent:}
\title{B\'{e}zout Identities Associated to a Finite Sequence.}
\author{G. H. Norton\\\\ Department of Mathematics\\ University of Queensland, Brisbane, Queensland 4072, Australia.\thanks{
Email:ghn@maths.uq.edu.au}}
\begin{document}
\maketitle
\begin{abstract}
We consider finite sequences $s\in D^n$ where $D$ is a 
commutative, unital, integral domain. We prove three sets of identities (possibly with repetitions), each involving $2n$  polynomials  associated to  $s$. The right-hand side of these identities is a recursively-defined (non-zero) 'product-of-discrepancies'. There are implied iterative algorithms (of quadratic complexity) for the left-hand side coefficients; when the ground domain is factorial, the identities are in effect B\'ezout identities.

We give a number of applications. Firstly, a new (quadratic) algorithm to compute  B\'ezout coefficients over a field, which compares favourably with the extended Euclidean algorithm.  We show that  the successive output  polynomials of the Berlekamp-Massey algorithm  either coincide or are relatively prime.  A second application concerns  sequences with perfect linear complexity profile. We give new characterisations of them in terms of minimal polynomials and a simpler proof of a theorem of the characterisation of binary sequences with perfect linear complexity due to Wang and Massey using the third set of identities. 

Another application is to annihilating polynomials which do not vanish at zero and have minimal degree. We simplify and extend an algorithm of Salagean to sequences over $D$. First we prove a lower bound lemma which was stated without proof for sequences over a field. This lemma and an easy extension of the author's minimal polynomial algorithm yields such annihilators. The first set of identities allows us to remove a test. In fact, we compute minimal realisations and give corresponding identities. We also construct these annihilators by extending the sequence by one term, give corresponding minimal polynomial identities and prove a characterisation (stated without proof by Salagean over a field). In the Appendix, we give an alternative proof of the lower bound lemma using reciprocal annihilators and apply it to the complexity of reverse sequences. This gives a new proof of a theorem of Imamura and Yoshida on the linear complexity of reverse sequences, initially proved using Hankel matrices over a field and now valid for sequences over a factorial domain.
\end{abstract}
{\bf Keywords:} B\'{e}zout identity; factorial domain; finite sequence; minimal polynomial.
\section{Introduction}
\subsection{Overview}

Let $\F$ be a field and $(u,u_2)\in \F[x]^2$ not both zero. It is well-known that there are B\'{e}zout coefficients  $(f,f_2)\in \F[x]^2$ such that $f\cdot u+f_2\cdot u_2=\gcd(u,u_2)$, known as a B\'{e}zout\footnote{Etienne B\'{e}zout (1730-1783); according to \cite[p. 115]{Tignol}, the integer case is due to Claude Bachet de M\'eziriac (1581-1638): "Probl\`{e}mes plaisans et d\'{e}lectables qui se font par les nombres", 1624.} identity.
 (Of course, it easy to check that if $(f,f_2)$ are B\'ezout coefficients for $(u,u_2)$, then so are $(f,f_2)+h\cdot(-u_2,u)/\gcd(u,u_2)$ for any $h\in \F[x]$.) As far as we know,   B\'{e}zout coefficients are always computed using the extended Euclidean algorithm. 

In this  paper, we are interested in the case when $(u,u_2)$ arises from a finite sequence. For example, this pair could be the pair  of polynomials used to find a minimal one, the successive minimal polynomials or the rational function approximating the 'generating function' of the sequence. All of these are typically relatively prime polynomials. These pairs are all computed using the minimal polynomial algorithm due to the author \cite{N95b}, \cite{N99b}; for a succinct overview, see \cite{N09d}. A natural question is whether there are associated minimal polynomial  identities. We answer this in the affirmative for each pair of polynomials and give some applications.

Thus let $s=(s_1,\ldots,s_n)\in D^n$ denote a sequence of $n$ terms from a
commutative unital integral domain $D$, a 'domain' for short. We prove theorems giving $n$ minimal polynomial identities (possibly with repetitions) associated to  $s$, Theorems \ref{fg} and \ref{numu}. It turns out the right-hand sides  are equal to the same recursively-defined 'product-of-discrepancies' $\nabla_n\in D\setminus\{0\}$ of Definition \ref{nabla}. There is an implied iterative algorithm of quadratic complexity, Algorithm \ref{MRplusmult}.  (When the ground domain is factorial, the identities are in effect B\'ezout identities as the minimal polynomials are relatively prime.)

We give a number of applications:  the successive 'connection polynomials' of the Berlekamp-Massey algorithm (as developed in \cite{N10a}) either coincide or are relatively prime when $D$ is factorial, Corollary \ref{relprime} and a new (quadratic) algorithm to compute  B\'ezout coefficients, which compares favourably with the extended Euclidean algorithm; see Section 7. Another application concerns  sequences with perfect linear complexity profile. In Section 8 we give new characterisations of them in terms of minimal polynomials and a simpler proof of a theorem of Wang and Massey characterising binary sequences with perfect linear complexity, \cite{Wangpdf}. 

Section 9 gives applications to annihilating polynomials which do not vanish at zero. We  simplify and extend to $D$ results of \cite{Salagean}, which used  definitions of a linear recurring sequence, annihilating and minimal polynomial for finite sequences over $\F$ (with first term $s_0$) equivalent to the non-standard definitions of   \cite{N95b}, \cite{N99b}\footnote{We note that   \cite{N95b} and \cite{N99b} were cited in \cite{NS-key}.}. First we establish a lower bound lemma which was stated without proof in \cite{Salagean}. The resulting corollary and algorithm do not require any characterisations of minimal polynomials.  In particular, the derivation of  Algorithm \ref{rewrite2} is similar to \cite{Ma69} and \cite[Algorithm 4.2]{N95b}.  Furthermore, our first identity allows us to eliminate a test used in \cite[Algorithm 3.2]{Salagean}. We also construct annihilators that do not vanish at zero by extending the given sequence by one term and give corresponding minimal polynomial identities, Theorem \ref{dge0}. We characterise these annihilators of minimal degree in Theorem \ref{char}; a version over $\F$ was stated without proof in \cite{Salagean}. 

In the Appendix (Section 10), we give an alternative proof of Lemma \ref{x|min} using reciprocal annihilators and apply it to the complexity of reverse sequences. This gives a new proof of \cite[Theorem 3, p. 149]{IY} on the linear complexity of reverse sequences, initially proved using Hankel matrices over a field and now valid when $D$ is factorial.

 A preliminary version of this work was presented in May 2010 at Equipe SECRET, Centre de Recherche, INRIA Paris-Rocquencourt, whom the author  thanks for their hospitality.
\subsection{In More Detail}
Our approach is based on the recursive minimal polynomial theorem of \cite{N10a} which is relative to any $\varepsilon\in D$.
This gives the next minimal polynomial $\mu^{(n)}$ using a previous one $\mu^{(n-1)}$ and a $\mu'^{(n-1)}$; informally, $\mu'^{(n-1)}$ is a 'prejump' polynomial of $(s_1,\ldots,s_{n-1})$, i.e. a minimal polynomial immediately preceding a jump in the linear complexity $\LC_{n-1}=\deg(\mu^{(n-1)})$ --- we are ignoring the initial cases. We prove inductively that $\mu^{(n)}$ and $\mu'^{(n)}$ are always relatively prime (when $D$ is factorial) and exhibit polynomials $f^{(n)}$ and $f_2^{(n)}$ such that for any domain $D$
$$f^{(n)}\cdot\mu^{(n)}+f_2^{(n)}\cdot\mu'^{(n)}=\nabla_n.$$
We apply this to obtain an identity for $\mu^{(n-1)}$ and $\mu^{(n)}$, Proposition \ref{nextid}.

Before discussing our third identity, we need to define another polynomial $\mu_2^{(n)}$ derived from $\mu^{(n)}$ and ${s}$; $\mu_2^{(n)}\in D[x]$ is the {\em polynomial part of the Laurent product} $$\mu^{(n)}\cdot (s_nx^{-n}+\cdots +s_1x^{-1}).$$
In an earlier paper \cite{N95b}, we inductively constructed  $\ol{\mu}^{(i)}=(\mu^{(i)},\mu_2^{(i)})\in D[x]^2$ for $1\leq i\leq n$ (possibly with repetitions) such that

(i)  $\deg(\mu_2^{(i)})<\deg(\mu^{(i)})$

(ii)  $\mu^{(i)}\cdot (s_ix^{-i}+\cdots +s_1x^{-1})-\mu_2^{(i)} \equiv 0\bmod(x^{\deg(\mu^{(i)})-i-1})$ 

(iii) $\mu^{(i)}$ is a solution to (ii) of minimal degree.

\noindent See Algorithm \ref{MRplusmult},  which does not use scalar or polynomial division. We called the pair
$\ol{\mu}^{(n)}$ a 'minimal realisation' of $s$.
If $\mu^{(n)}$ is monic, this gives a rational approximation of $s_1x^{-1}+\cdots +s_nx^{-n}$ with denominator of minimal degree. 
When $D$ is factorial, minimality easily implies that 
$\mu^{(n)}$ and $\mu_2^{(n)}$ are relatively prime, so it is natural to ask what their 'coefficients' are. Our third identity is
$$-\mu_2'^{(n)}\cdot\mu^{(n)}+\mu'^{(n)}\cdot\mu_2^{(n)}=\nabla_n.$$
 Algorithm \ref{MRplusmult} already computes these coefficients,  so they can be read off at the end of the $n^{th}$ iteration at no extra cost. 
The degrees of the coefficients are bounded above by $\LC_{n-1}$.  We also show that if $D$ is a field, these coeffients are valid  for {\em any} minimal realisation of $s$, Corollary 
\ref{anyMR}. Secondly, our coefficients agree with the coefficients found by the extended Euclidean algorithm (up to a scalar), Proposition \ref{xea}. 

We conclude this Introduction by outlining our algorithm for obtaining the B\'{e}zout identity  for $(u,u_2)\in D[x]^2$ where $0\leq \deg(u_2)\leq \deg(u)=d$, $u$ is monic and $D$ is a principal ideal domain. The steps are: 

 (i) $2d$ subtractions in $D[x^{-1},x]$ to obtain $s=u_2/u (\bmod \ x^{-2d-1})$
 
 (ii) Algorithm \ref{MRplusmult} for the B\'{e}zout coefficients 
 associated to $s$
 
 (iii) two multiplications in $D[x]$ to obtain $\nabla_{2d}\cdot\gcd(u,u_2)$. 
 
This means that we can use Algorithm \ref{MRplusmult} whenever  the extended Euclidean algorithm is applied e.g.  to compute B\'{e}zout coefficients in $\F[x,y]$ for example.
B\'{e}zout identities also arise  in the context of the key equation of Coding Theory, see e.g. \cite{McE02}, \cite{N95c}, \cite{NS-key} and it seems likely that the algorithm of \cite[Section 4]{AM} is closely related to the minimal realisation algorithm of \cite{N95b}; see also \cite[Section 8]{N99b}.

\section{The Recursive Minimal Polynomial Theorem}
\subsection{Notation}
Our notation is continued from \cite{N10a}:  $\N=\{1,2,\ldots\}$, $n\in\N$ and $D$ is  a commutative, unital integral domain with $1\neq 0$, or 'domain' for short. For any set $S$ containing 0, $S^\times=S\setminus\{0\}$. We say that $f\in D[x]^\times$ is monic if its leading term is 1 and the constant term of $f$ is  $f_0$. The reciprocal of 0 is 0 and for $f\in D[x]^\times$, its reciprocal is $f^\ast(x)=x^{\deg(f)}f(x^{-1})$.  We  often write  $f=x^eg+h$ for $f(x)=x^eg(x)+h(x)$, where $e\in\N$ and $g,h\in D[x]$.

For $f\in D[x]$, $\ol{f}$ denotes an element of $D[x]^2$ with first component $f$ and a certain component  $f_2$ (depending on a sequence) to be specified. We regard
$D[x]^2$ as a $D[x]$-module by multiplication on each
component; e.g. for $e\in\N$, $\Delta\in D$ and $f,f'\in D[x]$
$$x^e\ol{f}+\Delta\cdot\ol{f}'=(x^ef+\Delta\cdot f',x^ef_2+\Delta \cdot f_2').$$
We will also extend $\deg$ to $D[x]^2$ via
$\deg(\ol{f})=(\deg(f),\deg(f_2))$ and write $\gcd\ol{f}$ for $\gcd(f,f_2)$.
If $\ol{f}=(f,f_2)$, we let $$\tilde{f}=(-f_2,f)$$
 We also use $\cdot$ for  the {\em inner product} $\ol{f}\cdot\ol{u}=(fu,f_2u_2)$.

We write $:D[[x^{-1},x]$ for the domain of Laurent series in $x^{-1}$ over $D$ and let $\vv:D[[x^{-1},x]\ra \Z$ be the exponential valuation. We also use $\vv$ denote its restriction to $D[x^{-1},x]$. Thus $\vv:D[x^{-1},x]\ra \Z$ is $\vv(g)=\max\{i: g_i\neq 0\}$, which coincides with the degree function on $D[x]$. We have $\vv(gg')=\vv(g)+\vv(g')$, $\vv(g+g')\leq \max\{\vv(g),\vv(g')\}$ and $\vv(g+g')= \max\{\vv(g),\vv(g')\}$ if $\vv(g)\neq \vv(g)'$.
\subsection{Sequences}
Our non-standard approach to finite sequences was based on regarding linear recurring sequences $s:\N\rightarrow D$ as the torsion submodule of a natural $D[x]$-module,  \cite[Section 2]{N95b},  \cite[Section 3]{N99b}. 
 We begin with Laurent series in $x^{-1}$, $D[[x^{-1}]$ as standard $D[[x^{-1}]$-module i.e. acting on itself via multiplication. This also makes $D[[x^{-1}]$ into a $D[x]$-module. Let $\ul{s}=\sum_{j\geq 1}s_jx^{-j}$. Then for $f\in D[x]$, put
$$f\circ \ul{s}=\sum_{j\geq 1}(f\cdot \ul{s})_{-j}\ x^{-j}.$$
 One checks that $\circ$ makes $x^{-1}D[[x^{-1}]]$ into a $D[x]$-module.
 \bd (Cf. \cite{S05}, \cite{AS}, \cite{Salagean}) We say that $s$ satisfies a {\em linear recurrence relation}  if it is a torsion element i.e. if $\Ann(s)=\{f\in D[x]:\ f\circ \ul{s}=0\}\neq \{0\}$ or  if for some $f\in D[x]^\times$  $(f\cdot \ul{s})_{d-j}=0$ for $d-j\leq -1$ where $d=\deg(f)$ i.e.
$$ f_0s_{j-d}+\cdots+f_ds_ {j}=0\mbox{ for } d+1\leq j.
$$
\ed
 When $f_d=1$, we can write 
$s_{j}=-(f_0s_{ j-d}+\cdots+ f_{ d-1}s_{ j-1})$ for $j\geq d+1$ and $s$ is a {\em linear recurring sequence}.

Now for  $n\in \N$ and $s=(s_1,\ldots,s_n)\in D^n$, let $\overline{s}\in D[x]$ be $\overline{s}(x)=s_1x+\cdots+s_nx^n.$ 
We will also abbreviate $\overline{s}(x^{-1})=s_1x^{-1}+\cdots+s_nx^{-n}$ to $\ul{s}$, so that $\ul{s}_j=s_{\ -j}$ for $-n\leq j\leq -1$. 
In the following definition, multiplication of $f\in D[x]$ and $\ul{s}\in D[x^{-1}]$ is in the domain of $D$-Laurent {\em polynomials} $D[x,x^{-1}]$.
\begin{definition}[Annihilator, annihilating polynomial] \label{anndefn}(\cite[Definition 2.7, Proposition 2.8]{N95b}, Cf.  \cite{AS}, \cite{Salagean})) If $s\in D^n$, then $f\in D[x]$  is an annihilator (or a characteristic polynomial) of $s$ if $f=0$ or $(f\cdot \ul{s})_{d-j}=0$ for $d-n\leq d-j\leq -1$ i.e. 
$$f_0s_{j-d}+\cdots+f_ds_ {j}=0\mbox{ for } d+1\leq j\leq n$$
where $d=\deg(f)\geq 0$, written $f\in \Ann(s)$.
\end{definition}

Any polynomial of degree at least $n$ annihilates $s$, vacuously.  For $1\leq i\leq n$, we write $s^{(i)}$ for $(s_1,\ldots,s_i)$.  The following definition is a functional version of \cite[Definition 2.10]{N95b}.
\begin{definition} [Discrepancy Function] We define $\Delta: D[x]^\times\times D^n\ra D$ by 
$$\Delta(f,s)=(f\cdot\ul{s})_{\deg(f)-n}.$$
\end{definition}
Thus $\Delta(f,s)=\sum_{k=0}^{d}f_k \ s_{n-d+k}$ where $d=\deg(f)$.  Clearly for  $n\geq 2$, $f\in\Ann(s)^\times$  if and only if  $f\in\Ann(s^{(n-1)})^\times$ and $\Delta(s,f)=0$.
If $s\in D^n$ is understood, we write $\Delta_n(f)$ for  $\Delta(f,s)$;  if $f$ is also understood, we simply write $\Delta_n$.
 It is elementary that  if $1\leq i\leq n-1$, then
$(f\cdot\ul{s^{(i)}})_{\deg(f)-i}=(f\cdot\ul{s})_{\deg(f)-i}$.

\section{Minimal Polynomials}\label{mptheory}

\begin{definition}[Minimal Polynomial] \label{mindefn} (\cite[Definition 3.1]{N95b},  Cf. \cite{AS}, \cite{Salagean}) We say that $f\in \Ann(s)$ is a minimal polynomial of $s\in D^n$ if  $$\deg(f)=\min\{\deg(g):\ g\in \Ann(s)^\times\}$$ and let $\Min(s)$ denote the set of minimal polynomials of $s$.
\end{definition}
As any $f\in D[x]$ of degree at least $n$ annihilates $s\in D^n$, $\Min(s)\neq \emptyset$. 
We do not require minimal polynomials to be monic.
For any $d\in D^\times$, $d\in \Min(0,\ldots,0)$; if $s_1\neq 0$ and $\deg(f)=1$  then $f\in \Min((s_1))$ since $D$ has no zero divisors.  

The {\bf linear complexity function} $\LC:D^n\ra\{0\}\cup \N$ is $$\LC(s)=\deg(f)\mbox{ where }f\in\Min(s).$$
We will also write $\LC_n$ for $\LC(s)$ when $s$ is understood and similarly  $\LC_j=\LC(s^{(j)})$ for $1\leq j\leq n$.  For fixed $s$, $\LC$ is clearly a non-decreasing function of $i$.

 \subsection{Exponents}

\begin{definition}[Exponent Function] For $n\geq 1$, let the exponent function $\ee_n:D[x]^\times \times\N\ra \Z$ be given by 
$$\ee_n(f)=n+1-2\deg(f).$$
\end{definition}

 The following lemma is the annihilator analogue of \cite[Lemma 1]{Ma69}.
\begin{lemma}   \label{elegantproof}(\cite[Lemma 5.2]{N95b})  Let $n\geq 2$, $f\in\Ann(s^{(n-1)})^\times$ and $\Delta_n(f)\neq 0$. 

(i)  For any $g\in\Ann(s)^\times$, $\deg(g)\geq n-\deg(f)= \ee_{n-1}(f)+\deg(f)$. 

(ii) If $h\in\Min(s)$ then
$\deg(g)\geq\max\{\ee_{n-1}(h),0\}+\deg(h)$.
\end{lemma}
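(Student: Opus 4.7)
The plan is to prove (i) by computing the coefficient of $x^{e+d-n}$ (writing $d=\deg(f)$, $e=\deg(g)$) of the Laurent polynomial $fg\cdot\ul{s}\in D[x^{-1},x]$ in two ways, via the commutativity $fg\ul{s}=g(f\ul{s})=f(g\ul{s})$. This is the annihilator analogue of Massey's classical two-sided computation and is essentially the only substantive step. I argue by contradiction, assuming $e<n-d$, so $e+d-n\le -1$.

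Expanding as $g(f\ul{s})$, the coefficient is $\sum_{i=0}^{e}g_i\,(f\ul{s})_{e+d-n-i}$, with $j=e+d-n-i$ running over $[d-n,\,e+d-n]\subseteq[d-n,-1]$. The hypothesis $f\in\Ann(s^{(n-1)})^\times$ gives $(f\ul{s^{(n-1)}})_j=0$ for $d-n+1\le j\le -1$, and since $\ul{s}-\ul{s^{(n-1)}}=s_nx^{-n}$ while $\deg(f)=d$, the product $f\cdot(\ul{s}-\ul{s^{(n-1)}})$ is supported on positions $\le d-n$; hence $(f\ul{s})_j=0$ throughout $[d-n+1,-1]$ as well. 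Only $j=d-n$ survives, contributing $g_e\cdot\Delta_n(f)\neq 0$ since $D$ is a domain. Expanding instead as $f(g\ul{s})$, the relevant indices $j$ now lie in $[e-n,\,e+d-n]\subseteq[e-n,-1]$, and every $(g\ul{s})_j$ in this range vanishes because $g\in\Ann(s)$. The contradiction forces $e\ge n-d$, and $n-d=\ee_{n-1}(f)+d$ is immediate arithmetic.

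For (ii), I read ``$h\in\Min(s)$'' as $h\in\Min(s^{(n-1)})$ with $\Delta_n(h)\neq 0$---the only reading under which the $\ee_{n-1}(h)$ term can contribute nontrivially. Applying (i) with $h$ in place of $f$ yields $\deg(g)\ge n-\deg(h)=\ee_{n-1}(h)+\deg(h)$, while $\Ann(s)\subseteq\Ann(s^{(n-1)})$ together with the minimality of $h$ over $s^{(n-1)}$ gives $\deg(g)\ge\deg(h)$; taking the maximum of the two bounds and factoring out $\deg(h)$ produces the claimed $\max\{\ee_{n-1}(h),0\}+\deg(h)$. The main obstacle I anticipate is the bookkeeping of index ranges in the first Laurent expansion---specifically, verifying that the discrepancy coefficient at position $d-n$ is isolated and that $f\in\Ann(s^{(n-1)})$ transfers to $(f\ul{s})_j=0$ on $[d-n+1,-1]$---and this is exactly what the remark on $f\cdot(\ul{s}-\ul{s^{(n-1)}})$ is for.
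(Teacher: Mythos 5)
Your proof is correct and is exactly the classical Massey-style argument (computing the coefficient of $x^{\deg(f)+\deg(g)-n}$ in $fg\cdot\ul{s}$ two ways) that the paper alludes to when it calls this ``the annihilator analogue of \cite[Lemma 1]{Ma69}''; the paper itself gives no proof, citing \cite[Lemma 5.2]{N95b}, whose argument is the same. Your reading of part (ii) --- that ``$h\in\Min(s)$'' must mean $h\in\Min(s^{(n-1)})$ with $\Delta_n(h)\neq 0$ --- is the intended one (the literal reading is false, e.g.\ for $s=(1,1,1)$ over $\Z$ with $h=x-1$), and your derivation of (ii) from (i) plus minimality of $h$ over $s^{(n-1)}$ is sound.
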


\section{A Recursive Minimal Polynomial Function}
 We will define a recursive minimal polynomial function $\mu:D^n \ra D[x]$.  But first we need the following function (which assumes that $\mu:D^{n-1} \ra D[x]$ has been defined).
We also set $\Delta_0=1$. 
\subsection{The Index Function}
\begin{definition}[Index Function]\label{indices}
Let $n\geq 1$ and $s\in D^n$. We set $\mu^{(0)}=1$ (so that $\Delta_{1}=\Delta_1(\mu^{(0)})=s_1$) and  $\ee_0=1$. 
Suppose that for $1\leq j\leq n-1$, $\mu^{(j)}\in \Min(s^{(j)})$ has discrepancy $\Delta_{j+1}$ and exponent $\ee_j$. We define  the index function
$$':\{0,\ldots,n\}\rightarrow \{-1,n-1\}$$ 
by $0'=-1$ and for $1\leq j\leq n-1$
$$j'=\left \{\begin{array}{ll}
 (j-1)' &\mbox{ if }\Delta_{j}=0\mbox{ or }(\Delta_{j}\neq 0 \mbox{ and }\ee_{j-1}\leq 0)\\
j-1 & \mbox{ if } \Delta_{j}\neq 0\mbox{ and }\ee_{j-1}>0.
\end{array}
\right.
$$
\end{definition}

\subsection{The Recursive Theorem}
We recall the recursive function  $\mu:D^n\ra  D[x]^\times$
of \cite{N10a}: for all $s\in D^n$, $\mu(s)\in\Min(s)$.
When $s$ is understood, we will write $\mu^{(j)}$ for $\mu(s^{(j)})$.

\begin{definition}[Basis of the Recursion]\label{initialvalues} Recall that $0'=-1$ and $\Delta_0=1$. Let $\varepsilon\in D$ be arbitrary but fixed and $s\in D^n$. We put $\mu^{(-1)}=\mu(s,-1)=\varepsilon$  and $\mu^{(0)}=\mu(s,0)=1$. 
\end{definition}
 Thus the exponent of $\mu^{(0)}$ is $\ee_0=1$ and $\Delta_{1}=s_1$. We also recall some notation from \cite{N10a}:

(i) $\mu'=\mu\circ\ '$ (where $\circ$ denotes composition)

(ii) $\LC'=\deg\circ \mu'$

(iii) $\Delta'=\Delta\circ (+1)\circ\ '\circ (-1)$, where $\pm 1$ have the obvious meanings. 

\noindent Thus $$\mu'^{(j)}=\mu^{(j')},\ \LC'_j=\LC_{j'}\mbox{ and }\Delta'_{j}=\Delta(\mu^{(k)},s^{(k+1)})=(\mu^{(k)}\cdot \ul{s})_{\LC_k-k-1}$$
 where $k=(j-1)'$.

\begin{theorem} (Cf. \cite{Ma69}) 
\label{bit} Let $n\geq 1$ and $s\in D^n$ and assume the initial values of Definition \ref{initialvalues}.  Define  $\mu^{(n)}$ recursively by$$\mu^{(n)}=\left\{\begin{array}{ll}
\mu^{(n-1)}& \mbox{ if } \Delta_n=0\\\\
\Delta'_{n}\cdot x^{\max\{\ee_{n-1},0\}}\ \mu^{(n-1)}-\Delta_{n}\cdot x^{\max\{-\ee_{n-1},0\}}\ \mu'^{(n-1)}
&\mbox{ otherwise.}\end{array}\right.
$$
If $\Delta_n=0$, clearly $\mu^{(n)}\in \Min(s)$, $\LC_n=\LC_{n-1}$ and $\ee_n=\ee_{n-1}+1$. If $\Delta_{n}\neq 0$ then
\begin{tabbing}
\hspace{1cm}\=(i) 
 $\deg(\mu^{(n)})=\max\{\ee_{n-1},0\}+\LC_{n-1}=n'+1-\LC'_{n}$\\ 
\>(ii) $\mu^{(n)}\in \Min(s)$\\
\>(iii) $\ee_n=-|\ee_{n-1}|+1$.
\end{tabbing}
\end{theorem}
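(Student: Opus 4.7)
The plan is to proceed by induction on $n$, starting from Definition \ref{initialvalues}. The easy half is the case $\Delta_n = 0$: the inductive polynomial $\mu^{(n-1)}$ lies in $\Ann(s^{(n-1)})^\times$, and the vanishing discrepancy $\Delta_n = \Delta_n(\mu^{(n-1)})$ says exactly that it also annihilates $s$. Monotonicity of $\LC$ (every $g \in \Ann(s)^\times$ is in $\Ann(s^{(n-1)})^\times$) gives $\LC_n = \LC_{n-1}$, and then $\ee_n = \ee_{n-1} + 1$ is a direct substitution into $\ee_n = n + 1 - 2\deg(\mu^{(n)})$.

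For the substantive case $\Delta_n \neq 0$, I would prove (i)--(iii) in order. For (i), the two summands of the recursion have degrees $\max\{\ee_{n-1}, 0\} + \LC_{n-1}$ and $\max\{-\ee_{n-1}, 0\} + \LC'_{n-1}$. Using $\max\{\ee_{n-1}, 0\} - \max\{-\ee_{n-1}, 0\} = \ee_{n-1} = n - 2\LC_{n-1}$, strict domination of the first reduces to the invariant $\LC_{n-1} + \LC'_{n-1} \leq n - 1$. This invariant I would carry through the induction by tracking the jump structure: at every jump $m+1$ (i.e., $\Delta_{m+1} \neq 0$ with $\ee_m > 0$) one has $\LC_{m+1} + \LC_m = m + 1$, and between jumps $\LC$ is constant, so with $m = (n-1)'$ one gets $\LC_{n-1} + \LC'_{n-1} = m + 1 \leq n - 1$. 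The alternative form $n' + 1 - \LC'_n$ is then pure bookkeeping from the two branches of the index function.

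Step (ii) splits into annihilator and minimality. For the annihilator claim, I would expand $(\mu^{(n)} \cdot \ul{s})_{d-j}$ for $d = \deg(\mu^{(n)})$ and $1 \leq j \leq n$ using $D[x]$-linearity of the discrepancy pairing; the vanishing at positions $j < n$ is inherited from the facts that $\mu^{(n-1)} \in \Ann(s^{(n-1)})$ and $\mu'^{(n-1)} \in \Ann(s^{((n-1)')})$, while the specific scalar coefficients $\Delta'_n$ and $\Delta_n$ were chosen precisely to cancel the single remaining term at $j = n$. For minimality I would invoke Lemma \ref{elegantproof}(ii) with $h = \mu^{(n-1)}$: since $\mu^{(n-1)} \in \Ann(s^{(n-1)})^\times$ has $\Delta_n(\mu^{(n-1)}) \neq 0$, every $g \in \Ann(s)^\times$ has degree at least $\max\{\ee_{n-1}, 0\} + \LC_{n-1}$, matching (i) exactly and forcing $\LC_n = \deg(\mu^{(n)})$.

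Part (iii) falls out of (i) by direct arithmetic: substituting $\LC_{n-1} = (n - \ee_{n-1})/2$ into $\ee_n = n + 1 - 2\deg(\mu^{(n)})$ gives $\ee_n = \ee_{n-1} + 1 - 2\max\{\ee_{n-1}, 0\}$, which collapses to $-|\ee_{n-1}| + 1$ in both signs. The main obstacle is the strengthened induction hypothesis needed to support (i): the degree of $\mu'^{(n-1)}$ must be exactly $\LC_{(n-1)'}$, with the nonvanishing of $\Delta'_n$ guaranteed by the index function's definition, and the constant-between-jumps behaviour of $\LC$ must be maintained so that strict domination of the first summand is secured. Without this parallel invariant, cancellation at the leading term could occur and both the degree formula in (i) and the matching lower bound in (ii) would fail.
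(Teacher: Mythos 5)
The paper itself gives no proof of Theorem \ref{bit} --- it is recalled from \cite{N10a} --- so there is nothing internal to compare against; judged on its own, your proposal is correct and is exactly the argument the surrounding machinery is designed for. The minimality step via Lemma \ref{elegantproof}(ii) applied to $h=\mu^{(n-1)}$ is the intended use of that lemma, and your degree invariant $\LC_{n-1}+\LC'_{n-1}=(n-1)'+1\leq n-1$ is the same relation ($\LC=(n-1)'+1-\LC'$) that the paper invokes in proving the companion Theorem \ref{bimrt} to control the index ranges there.
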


\begin{corollary}\label{LL'} (Cf. \cite{Ma69}) For any $s\in D^n$, $\LC_n=n'+1-\LC'_{n}$.
\end{corollary}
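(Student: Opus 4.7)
The plan is to prove the identity by induction on $n$, with the bulk of the work already done by Theorem \ref{bit}. In the case $\Delta_n \neq 0$, there is nothing to prove: part (ii) of the theorem gives $\mu^{(n)} \in \Min(s)$, so $\LC_n = \deg(\mu^{(n)})$, and part (i) asserts precisely $\deg(\mu^{(n)}) = n' + 1 - \LC'_n$. So this case reduces to quoting Theorem \ref{bit}(i) after rewriting $\LC_n$ as $\deg(\mu^{(n)})$.

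The case $\Delta_n = 0$ is where one actually needs the inductive hypothesis. By the first branch of Theorem \ref{bit}, $\LC_n = \LC_{n-1}$. By the first clause of Definition \ref{indices}, $\Delta_n = 0$ forces $n' = (n-1)'$, and therefore
\[
\LC'_n = \LC_{n'} = \LC_{(n-1)'} = \LC'_{n-1}.
\]
Applying the inductive hypothesis to $s^{(n-1)}$ gives $\LC_{n-1} = (n-1)' + 1 - \LC'_{n-1}$, and substituting the two identities above on the right yields $\LC_n = n' + 1 - \LC'_n$, as required.

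For the base case $n = 1$, I would check both sub-cases directly against Definitions \ref{indices} and \ref{initialvalues}. If $s_1 \neq 0$, then $\Delta_1 = s_1 \neq 0$ and $\ee_0 = 1 > 0$, so $1' = 0$, $\LC_1 = 1$ (from Theorem \ref{bit}(i) at $n=1$), and $\LC'_1 = \LC_0 = 0$, making both sides of the identity equal to $1$. If $s_1 = 0$, then $\LC_1 = 0$ and $1' = 0' = -1$, and $\LC'_1 = \LC_{-1} = \deg(\mu^{(-1)}) = \deg(\varepsilon) = 0$ (taking $\varepsilon \in D^\times$ as per the setup), so both sides equal $0$.

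The only potential subtlety is the convention at $n = 0$ (or $n = -1$) under which $\deg(\varepsilon) = 0$ is the value that makes the identity hold at the boundary; no genuine obstacle arises once this is fixed, because the recursive structure of $\mu^{(n)}$ matches the recursive structure of $n'$ in lockstep via Theorem \ref{bit} and Definition \ref{indices}. Thus the proof is essentially a one-line reduction plus a case split on whether $\Delta_n$ vanishes.
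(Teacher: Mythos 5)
Your proof is correct and follows exactly the route the paper intends: the corollary is stated without proof as an immediate consequence of Theorem \ref{bit}, with the $\Delta_n\neq 0$ case being part (i) verbatim and the $\Delta_n=0$ case handled by the inductive unwinding via $\LC_n=\LC_{n-1}$ and $n'=(n-1)'$, just as you do. Your remark about the boundary convention $\LC_{-1}=\deg(\varepsilon)=0$ is a fair observation about the initial values, and does not affect correctness.
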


\begin{corollary}[Iterative Form of $\mu$]\label{noindices}
Let $n\geq 1$, $s\in D^n$ and $\varepsilon\in D$.  
Assume the initial values of Definition \ref{initialvalues}. For $1\leq j\leq n$, let
$$\mu^{(j)}=\left\{\begin{array}{ll}
\mu^{(j-1)} &\mbox{ if } \Delta_{j}=0\\\\
\Delta'_{j} \cdot  x^{\max\{\ee_{j-1},0\}}  \mu^{(j-1)}- \Delta_{j} \cdot    x^{\max\{-\ee_{j-1},0\}}  \mu^{'({j-1})}&\mbox{ otherwise.}
\end{array}\right.
$$
Then $\mu^{(j)}\in\Min(s^{(j)})$. Further, if $\Delta_j=0$, then $\mu^{'(j)}=\mu^{'(j-1)}$,  $\Delta'_{j+1}=\Delta'_{j}$ and $\ee_j=\ee_{j-1}+1$.
If $\Delta_j\neq 0$ then 
\begin{tabbing}\hspace{1cm}\= 
(a) if $\ee_{j-1}\leq 0$ then $\mu^{'(j)}=\mu^{'(j-1)}$ and $\Delta'_{j+1} =\Delta'_{j}$ \\
\>(b) but if $\ee_{j-1}>0$ then $\mu^{'(j)}=\mu^{(j-1)}$ and $\Delta'_{j+1} =\Delta_{j}$ \\
\>(c) $\ee_j=-|\ee_{j-1}|+1$.
\end{tabbing}
\end{corollary}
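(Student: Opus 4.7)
The plan is to prove this by induction on $j$, treating it essentially as a rewrite of Theorem \ref{bit} in which the index function $'$ has been unwound into direct update rules for $\mu'^{(j)}$ and $\Delta'_{j+1}$. The membership $\mu^{(j)}\in\Min(s^{(j)})$ and the exponent identity $\ee_j = -|\ee_{j-1}|+1$ (when $\Delta_j\neq 0$) will follow directly from applying Theorem \ref{bit} with $n$ replaced by $j$ at each step; the content that needs checking is the bookkeeping for the primed quantities.

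First I would fix the base case: from Definition \ref{initialvalues} we have $\mu^{(-1)}=\varepsilon$, $\mu^{(0)}=1$, $0'=-1$, $\ee_0 = 1$, and $\Delta_1 = s_1$. For $j=1$, $\ee_0 = 1 > 0$ and the update rules (a),(b),(c) reduce to straightforward checks against the recursive definition of $\mu^{(1)}$. For the inductive step, assume $\mu^{(i)}\in\Min(s^{(i)})$ for $i\le j-1$ and the stated formulas hold. Then Theorem \ref{bit}, applied to the sequence $s^{(j)}$, gives $\mu^{(j)}\in\Min(s^{(j)})$ and, when $\Delta_j\neq 0$, $\ee_j = -|\ee_{j-1}|+1$. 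This settles the parts that depend only on the recursion.

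The real work is verifying (a) and (b), i.e., how $\mu'^{(j)}$ and $\Delta'_{j+1}$ update. Recall from the notation above Theorem \ref{bit} that $\mu'^{(j)}=\mu^{(j')}$ and $\Delta'_{j+1}=\Delta(\mu^{(k)},s^{(k+1)})$ with $k=j'$. I would now split into the three cases appearing in Definition \ref{indices}:
\begin{itemize}
\item If $\Delta_j = 0$, then $j' = (j-1)'$, so $\mu'^{(j)} = \mu^{((j-1)')} = \mu'^{(j-1)}$ and the index $k$ used for $\Delta'_{j+1}$ equals the index used for $\Delta'_j$; hence $\Delta'_{j+1} = \Delta'_j$.
\item If $\Delta_j \neq 0$ and $\ee_{j-1}\le 0$, again $j' = (j-1)'$ by the definition, giving $\mu'^{(j)} = \mu'^{(j-1)}$ and $\Delta'_{j+1} = \Delta'_j$ by the same argument.
\item If $\Delta_j\neq 0$ and $\ee_{j-1}>0$, the definition gives $j' = j-1$, so $\mu'^{(j)} = \mu^{(j-1)}$ and $\Delta'_{j+1} = \Delta(\mu^{(j-1)}, s^{(j)}) = \Delta_j$ using the inductive fact $\mu^{(j-1)}\in\Min(s^{(j-1)})$.
\end{itemize}
These three cases correspond exactly to the ``$\Delta_j=0$'' branch and to (a),(b) respectively, while (c) is just Theorem \ref{bit}(iii).

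The main obstacle, such as it is, is keeping straight the indexing convention $\Delta'_{j+1} = \Delta(\mu^{(k)}, s^{(k+1)})$ with $k = j'$ (rather than $(j-1)'$), so that the equality $\Delta'_{j+1} = \Delta'_j$ in the first two cases is not confused with an off-by-one. Everything else is bookkeeping, and the substantive analytic content has already been absorbed into Theorem \ref{bit}.
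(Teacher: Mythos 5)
Your proof is correct and is essentially the argument the paper intends: the paper states this corollary without proof as an immediate consequence of Theorem \ref{bit} together with Definition \ref{indices}, and your induction simply makes explicit the unwinding of the index function $'$ into the update rules for $\mu'^{(j)}$ and $\Delta'_{j+1}$ (with the off-by-one convention $\Delta'_{j+1}=\Delta(\mu^{(k)},s^{(k+1)})$, $k=j'$, handled correctly). Nothing is missing.
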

In other words, when  $\Delta_{j}\neq 0$,
$$\mu^{(j)}= \left\{\begin{array}{lll}

\Delta'_j \cdot  \mu^{(j-1)}- \Delta_{j} \cdot    x^{-e}  \mu'^{(j-1)}	& \mbox{if } e=\ee_{j-1}\leq 0\\\\
\Delta'_j \cdot  x^{+e}\mu^{(j-1)}- \Delta_{j}  \cdot  \mu'^{(j-1)} &\mbox{otherwise.}
\end{array}
\right.
$$
See \cite{N10a} for the derivation of the following algorithm from Corollary \ref{noindices}.
 \begin{algorithm}[Iterative minimal polynomial]   \label{rewrite}\ 
\begin{tabbing}

\noindent Input: \ \ \=$n\geq 1$, $\varepsilon\in D$ and $s=(s_1,\ldots,s_{n})\in D^n$.\\

\noindent Output: \>$\mu\in\Min(s)$.\\\\

\{$e := 1$;\ $\mu':=\varepsilon$;\ $\Delta':=1$;
$\mu  :=  1;\ $\\
{\tt FOR} \= $j = 1$ {\tt TO }$n$\\
    \> \{$\Delta    :=  \sum_{k=0}^{\frac{j-e}{2}} \mu_k \  s_{k+\frac{j+e}{2}};$ \\
   \> {\tt IF } $\Delta  \neq  0$ {\tt THEN }\{{\tt IF } $e\leq 0$  \=   						{\tt THEN } $\mu   :=  \Delta'\cdot \mu-\Delta\cdot  x^{-e} \mu'$;\\\\
  \>                 \>    {\tt ELSE} \{\=$t :=  \mu$;\\
  \>                 \>           \> $\mu  :=   \Delta'\cdot x^e\mu-\Delta\cdot  \mu'$;\\
  \>                 \>            \> $\mu':= t$; \ $\Delta':= \Delta$;\\ 
    \>                 \>            \>$e := -e$\}\}\\
  \> $e  := e+1$\}\\
{\tt RETURN}$(\mu)$\}
\end{tabbing} 
\end{algorithm}

\section{Minimal Realisations for $D^n$}

We recall the definition of a minimal realisation from \cite{N95b}; when $\mu\in\Min(s)$ is monic, this amounts to rational approximation of $\ul{s}$ by a rational function $\mu_2/\mu$ for a certain $\mu_2\in D[x]$ with $\deg(\mu_2)<\deg(\mu)$ and $\deg(\mu)$ minimal.

Recall that  $\vv:D[x^{-1},x]\ra \Z$ is $\vv(g)=\max\{i: g_i\neq 0\}$.  Let $f\in D[x]^\times$ and $d=\deg(f)$.
Then for $s\in D^n$, $\vv(\ul{s})\leq -1$ and 
$$f\cdot \ul{s}=\sum_{j=-n}^{\vv(\ul{s})+d} (f\cdot \ul{s})_j\ x^j=\left(\sum_{j=-n}^{d-n-1} +\sum_{j=d-n}^{-1} +\sum_{j=0}^{\vv(\ul{s})+d}\right)(f\cdot \ul{s})_j\ x^j$$ 
\bd For fixed $s\in D^n$ and $f\in D[x]$, $f_2=f_2(s)\in D[x]$ is the $D[x]$-summand of $f\cdot\ul{s}$:
$$f_2(x)=
\sum_{j=0}^{\vv(\ul{s})+\deg(f)} (f\cdot \ul{s})_j\ x^j$$ 
and $\ol{f}=(f,f_2)\in D[x]^2$. 
\ed 
Thus  $f\cdot\ul{s}=(f\cdot\ul{s}-f_2)+f_2$,  and  $\deg(f_2)=\vv(\ul{s})+\deg(f)< \deg(f)$. If $f$ is monic, $\vv(f)\vv(f_2/f)=\vv(f_2)$ and $\vv(f_2/f)=\vv(f_2)-\vv(f)<0$ i.e. $\frac{f_2}{f}\in x^{-1}D[[x^{-1}]]$.  
The next result is immediate, by construction.
\begin{proposition} If $s\in D^n$, $f\in \Ann(s)$ if and only if $f=0$ or $\vv(f\cdot \ul{s}-f_2)<\deg(f)-n$.
\end{proposition}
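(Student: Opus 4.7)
The plan is to unwind both sides of the asserted equivalence into the same statement about coefficients of the Laurent polynomial $f\cdot\ul{s}$, so the proof reduces to identifying the three summation ranges in the decomposition displayed just above the proposition.

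First I would dispose of $f=0$, where both sides hold by definition. So assume $f\in D[x]^\times$ and set $d=\deg(f)$. I would rewrite the decomposition given in the paper,
\[
f\cdot\ul{s}\;=\;\underbrace{\sum_{j=-n}^{d-n-1}(f\cdot\ul{s})_j\,x^j}_{(\mathrm{A})}+\underbrace{\sum_{j=d-n}^{-1}(f\cdot\ul{s})_j\,x^j}_{(\mathrm{B})}+\underbrace{\sum_{j=0}^{\vv(\ul{s})+d}(f\cdot\ul{s})_j\,x^j}_{(\mathrm{C})},
\]
and observe that by the definition of $f_2$, summand (C) is exactly $f_2$. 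Hence $f\cdot\ul{s}-f_2$ equals (A)+(B), a Laurent polynomial all of whose terms have strictly negative degree.

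Next I would interpret $\vv(f\cdot\ul{s}-f_2)<d-n$ directly: this says that the largest $j\le -1$ for which $(f\cdot\ul{s})_j\neq 0$ satisfies $j<d-n$, i.e.\ summand (B) vanishes identically, i.e.\ $(f\cdot\ul{s})_j=0$ for all $d-n\le j\le -1$. Reindexing $j=d-k$, this is $(f\cdot\ul{s})_{d-k}=0$ for $d-n\le d-k\le -1$, which is precisely the defining condition of $f\in\Ann(s)$ (Definition \ref{anndefn}).

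I expect no real obstacle; the only mild subtlety is the boundary case $d\ge n$, where the index range in (B) is empty. In that case summand (B) is automatically zero, so $\vv(f\cdot\ul{s}-f_2)\le -1<0\le d-n$, consistent with the fact (remarked after Definition \ref{anndefn}) that any polynomial of degree at least $n$ vacuously annihilates $s$. A one-line check handles this, after which the bi-implication follows from the coefficient-by-coefficient identification described above.
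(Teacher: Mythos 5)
Your proof is correct and is exactly the argument the paper intends: the paper states the result is ``immediate, by construction'' from the three-summand decomposition of $f\cdot\ul{s}$ displayed just above it, and you have simply written out that identification (summand (C) is $f_2$, the annihilator condition is the vanishing of summand (B), and $\vv(f\cdot\ul{s}-f_2)<\deg(f)-n$ says the same thing). The boundary check for $\deg(f)\ge n$ is a welcome extra detail and is consistent with the paper's remark that such polynomials annihilate $s$ vacuously.
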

Thus if $f\in\Ann(s)$ is monic, $\vv(\ul{s}-f_2/f)=\vv(f\cdot\ul{s}-f_2)-\vv(f)
\leq \vv(f\cdot\ul{s}-f_2)\leq \deg(f)-n-1$ and $$\ul{s}-\frac{f_2}{f}\equiv 0 \bmod (x^{\deg(f)-n-1}).$$ 

\begin{definition}[Minimal Realisation](\cite[Sections 2,3]{N95b}) We will say that $\ol{f}\in D[x]^2$ {\em realises} $s$ if either (i) $f=0$ or (ii) $f\neq 0$, $\deg(f_2)<\deg(f)$ and $\vv(f\cdot \ul{s}-f_2)<  \deg(f)-n$. Further, $\ol{f}$ is a {\em minimal} {\em realisation} of $s$ if 
$\ol{f}$ realises $s$ and $f\in \Min(s)$, written $\ol{f}\in \mathrm{MR}(s)$.
\end{definition}
Then if $f\neq 0$, $\ol{f}\in \mathrm{MR}(s)$ if and only if $f\in\Min(s)$.

For $\mu^{(n)}$ as in Theorem \ref{bit}, we write $\mu_2^{(n)}$ for the polynomial part of $\mu^{(n)}\cdot\ul{s}$
$$\mu_2^{(n)}(x)=\sum_{k=0}^{\vv(\ul{s})+d} (\mu^{(j)}\cdot \ul{s})_k\ x^k.$$
By definition, if $\mu_2^{(j)}\neq 0$, then $\deg(\mu_2^{(j)})=\vv(\ul{s^{(j)}})+\deg(\mu^{(j)})<\deg(\mu^{(j)})$
 and it is easy to check that
 $\deg(\mu_2^{(j-1)})\leq \deg(\mu_2^{(j)})$, where $1\leq j\leq n$. If $\Delta_n=0$, then we obviously have $\mu_2^{(n)}=\mu_2^{(n-1)}$.
 
 It is convenient to separate out the following special case before extending Theorem \ref{bit} to minimal realisations.
 \bp \label{mu2} Let $n\geq 1$, $s\in D^n$, $\LC_{n-1}=0$ and $\Delta_n\neq 0$. If 
 $\mu_2^{(-1)}=-1$ and $\mu_2^{(0)}=0$, then 
  \begin{eqnarray}
\ol{\mu}^{(n)}=
\Delta'_n\cdot x^{\max\{\ee_{n-1},0\}}\ \ol{\mu}^{(n-1)}-\Delta_{n}\cdot x^{\max\{-\ee_{n-1},0\}}\ \ol{\mu}'.
 \end{eqnarray}
 \ep
 \bpr We have to show that the second components of both sides agree. It is clear that   $s$ has precisely $n-1$ leading zeroes, which gives the following data:  $\Delta_j=0$ for $1\leq j\leq n-1$, $\mu^{(j)}=1$ for $0\leq j\leq n-1$ and $\mu_2^{(n-1)}=0$, $\ee_{n-1}=n$, $(n-1)'=-1$ and $\Delta_n=s_n$, $\Delta'_n=1$. Substituting these gives\, 
 $\mu_2^{(n)}=s_n=\Delta_n\cdot x^n\ \mu_2^{(n-1)}-\Delta_n\cdot\mu'^{(n-1)}_2$, as claimed.
 \epr
An iterative version of the following result (without the value of $\deg(\mu_2^{(n)})$) was proved in \cite{N95b}, \cite{N99b}. For completeness, we prove the recursive form.
\begin{theorem} [Recursive MR] \label{bimrt}(Cf. \cite{N95b}) 
 Let $n\geq 1$ and $s\in D^n$. Assume the initial values of Definition \ref{initialvalues} and set $\mu_2^{(-1)}=-1$ and $\mu_2^{(0)}=0$.  Define  $\ol{\mu}^{(n)}$ recursively by$$\ol{\mu}^{(n)}=\left\{\begin{array}{ll}
\ol{\mu}^{(n-1)}& \mbox{ if } \Delta_n=0\\\\
\Delta'_{n}\cdot x^{\max\{\ee_{n-1},0\}}\ \ol{\mu}^{(n-1)}-\Delta_{n}\cdot x^{\max\{-\ee_{n-1},0\}}\ \ol{\mu}'^{(n-1)}
&\mbox{ otherwise.}\end{array}\right.
$$
If $\Delta_n=0$, clearly $\ol{\mu}^{(n)}\in \MR(s)$, $\LC_n=\LC_{n-1}$ and $\ee_n=\ee_{n-1}+1$. If $\Delta_{n}\neq 0$ then
\begin{tabbing}
\hspace{1cm}\=(i) 
 $\deg(\mu^{(n)})=\max\{\ee_{n-1},0\}+\LC_{n-1}=n'+1-\LC'_{n}$\\ 
 \> \hspace{0.6cm}and if $\mu_2^{(n)}\neq 0$ then
 $\deg(\mu_2^{(n)})=\max\{\ee_{n-1},0\}+\deg(\mu_2^{(n-1)})$\\ 

\> (ii) $\ol{\mu}^{(n)}\in \MR(s)$\\

\>(iii) $\ee_n=-|\ee_{n-1}|+1$.\\

\end{tabbing}
\end{theorem}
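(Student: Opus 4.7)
The plan is induction on $n$. Theorem \ref{bit} already supplies the first-component claims (that $\mu^{(n)}\in\Min(s)$, the $\LC_n$ formula, and $\ee_n=-|\ee_{n-1}|+1$ when $\Delta_n\neq 0$), so the remaining work concerns only the second component: verifying that the recursively-defined $\mu_2^{(n)}$ coincides with the polynomial part of $\mu^{(n)}\cdot\ul{s}$, the degree identity $\deg(\mu_2^{(n)})=\max\{\ee_{n-1},0\}+\deg(\mu_2^{(n-1)})$ when nonzero, and the realisation valuation bound $\vv(\mu^{(n)}\cdot\ul{s}-\mu_2^{(n)})<\LC_n-n$. The induction splits naturally into three cases: $\Delta_n=0$; $\Delta_n\neq 0$ with $\LC_{n-1}=0$ (already handled by Proposition \ref{mu2}); and $\Delta_n\neq 0$ with $\LC_{n-1}\geq 1$.

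For $\Delta_n=0$, the recursion gives $\ol{\mu}^{(n)}=\ol{\mu}^{(n-1)}$. Writing $\ul{s}=\ul{s^{(n-1)}}+s_nx^{-n}$, the tail $s_nx^{-n}\mu^{(n-1)}$ has top degree $\LC_{n-1}-n<0$, so the polynomial part of $\mu^{(n-1)}\cdot\ul{s}$ is still $\mu_2^{(n-1)}$. The annihilator property of $\mu^{(n-1)}$ on $s^{(n-1)}$ kills the coefficients of $\mu^{(n)}\cdot\ul{s}-\mu_2^{(n)}$ in the range $\LC_{n-1}-(n-1)\leq j\leq -1$, and $\Delta_n=0$ kills the coefficient at $j=\LC_{n-1}-n$, yielding $\vv(\mu^{(n)}\cdot\ul{s}-\mu_2^{(n)})\leq \LC_{n-1}-n-1=\LC_n-n-1$, so $\ol{\mu}^{(n)}\in\MR(s)$.

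In the main case $\Delta_n\neq 0$, $\LC_{n-1}\geq 1$, set $a=\max\{\ee_{n-1},0\}$ and $b=\max\{-\ee_{n-1},0\}$. Multiplying the recursion by $\ul{s}$ gives
\[
\mu^{(n)}\cdot\ul{s}=\Delta'_n x^a(\mu^{(n-1)}\cdot\ul{s})-\Delta_n x^b(\mu'^{(n-1)}\cdot\ul{s}),
\]
and extracting the polynomial part reduces to the vanishings $(\mu^{(n-1)}\cdot\ul{s})_j=0$ for $-a\leq j\leq -1$ and $(\mu'^{(n-1)}\cdot\ul{s})_j=0$ for $-b\leq j\leq -1$. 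The first follows from $\mu^{(n-1)}\in\Ann(s^{(n-1)})$ combined with the degree bound on the correction $s_nx^{-n}\mu^{(n-1)}$, together with the inequality $-a\geq \LC_{n-1}-n+1$, which reduces via $\ee_{n-1}=n-2\LC_{n-1}$ to the hypothesis $\LC_{n-1}\geq 1$. The second is analogous: writing $\mu'^{(n-1)}=\mu^{(k)}$ with $k=(n-1)'$, the realisation bound of $\ol{\mu}^{(k)}$ on $s^{(k)}$ combined with degree control on $s_ix^{-i}\mu^{(k)}$ for $k+1\leq i\leq n$ gives $(\mu^{(k)}\cdot\ul{s})_j=0$ for $\LC_k-k\leq j\leq -1$, and the required $-b\geq \LC_k-k$ simplifies via Corollary \ref{LL'} to $\LC_{n-1}+\LC'_{n-1}\leq n$, which always holds. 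Hence the polynomial part of $\mu^{(n)}\cdot\ul{s}$ is $\Delta'_n x^a\mu_2^{(n-1)}-\Delta_n x^b\mu'^{(n-1)}_2$, matching the recursive definition.

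The realisation bound then follows by rewriting $\mu^{(n)}\cdot\ul{s}-\mu_2^{(n)}=\Delta'_n x^a(\mu^{(n-1)}\cdot\ul{s}-\mu_2^{(n-1)})-\Delta_n x^b(\mu'^{(n-1)}\cdot\ul{s}-\mu'^{(n-1)}_2)$: the two leading-exponent contributions at $\LC_n-n$ cancel precisely because the scalars $\Delta'_n,\Delta_n$ are the same ones that make $\mu^{(n)}\in\Ann(s)$ in Theorem \ref{bit}. The degree identity is obtained by comparing the two summands of $\mu_2^{(n)}$ using the inductive top-exponent formula $\deg(\mu_2^{(n-1)})=\LC_{n-1}-j$ with $j=\min\{i:s_i\neq 0\}$: both summands have top exponent at most $\LC_n-j$, and when they coincide, the combined leading coefficient is the nonzero $\mu^{(n)}_{\LC_n}s_j$. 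The principal obstacle is the second vanishing claim for $\mu'^{(n-1)}$: since the prejump polynomial $\mu^{(k)}$ need not annihilate the intermediate sequences $s^{(k+1)},\ldots,s^{(n-1)}$, the argument must use its realisation property rather than direct annihilation, and the reduction via Corollary \ref{LL'} is the most delicate piece of index bookkeeping.
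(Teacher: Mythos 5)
Your proposal is correct and follows essentially the same route as the paper's proof: the first-component claims are delegated to Theorem \ref{bit}, the degenerate case $\LC_{n-1}=0$ to Proposition \ref{mu2}, and the second-component recursion is verified by extracting polynomial parts and killing the boundary coefficients via exactly the index inequalities (reducing to $\LC_{n-1}\geq 1$ and $\LC_{n-1}+\LC'_{n-1}\leq n$ via Theorem \ref{bit}(i)/Corollary \ref{LL'}) that the paper uses to show its boundary sums $A$ and $B$ vanish, with the degree formula likewise obtained from the leading coefficient $\mu^{(n)}_{\LC_n}s_j\neq 0$ in a domain. Your separate verification of the realisation valuation bound is harmless but redundant: once $\mu_2^{(n)}$ is identified as the polynomial part of $\mu^{(n)}\cdot\ul{s}$, membership in $\MR(s)$ follows from $\mu^{(n)}\in\Min(s)$ alone, by the proposition following the definition of $f_2$.
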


\bpr Put $v=\vv(\ul{s})$, $\mu=\mu^{(n-1)}$, $\LC=\LC_{n-1}$, $e=\ee_{n-1}$,  $\mu'=\mu'^{(n-1)}$, $\LC'=\LC'_{n-1}$, $\Delta'=\Delta'_n$ and $\Delta =\Delta_n$. From Theorem \ref{bit}, we have 
$${\mu}^{(n)}=
\Delta'\cdot x^{\max\{e,0\}}\ {\mu}-\Delta\cdot x^{\max\{-e,0\}}\ {\mu'}.$$
First we show that
${\mu_2}^{(n)}=
\Delta'\cdot x^{\max\{e,0\}}\ {\mu_2}-\Delta\cdot x^{\max\{-e,0\}}\ {\mu'_2}.$
For $e\leq 0$
\begin{eqnarray*}
\mu_2^{(n)}&=&\sum_{j=0}^{v+\LC_n}(\mu^{(n)}\cdot\ul{s})_j\ x^j
=\Delta'\cdot\mu_2-\Delta\cdot\sum_{j=0}^{v+\LC}(x^{-e}\mu'\cdot\ul{s})_j\ x^j\\
\end{eqnarray*}
since $\LC_n=\LC$. The second summation is 
$\sum_{k=e}^{v+\LC+e}(\mu'\cdot\ul{s})_k\ x^{k-e}$
\ i.e. 
$$x^{-e}\left(\sum_e^{-1}+\sum_0^{v+\LC'}+\sum_{v+\LC'+1}^{v+\LC+e}\right)(\mu'\cdot\ul{s})_k\ x^k=A+x^{-e}\mu_2'+B$$
say. Now for $n\geq 1$, $(n-1)'\leq n-2\leq n-2+\LC'$.
From Theorem \ref{bit}(i), $\LC=(n-1)'+1-\LC'$ which gives $\LC'-(n-1)'\leq e$ since $\LC\leq n-1$ and so $A=0$. Likewise, $\LC+e=n-\LC=n-1-(n-1)'+\LC'\geq \LC'+1$ and so
 $B=0$. Thus $\mu_2^{(n)}=\Delta'\cdot \mu_2-\Delta\cdot x^{-e}\ \mu'_2$ if  $e\leq 0$.
 
 For $e<0$, we have
\begin{eqnarray*}
\mu_2^{(n)}&=&\sum_{j=0}^{v+n-\LC}\left(\Delta'\cdot x^e\mu\cdot\ul{s}
-\Delta\cdot\mu'\cdot\ul{s}\right)_j\ x^j\\
&=&\Delta'\cdot\sum_{j=0}^{v+n-\LC}(x^e\mu\cdot\ul{s})_j\ x^j
-\Delta\cdot\sum_{j=0}^{v+n-\LC}(\mu'\cdot\ul{s})_j\ x^j\\
&=&\Delta'\cdot x^e\left(\sum_{k=-e}^{-1}+\sum_{k=0}^{v+\LC}\right)(\mu\cdot\ul{s})_k\ x^k
-\Delta_n\cdot\sum_{j=0}^{v+\LC'}(\mu'\cdot\ul{s})_j\ x^j\\
&=&\Delta'\cdot x^e\sum_{k=-e}^{-1}(\mu\cdot\ul{s})_k\ x^k+\Delta'\cdot x^e\mu_2
-\Delta\cdot\mu_2'
\end{eqnarray*}
since  $v+n-\LC-e=v+\LC$ and $v+n-\LC>v+\LC\geq v+\LC'$. 
From Proposition \ref{mu2}, we can assume that $\LC\geq 1$. This implies that $\LC-(n-1)\leq -e$  and so the first summand vanishes, giving
 $\mu_2^{(n)}=\Delta'\cdot x^e\mu_2-\Delta\cdot\mu_2'
$ if  $e>0$.\\

Next we prove that if $\mu_2^{(n)}\neq 0$ then $\deg(\mu_2^{(n)})=\max\{e,0\}+\deg(\mu_2)$.  
Since  $\mu_2\neq 0$, $s^{(n-1)}\neq (0,\ldots,0)$ and so $v'=\vv(s_{n-1}x^{1-n}+\cdots + s_1x^{-1})=v$. Thus $\deg(\mu_2^{(n)})=v+\LC_n=v'+\max\{\LC,n-\LC\}
=v'+\LC=\deg(\mu_2)$ if $e\leq 0$. If $e>0$, then $\LC_n=n-\LC=\LC+e$ and $\deg(\mu_2^{(n)})=v+(\LC+e)=(v'+\LC)+e =\deg(\mu_2)+e$.
\epr

\begin{corollary}[Iterative Form of $\ol{\mu}$]\label{olnoindices}
Let $n\geq 1$, $s\in D^n$ and $\varepsilon\in D$.  
Assume the initial values of Definition \ref{initialvalues}. For $1\leq j\leq n$, let
$$\ol{\mu}^{(j)}=\left\{\begin{array}{ll}
\ol{\mu}^{(j-1)} &\mbox{ if } \Delta_{j}=0\\\\
\Delta'_{j} \cdot  x^{\max\{\ee_{j-1},0\}}\  \ol{\mu}^{(j-1)}- \Delta_{j} \cdot    x^{\max\{-\ee_{j-1},0\}}\  \ol{\mu}^{'({j-1})}&\mbox{ otherwise.}
\end{array}\right.
$$
Then $\ol{\mu}^{(j)}\in\MR(s^{(j)})$. Further, if $\Delta_j=0$, then $\ol{\mu}'^{(j)}=\ol{\mu}'^{(j-1)}$,  $\Delta'_{j+1}=\Delta'_{j}$ and $\ee_j=\ee_{j-1}+1$.
If $\Delta_j\neq 0$ then 
\begin{tabbing}\hspace{1cm}\= 
(a) if $\ee_{j-1}\leq 0$ then $\ol{\mu}'^{(j)}=\ol{\mu}'^{(j-1)}$ and $\Delta'_{j+1} =\Delta'_{j}$ \\
\>(b) but if $\ee_{j-1}>0$ then $\ol{\mu}'^{(j)}=\ol{\mu}^{(j-1)}$ and $\Delta'_{j+1} =\Delta_{j}$ \\
\>(c) $\ee_j=-|\ee_{j-1}|+1$.
\end{tabbing}
\end{corollary}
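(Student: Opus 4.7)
The plan is to proceed by induction on $j$, paralleling the proof of Corollary \ref{noindices} but now at the pair level. The first-component assertions — the recursion for $\mu^{(j)}$, the update rules for $\mu'^{(j)}$ and $\Delta'_{j+1}$, and the formula $\ee_j = -|\ee_{j-1}|+1$ — are already Corollary \ref{noindices}. So the content to be added is that the same recursion governs the second component $\mu_2^{(j)}$, that $\ol{\mu}^{(j)} \in \MR(s^{(j)})$, and that the parallel update rules hold for $\ol{\mu}'^{(j)}$.

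First I would fix $j \geq 1$ and apply Theorem \ref{bimrt} with $s^{(j)}$ in place of $s$; this immediately yields the case split $\Delta_j = 0$ versus $\Delta_j \neq 0$ together with the recursive formula for $\ol{\mu}^{(j)}$ in terms of $\ol{\mu}^{(j-1)}$ and $\ol{\mu}'^{(j-1)}$, and gives $\ol{\mu}^{(j)} \in \MR(s^{(j)})$ as well as $\ee_j = -|\ee_{j-1)}|+1$ whenever $\Delta_j \neq 0$. The case $\Delta_j = 0$ gives $\ee_j = \ee_{j-1}+1$ from the constancy of $\LC$. To apply the theorem inductively the initial values must match, and they do: Definition \ref{initialvalues} together with the choices $\mu_2^{(-1)} = -1$, $\mu_2^{(0)} = 0$ are exactly those required by Theorem \ref{bimrt}.

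Next, for the update rules for $\ol{\mu}'^{(j)}$ and $\Delta'_{j+1}$, I would unwind Definition \ref{indices}: by definition of the index function, $j' = (j-1)'$ in the cases $\Delta_j = 0$ or ($\Delta_j \neq 0$ and $\ee_{j-1}\leq 0$), while $j' = j-1$ in the remaining case $\Delta_j \neq 0$ and $\ee_{j-1} > 0$. Since $'$ is extended componentwise so that $\ol{\mu}'^{(j)} = \ol{\mu}^{(j')}$ and $\Delta'_{j+1} = \Delta_{j'+1}$, substituting the two values of $j'$ yields cases (a), (b), and the $\Delta_j = 0$ statement directly.

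The main obstacle, such as it is, lies in making the second-component book-keeping consistent across iterations: when the $j$-th iteration sets $\ol{\mu}'^{(j)} = \ol{\mu}^{(j-1)}$ (case (b)), the $(j+1)$-st call to Theorem \ref{bimrt} must see precisely this pair as its ``previous $\ol{\mu}'$''. This follows because the index function was defined so that $j' = j-1$ exactly when the discrepancy was nonzero and the exponent was positive, which is also the condition under which Theorem \ref{bimrt} uses $\ol{\mu}$ rather than $\ol{\mu}'$ for the jump; no other verification is needed beyond the compatibility already built into the recursive statement.
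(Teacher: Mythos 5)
Your proof is correct and matches the paper's (omitted) argument: the corollary is stated in the paper without proof, as an immediate consequence of Theorem \ref{bimrt} applied at each step together with the definition of the index function (Definition \ref{indices}), which is precisely the derivation you give. The only point that deserves to be made explicit --- and you do address it --- is the compatibility of prefixes, i.e.\ that the discrepancies, exponents and minimal realisations computed relative to $s^{(j)}$ agree with those computed relative to $s$, so that Theorem \ref{bimrt} can legitimately be invoked with $s^{(j)}$ in place of $s$ at the $j$-th step and its ``previous'' data is exactly what the $(j-1)$-st step produced.
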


For future use, we recall:
\bl \label{f2} Let  $s\in D^n$.

(i)(\cite[Proposition 2.3]{N95b}, \cite[Lemma 6.1]{N99b}) If  $g\in D[x]$, $h\in\Ann(s)$ and 
$\deg(g)+\deg(h)\leq n$ then $(gh)_2=gh_2$.

(ii) If $\deg(f)\leq n-\LC$ and $\deg(f')\leq\LC-\LC'$ then $$(f\cdot\mu+f'\cdot\mu')_2=f\cdot\mu_2+f'\cdot\ \sum_{j=0}^{\LC'-1}\ (\mu'\cdot \ul{s})_j\ x^j.$$
\el
\bpr (i) Let $H=h\cdot\ul{s}-h_2$, $d=\deg(g)$ and $e=\deg(h)$. Since $h\in\Ann(s)$, $h\ul{s}=H+h_2$ and $(gh)\cdot\ul{s}=g(h\cdot\ul{s})=gH+gh_2$. As $\vv(gH)=\deg(g)+\vv(H)\leq d-n+e-1\leq -1$,  $(gh)_2=gh_2$.

(ii) Let $d=\deg(f)$. Then from Part (i),  
$$(f\cdot\mu+f'\cdot\mu')_2= (f\cdot\mu)_2+\sum_{j=0}^{d+\LC}(f'\mu'\cdot\ul{s})_j\ x^j=f\cdot\mu_2+\sum_{j=0}^{d+\LC}(f'\mu'\cdot\ul{s})_j\ x^j$$
and so it suffices to evaluate the second summand. We have
$$\mu'\cdot \ul{s}=\left(\sum_{j=-n}^{-n+\LC'-1}+\sum_{j=-n+\LC'}^{-1}+\sum_{j=0}^{\LC'-1}\right)\ (\mu'\cdot \ul{s})_j\ x^j$$
Now $\deg(f')-n+\LC'-1\leq \LC-n-1\leq -1$ since $\LC\leq n$ and the middle summand is zero. Finally, if $\LC'\leq j\leq d+\LC$ then $(\mu\cdot\ul{s})_j=0$ since $\deg(\mu')=\LC'$ and $\ul{s}_{j-\LC'}=0$.  
\epr

\begin{proposition}\label{MRprop} 
(\cite[Proposition 4.13]{N95b})  Let $n\geq 1$, $s\in D^n$,  $f'\in D[x]$ and  $\deg(f')\leq -\ee_n$. Then $\ol{\mu}^{(n)}+f'\cdot \ol{\mu}'^{(n)}$ is a minimal realisation of $s$.
\end{proposition}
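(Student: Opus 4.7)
The plan is to verify directly that $\ol{F} := \ol{\mu}^{(n)} + f' \cdot \ol{\mu}'^{(n)}$, with components $F_1 := \mu^{(n)} + f' \mu'^{(n)}$ and $F_2 := \mu_2^{(n)} + f' \mu_2'^{(n)}$, is a minimal realisation of $s$. This reduces to checking (a) $\deg(F_1) = \LC_n$ with $F_1 \in \Ann(s)$ (so that $F_1 \in \Min(s)$), and (b) $\deg(F_2) < \LC_n$ together with $\vv(F_1 \ul{s} - F_2) < \LC_n - n$.

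I would first carry out a degree count. The hypothesis $\deg(f') \leq -\ee_n = 2\LC_n - n - 1$ combined with Corollary \ref{LL'} ($\LC'_n = n' + 1 - \LC_n$) yields
\[
\deg(f' \mu'^{(n)}) \leq (2\LC_n - n - 1) + (n' + 1 - \LC_n) = \LC_n - (n - n') \leq \LC_n - 1,
\]
using $n' \leq n - 1$. Hence $\deg(F_1) = \deg(\mu^{(n)}) = \LC_n$, and a parallel bound using $\deg(\mu_2'^{(n)}) < \LC'_n$ gives $\deg(F_2) < \LC_n$.

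For the valuation estimate, I would split
\[
F_1 \ul{s} - F_2 = \bigl(\mu^{(n)} \ul{s} - \mu_2^{(n)}\bigr) + f' \bigl(\mu'^{(n)} \ul{s} - \mu_2'^{(n)}\bigr).
\]
The first summand has valuation $< \LC_n - n$ because $\ol{\mu}^{(n)} \in \MR(s)$. For the second, the key step is to bridge $\ul{s}$ and $\ul{s^{(n')}}$ via $\ul{s} = \ul{s^{(n')}} + \sum_{j=n'+1}^n s_j x^{-j}$, which gives
\[
\mu'^{(n)} \ul{s} - \mu_2'^{(n)} = \bigl(\mu'^{(n)} \ul{s^{(n')}} - \mu_2'^{(n)}\bigr) + \mu'^{(n)} \sum_{j=n'+1}^n s_j x^{-j}.
\]
The first piece has valuation $< \LC'_n - n'$ because $\ol{\mu}'^{(n)} \in \MR(s^{(n')})$, while the second clearly has valuation $\leq \LC'_n - (n' + 1)$; both are $\leq \LC'_n - n' - 1 = -\LC_n$ by Corollary \ref{LL'}. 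Multiplying by $f'$ then gives valuation $\leq (2\LC_n - n - 1) - \LC_n = \LC_n - n - 1$, so $\vv(F_1 \ul{s} - F_2) \leq \LC_n - n - 1 < \LC_n - n$. Since $F_2 \in D[x]$, this forces the coefficients of $F_1 \ul{s}$ at positions $\LC_n - n, \ldots, -1$ all to vanish, whence $F_1 \in \Ann(s)$ and therefore $F_1 \in \Min(s)$.

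The main obstacle is that $\ol{\mu}'^{(n)}$ realises the truncated sequence $s^{(n')}$ rather than $s$, so the tail $s_{n'+1}, \ldots, s_n$ must be tracked separately when passing from $\ul{s^{(n')}}$ to $\ul{s}$; the delicate bookkeeping works out because Corollary \ref{LL'} packages exactly enough room in the exponent budget $-\ee_n$ allotted to $f'$ to absorb the error produced by this tail. Degenerate cases such as $f' = 0$, or $\mu'^{(n)} = 0$ (which forces $\varepsilon = 0$ and $f' = 0$ under the hypothesis), reduce immediately to the already-established $\ol{\mu}^{(n)} \in \MR(s)$.
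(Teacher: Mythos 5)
Your proof is correct, but it takes a genuinely different route from the paper's. The paper's own proof is a two-step reduction: it cites an external index-function argument (\cite[Proposition 4.9]{N09d}) for the fact that ${\mu}^{(n)}+f'\cdot{\mu}'^{(n)}\in\Min(s)$, and then obtains the second component from Lemma \ref{f2}(ii), whose hypotheses are met since $-\ee_n=2\LC-n-1\leq 2\LC-n'-1=\LC-\LC'$. You instead verify the definition of a realisation directly: your split of $\ul{s}$ into $\ul{s^{(n')}}$ plus the tail $\sum_{j=n'+1}^{n}s_jx^{-j}$, combined with Corollary \ref{LL'} in the form $\LC'_n-n'-1=-\LC_n$, is precisely the bookkeeping that the paper hides inside the proof of Lemma \ref{f2}(ii) and the citation. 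A pleasant by-product of your route is that minimality comes for free: once $F_1$ is seen to be a nonzero annihilator of degree exactly $\LC_n$ (your degree count $\deg(f'\mu'^{(n)})\leq\LC_n-(n-n')\leq\LC_n-1$ guarantees the leading term survives), it lies in $\Min(s)$ by definition, so the index-function argument of the external reference is not needed at all. Your treatment of the degenerate cases ($f'=0$; the all-zero sequence with $n'=-1$, where $\ee_n=n+1>0$ forces $f'=0$) is also sound, and your $F_2=\mu_2^{(n)}+f'\mu_2'^{(n)}$ agrees with the expression produced by Lemma \ref{f2}(ii), since the tail contributes nothing to the coefficients of $\mu'^{(n)}\cdot\ul{s}$ in degrees $0,\ldots,\LC'-1$. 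The net effect is a self-contained proof where the paper offers an outsourced one.
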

\begin{proof} For  a proof that ${\mu}^{(n)}+f'\cdot {\mu}'^{(n)}\in\Min(s)$ using the index function, see \cite[Proposition 4.9]{N09d}. The proof that we have a minimal realisation now follows from Lemma \ref{f2}.
\end{proof}
\begin{theorem}\label{MRthm} 
(\cite[Theorem 4.16]{N95b})  Let $n\geq 1$, $D$ be a field and $s\in D^n$. If $\ol{f}\in\MR(s)$, then $\ol{f}=\ol{\mu}^{(n)}+f'\cdot \ol{\mu}'^{(n)}$ for some $f'\in D[x]$, where $f'=0$ or $\deg(f')\leq -\ee_n$. In particular, if $\ee_n>0$, then $\ol{\mu}^{(n)}$ is unique.
\end{theorem}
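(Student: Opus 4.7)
My plan is to reduce the problem to a residual $\ol h := \ol f - \ol{\mu}^{(n)}$ and then show that $\ol h$ is a polynomial multiple of $\ol{\mu}'^{(n)}$ via valuation estimates in $D[[x^{-1},x]$. For any $f\in D[x]$ I write $R_f := f\cdot \ul s - f_2$, so the realisation condition becomes $\vv(R_f) < \deg(f) - n$; in particular $\vv(R_{\mu^{(n)}}) \le \LC_n - n - 1$, and the analogous remainder $R' := \mu'^{(n)}\ul{s^{(n')}} - \mu_2'^{(n)}$ associated to $\ol\mu'^{(n)}$ (viewed as a minimal realisation of $s^{(n')}$) satisfies $\vv(R') \le \LC'_n - n' - 1$.

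I would first scale $\ol f$ by a unit of $D$ so the leading coefficient of $f$ agrees with that of $\mu^{(n)}$ (possible since both lie in $\Min(s)$ and have degree exactly $\LC_n$). Setting $\ol h := \ol f - \ol{\mu}^{(n)}$ and subtracting the two realisation conditions yields $\vv(R_h) < \LC_n - n$, $\deg h < \LC_n$ (or $h = 0$), and $\deg h_2 < \LC_n$. The case $h = 0$ is immediate: the bound together with $h_2 \in D[x]$ and $\LC_n \le n$ forces $h_2 = 0$, so $f' = 0$ does the job.

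When $h \neq 0$, the crucial step is the polynomial identity $h\cdot \mu_2'^{(n)} = h_2\cdot \mu'^{(n)}$. Substituting $h_2 = h\ul s - R_h$ and $\mu_2'^{(n)} = \mu'^{(n)}\ul{s^{(n')}} - R'$, the $h\mu'^{(n)}\ul s$ terms cancel and the difference becomes $R_h\mu'^{(n)} - hR' - h\mu'^{(n)}(\ul s - \ul{s^{(n')}})$. Using $\deg h \le \LC_n - 1$, $\vv(\ul s - \ul{s^{(n')}}) \le -n' - 1$, and the identity $\LC_n + \LC'_n = n' + 1$ from Corollary \ref{LL'}, each of the three terms has $\vv \le -1$; since the expression lies in $D[x]$, it must vanish. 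From $h\mu_2'^{(n)} = h_2\mu'^{(n)}$ I define $f' := h/\mu'^{(n)} \in D(x)$ and verify $f' \in D[x]$ using coprimality of $\mu'^{(n)}$ and $\mu_2'^{(n)}$ in the UFD $D[x]$ (with the degenerate case $\LC'_n = 0$ handled directly), whereupon $h_2 = f'\mu_2'^{(n)}$ follows.

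The degree bound comes from a parallel estimate: $h\mu_2^{(n)} - h_2\mu^{(n)} = R_h\mu^{(n)} - hR_{\mu^{(n)}}$ has $\vv \le 2\LC_n - n - 1 = -\ee_n$, and equals $f'\cdot \nabla_n$ with $\nabla_n := \mu_2^{(n)}\mu'^{(n)} - \mu^{(n)}\mu_2'^{(n)} \in D^\times$, so $\deg f' \le -\ee_n$. The uniqueness assertion for $\ee_n > 0$ is then immediate, since $-\ee_n < 0$ leaves no nonzero $f'$. The main obstacle I foresee is confirming that $\nabla_n$ is genuinely a nonzero scalar at this stage of the paper, rather than forward-referencing the B\'ezout identity of later sections; the cleanest resolution is to verify this by induction from the recursion in Theorem \ref{bimrt}. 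A secondary point worth making explicit at the outset is that the statement really identifies $\ol f$ only up to a scalar in $D^\times$, which is exactly what the initial normalisation step takes care of.
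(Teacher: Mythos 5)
Your argument is correct, and there is nothing in the paper to compare it against: Theorem \ref{MRthm} is stated with a citation to \cite[Theorem 4.16]{N95b} and no proof is given here. Your route --- normalise leading coefficients, pass to the residual $\ol{h}=\ol{f}-\ol{\mu}^{(n)}$, kill the cross-product $h\mu_2'^{(n)}-h_2\mu'^{(n)}$ by the three valuation estimates combined with $\LC_n+\LC'_n=n'+1$ from Corollary \ref{LL'}, divide by $\mu'^{(n)}$ using coprimality of $\ol{\mu}^{(n')}$, and read the degree bound off $h\mu_2^{(n)}-h_2\mu^{(n)}=f'\cdot\nabla_n$ --- is the Pad\'e-uniqueness cross-multiplication argument, and the estimates all check out, including the degenerate cases: $n'=-1$ forces $s=0$ (handled by $h=0$), and $\mu_2'^{(n)}=0$ can only occur when $\mu'^{(n)}$ is a nonzero constant, so the division is trivial there. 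Two points you already flag deserve emphasis. First, the theorem as literally stated cannot hold for arbitrary $\ol{f}\in\MR(s)$, since $c\cdot\ol{f}\in\MR(s)$ for every $c\in D^\times$ while the right-hand side has a fixed leading coefficient; your initial rescaling is the correct reading of the statement. Second, the identity $-\mu_2'^{(n)}\mu^{(n)}+\mu'^{(n)}\mu_2^{(n)}=\nabla_n\in D^\times$, which you need both for the coprimality of $\mu'^{(n)}$ and $\mu_2'^{(n)}$ and for the degree bound, is exactly Theorem \ref{numu}; its proof is a self-contained induction on the recursion of Theorem \ref{bimrt} and nowhere uses Theorem \ref{MRthm}, so invoking it (or reproving it inline, as you propose) creates only a forward reference in the paper's ordering, not a circularity.
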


\subsection{The Iterative Algorithm}

As for the minimal polynomial algorithm, we write  $\ol{\mu}^{'(j)}$ for $\ol{\mu}^{(j')}$ and then suppress $j$. 
 
\begin{algorithm}   \label{MRa}\  
\begin{tabbing}

Input: \ \ \=$n\geq 1$ and $s=(s_1,\ldots,s_{n})\in D^n$.\\

Output: \>$\ol{\mu}\in\MR(s)$ and its MR identity.\end{tabbing}
\{$\Delta':=1$;\ $\ol{\mu}'  :=(\varepsilon,-1)$;\ $e := 1$;\ $\ol{\mu}  :=  (1,0)$;

\begin{tabbing}
{\bf for} \= $j = 1$ {\tt TO }$n$\\
    \> \{$\Delta    :=  \sum_{k=0}^{\frac{j-e}{2}} \mu_k \  s_{k+\frac{j+e}{2}};$ \\
   \> {\tt IF} $\Delta  \neq  0$ {\tt THEN }\{{\tt IF } $e\leq 0$  \=   						{\tt THEN} \=$\ol{\mu} :=  \Delta'\cdot \ol{\mu}-\Delta\cdot  x^{-e}\ \ol{\mu}'$\\ \\

  \>                 \>    {\tt ELSE} \{\=$\ol{t}:=\ol{\mu}$;\\
  \>                 \>           \> $\ol{\mu}:=   \Delta'\cdot x^e\ \ol{\mu}-\Delta\cdot \ol{\mu}'$;\\
  \>                 \>            \> $\ol{\mu}' :=\ol{t};\,\Delta':=\Delta$;\\
  \>                 \>            \> $e := -e$\}\}\\
  \> $e  := e+1$;\}\\
{\bf return} $(\ol{\mu})$\}
\end{tabbing}
\end{algorithm}
\begin{table}\label{shortex}
\caption{Algorithm \ref{MRa} with input 
$(0,1,1,0, 0,1,0,1)\in\F_2^8$.}
\begin{center}
\begin{tabular}{|c|r|r|l|l|l|l|l|l|}\hline
$j$ & $\Delta_j$  & $e_{j-1}$      			&$\mu^{(j)}$ &$\mu'^{(j)}$& $\mu_2^{(j)}$  & $\mu_2'^{(j)}$\\\hline\hline
$0$   &$1$        &$ $  & $1$ & $0$ &$0$ &$1$ \\\hline
$1$   &$0$  &$1$  & $1$ &$0$&$0$  &$1$ \\\hline
$2$   & $1$ &$2$ & $x^2$ & $1$ &$1$ &$0$  \\\hline
$3$   & $1$ &$-1$  & $x^2+x$ & $1$&$1$  &$0$  \\\hline
$4$   &$1$  &$0$  & $x^2+x+1$ & $1$&$1$ &$0$  \\\hline
$5$   & $1$ &$1$  & $x^3+x^2+x+1$  & $x^2+x+1$&$x$ &$1$ \\\hline
$6$   &$0$  &$0$  & $x^3+x^2+x+1$ &  $x^2+x+1$&$x$ &$1$  \\\hline
$7$   & $1$ & $1$ & $x^4+x^3+1$ & $x^3+x^2+x+1$&$x^2+1$ &$x$ \\\hline
$8$   &$1$  & $0$ & $x^4+x^2+x$ &  $x^3+x^2+x+1$&$x^2+x+1$&$x$\\\hline
\end{tabular}
\end{center}
\end{table}

\begin{example} \label{mrex} (Cf. \cite[Table I]{Salagean}) 
Table \ref{shortex} gives the values of $\ol{\mu}^{(j)}$ and $\ol{\mu}'^{(j)}$. The reader may check that $\tilde{\mu}'^{(j)}\cdot\ol{\mu}^{(j)}=1$ for $1\leq j\leq 8$ as per Theorem \ref{numu}.
\end{example}
Over a field, we can produce a monic $\mu$ by dividing by $\Delta'$.  We remark that  $\mu$ and $\mu_2$ can even be computed in parallel.
\section{Minimal Polynomial Identities} 

\subsection{An Identity for $\mu^{(n)}$ and $\mu'^{(n)}$} \label{gcds}
Our first set of identities was suggested by the following result.
\begin{proposition} \label{gcd} Let $D$ be a factorial domain and $s\in D^n$.
For $1\leq j\leq n$, let $e=\ee_{j-1}$. Then 

 (i) $\gcd(\mu^{(j)},\mu'^{(j)})=1$
 
 (ii) if  $\Delta_j\neq 0$ and $p_j\geq 0$ is the highest power of $x$ dividing $\mu^{(j)}$, then 
$$\gcd(\mu^{(j-1)},\mu^{(j)})=\left\{\begin{array}{ll}
 x^{\min\{p_{j-1},-e\}} & \mbox{ if } e\leq 0\\
 1 & \mbox{ if } e>0.
\end{array}
\right.
$$
(iii) $\gcd(\mu^{(j)\ast},\mu^{(j-1)\ast})=1$.
\end{proposition}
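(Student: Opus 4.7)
I would prove the three parts simultaneously by strong induction on $j$, splitting the inductive step according to the three branches of Corollary \ref{noindices}. The base case $j=1$ is checked directly from the initial values $\mu^{(0)}=1$, $\mu'^{(0)}=\varepsilon$.

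When $\Delta_j=0$, $\ol{\mu}^{(j)}=\ol{\mu}^{(j-1)}$ and $\ol{\mu}'^{(j)}=\ol{\mu}'^{(j-1)}$, so (i) is preserved and (ii) is vacuous. When $\Delta_j\neq 0$ and $e=\ee_{j-1}\leq 0$, the recursion reads $\mu^{(j)}=\Delta'_j\mu^{(j-1)}-\Delta_j x^{-e}\mu'^{(j-1)}$ with $\mu'^{(j)}=\mu'^{(j-1)}$. Any common divisor of $\mu^{(j)}$ and $\mu'^{(j)}$ must then divide $\Delta'_j\mu^{(j-1)}$, and by the inductive case (i) at $j-1$ this reduces to dividing $\Delta'_j\in D$. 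For (ii), rewriting the recursion as $\Delta_j x^{-e}\mu'^{(j-1)}=\Delta'_j\mu^{(j-1)}-\mu^{(j)}$ and using the inductive coprimality of $\mu^{(j-1)}$ and $\mu'^{(j-1)}$ forces any common divisor of $\mu^{(j-1)}$ and $\mu^{(j)}$ (outside $D$) to be a pure power of $x$; the exponent $\min\{p_{j-1},-e\}$ drops out once one knows that $\mu'^{(j-1)}(0)\neq 0$, which I would carry along as an auxiliary invariant (and which is essentially a consequence of (iii) at the previous step). The case $e>0$ is symmetric, with the roles of $\mu^{(j-1)}$ and $\mu'^{(j-1)}$ swapped, and yields $\gcd(\mu^{(j-1)},\mu^{(j)})=1$.

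For (iii) (with $\Delta_j\neq 0$), I would apply the reciprocation $f\mapsto f^\ast$ to the recursion. Since both $\mu^{(j)\ast}$ and $\mu^{(j-1)\ast}$ have nonzero constant term by construction, $x$ cannot be a common factor. The reciprocated identity expresses $\mu^{(j)\ast}$ as a $D$-linear combination of $\mu^{(j-1)\ast}$ and a shift $x^m\mu'^{(j-1)\ast}$ with $m\geq 0$ determined by degree matching; since $\mu'^{(j-1)}=\mu^{(k)}$ for a strict predecessor $k$, the claim reduces to earlier instances of (i) (transferred to reciprocals) and (iii).

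The main obstacle is handling the constant $D$-factors: in a factorial domain, coprimality also rules out common prime divisors in $D$, and the recursion injects the discrepancies $\Delta_j,\Delta'_j$ as scalar factors that could a priori be shared with $\mu'^{(j)}$. I expect the cleanest resolution to require strengthening the inductive invariant --- for example, tracking that the leading coefficient of $\mu^{(j)}$, which the recursion displays as a product of the jump-time discrepancies, is coprime to the content of $\mu'^{(j)}$. Maintaining this bookkeeping uniformly across the three branches of the recursion, and simultaneously for (i), (ii) and (iii), is the technical heart of the proof.
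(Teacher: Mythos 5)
Your skeleton --- induction on $j$ through the three branches of the recursion, reducing $\gcd(\mu^{(j)},\mu'^{(j)})$ and $\gcd(\mu^{(j-1)},\mu^{(j)})$ to the previous step via $\mu^{(j)}=\Delta'_j x^{\max\{e,0\}}\mu^{(j-1)}-\Delta_j x^{\max\{-e,0\}}\mu'^{(j-1)}$ --- is exactly the paper's. But two of your supporting steps would fail as stated. First, the auxiliary invariant $\mu'^{(j-1)}(0)\neq 0$ is false: for $s=(0,1,0,0,1)$ over $\F_2$ one gets $\mu^{(4)}=x^2$, $j=5$ is a jump point, and $\mu'^{(5)}=\mu^{(4)}=x^2$ has zero constant term (Corollary \ref{nonzero} only forbids $\mu^{(j)}_0$ and $\mu'^{(j)}_0$ from vanishing simultaneously, and it is not a consequence of (iii)). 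Fortunately you do not need it: once $\gcd(\mu^{(j-1)},\mu'^{(j-1)})=1$ is known, $x$ divides at most one of the two, so $\gcd(\mu^{(j-1)},x^{-e}\mu'^{(j-1)})=\gcd(\mu^{(j-1)},x^{-e})=x^{\min\{p_{j-1},-e\}}$, which is precisely how the paper settles (ii) for $e\leq 0$; for $e>0$ it reduces directly to $\gcd(\mu^{(j-1)},\mu'^{(j-1)})$.

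Second, the strengthened invariant you propose to handle constant factors (leading coefficient of $\mu^{(j)}$ coprime to the content of $\mu'^{(j)}$) cannot be maintained. Over $D=\Z$ with $\varepsilon=0$ and $s=(2,2,1)$ one computes $\mu^{(1)}=x$, $\mu^{(2)}=2x-2$, $\mu^{(3)}=4x^2-4x+2$ and $\mu'^{(3)}=2x-2$: the leading coefficient $4$ and the content $2$ share the prime $2$, and indeed $2$ divides both $\mu^{(3)}$ and $\mu'^{(3)}$ in $\Z[x]$. So your worry about the injected scalars $\Delta_j,\Delta'_j$ is legitimate --- literally read, the identity $\gcd=1$ can fail at the level of contents --- but your proposed bookkeeping cannot repair it. The statement has to be read up to constants (coprimality of primitive parts, equivalently coprimality over the fraction field), which is what the paper's proof tacitly does when it discards $\Delta'_j$ in passing from $\gcd(\mu^{(j)},\mu'^{(j-1)})$ to $\gcd(\mu^{(j-1)},\mu'^{(j-1)})$, and is consistent with the weaker assertion $\gcd(\ol{\mu}^{(j)})\in D$ stated after Theorem \ref{numu}. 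Under that convention your "technical heart" dissolves and no extra invariant is needed. A minor further difference: for (iii) the paper does not reciprocate the recursion; it deduces the claim directly from (ii), since the common factor is a power of $x$ and reciprocals have nonzero constant term. Your route through the reciprocated recursion also works (when $\Delta_j\neq 0$, as (iii) implicitly requires) but is longer.
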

\begin{proof} We prove this by induction on $j$, the case $j=0$ being trivial. Suppose inductively that $\gcd(\mu^{(j-1)},\mu'^{(j-1)})=1$. Then

$$\mu^{(j)}= \left\{\begin{array}{ll}

\Delta_j\cdot \mu^{(j-1)}- \Delta_j\cdot  x^{-e}  \mu'^{(j-1)}
	& \mbox{ if } e\leq 0\\
\Delta'_{j}  \cdot x^{+e}\mu^{(j-1)}- \Delta_j\cdot 
 \mu'^{(j-1)} &\mbox{ otherwise}
\end{array}
\right.
$$
and
$$\mu'^{(j)}= \left\{\begin{array}{ll}

\mu'^{(j-1)}
	& \mbox{ if } e\leq 0\\
 \mu^{(j-1)} &\mbox{ if } e>0.
\end{array}
\right.
$$
Thus if $e\leq 0$, then $\gcd(\mu^{(j)},\mu'^{(j)})=\gcd(\mu^{(j)},\mu'^{(j-1)})=\gcd(\mu^{(j-1)},\mu'^{(j-1)})=1$ by the inductive hypothesis. If $e>0$, then $$\gcd(\mu^{(j)},\mu'^{(j)})=\gcd(\mu^{(j)},\mu^{(j-1)})=\gcd(\mu'^{(j-1)},\mu^{(j-1)})=1$$ by the inductive hypothesis.

(ii) Suppose that $e\leq 0$. Then $\mu^{(j)}=
\Delta'_{j}\cdot \mu^{(j-1)}- \Delta_j\cdot  x^{-e}\mu'^{(j-1)}$
and so $\gcd(\mu^{(j-1)},\mu^{(j)})=\gcd(\mu^{(j-1)},x^{-e}\mu'^{(j-1)})= x^{\min\{p_{j-1},-e\}}$ by  Part (i). For
$e>0$, we have $\mu^{(j)}=
\Delta'_j\cdot x^{e}\mu^{(j-1)}- \Delta_j\cdot \mu'^{(j-1)}$ and so $$\gcd(\mu^{(j-1)},\mu^{(j)})=\gcd(\mu^{(j-1)},\mu'^{(j-1)})=1.$$

(iii) Easy consequence of (ii). 
\end{proof}
\bc \label{relprime} Let $D$ be factorial and $s\in D^n$. The successive 'connection' polynomials of $s$ (obtained by the Berlekamp-Massey algorithm for $D$ of \cite[Algorithm 5.4]{N10a}) are  relatively prime. 
\ec
\begin{proof} For $1\leq j\leq n$, the successive connection polynomials of $s$ are $\mu^{(j)\ \ast}$ and $\mu^{(j-1)\ \ast}$ by \cite[Corollary 4.10]{N10a}. The result now follows from Proposition \ref{gcd}(iii).
\end{proof}
We will see that the following inductively defined scalar plays a key role: it turns out to be the right-hand side of all three identities.
\begin{definition}\label{nabla} Let $s\in D^n$ and for $1\leq j\leq n$, let $\Delta_{j}$,  and $e=\ee_{j-1}$  be as in Theorem \ref{bimrt}. Put $\nabla_0=1$ and for $1\leq j\leq n$, 
define $\nabla_{j}=\nabla_{j}(s)$ by 

(i)  $\nabla_{j}=\nabla_{j-1}$ if $\Delta_{j}=0$

(ii) 
$$\nabla_{j}= \left\{\begin{array}{ll}
\Delta'_{j}\cdot\nabla_{j-1} & \mbox{if } e\leq 0,\ \Delta_j\neq 0\\\\
\Delta_j\cdot\nabla_{j-1}
   & \mbox{if } e> 0,\ \Delta_j\neq 0.
\end{array}
\right.
$$

\end{definition}
We have $\nabla_1=\Delta_1$ since $e_0=1$. It is easy to see that the $\nabla_{j}$ are always non-zero. 
We now construct an identity for $\mu^{(n)}$ and $\mu'^{(n)}$.
\begin{theorem}\label{fg} Let $D$ be a domain, $s\in D^n$ and for $1\leq j\leq n$, let $e=\ee_{j-1}$ be as in Theorem \ref{bit}. Put $\ol{f}^{(0)}=(1,0)$. Assume inductively that we have defined $\ol{f}^{(k)}$ for $0\leq k\leq j-1$.  
If $\Delta_j=\Delta_j(\mu^{(j-1)})=0$, put $\ol{f}^{(j)}=\ol{f}^{(j-1)}$. If $\Delta_j\neq 0$, put
$$\ol{f}^{(j)}= \left\{\begin{array}{ll}
(+f^{(j-1)},\Delta'_j\cdot f_2^{(j-1)}+{\Delta_j}\cdot   x^{-e} f^{(j-1)})
	& \mbox{ if } e\leq 0\\\\
(-f_2^{(j-1)},\Delta'_j\cdot x^{+e}f_2^{( j-1)}+ {\Delta_j}\cdot f^{( j-1)}) &\mbox{ if } e>0.
\end{array}
\right.
$$
Then for $0\leq j\leq n$, we have 
$\ol{f}^{(j)}\cdot({\mu}^{(j)},\mu'^{(j)})=\nabla_j$. \end{theorem}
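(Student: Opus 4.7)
The plan is to prove the identity $f^{(j)}\mu^{(j)}+f_2^{(j)}\mu'^{(j)}=\nabla_j$ by induction on $j$, exactly mirroring the three-case recursive definitions of $\ol{\mu}^{(j)}$, $\mu'^{(j)}$, $\nabla_j$ and $\ol{f}^{(j)}$. The base case $j=0$ is immediate since $\ol{f}^{(0)}=(1,0)$ and $\mu^{(0)}=1$ give $1\cdot 1+0\cdot\mu'^{(0)}=1=\nabla_0$. For the inductive step I assume $f^{(j-1)}\mu^{(j-1)}+f_2^{(j-1)}\mu'^{(j-1)}=\nabla_{j-1}$ and split according to the behaviour of $\Delta_j$ and $e=\ee_{j-1}$, using Corollary \ref{noindices} to track how $\mu^{(j)}$ and $\mu'^{(j)}$ evolve.

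In the trivial case $\Delta_j=0$ we have $\mu^{(j)}=\mu^{(j-1)}$, $\mu'^{(j)}=\mu'^{(j-1)}$, $\nabla_j=\nabla_{j-1}$ and $\ol{f}^{(j)}=\ol{f}^{(j-1)}$, so the identity is copied verbatim. In the case $\Delta_j\neq 0$, $e\leq 0$, Corollary \ref{noindices}(a) gives $\mu'^{(j)}=\mu'^{(j-1)}$ while $\mu^{(j)}=\Delta'_j\mu^{(j-1)}-\Delta_j x^{-e}\mu'^{(j-1)}$; the definition sets $f^{(j)}=f^{(j-1)}$ and $f_2^{(j)}=\Delta'_jf_2^{(j-1)}+\Delta_j x^{-e}f^{(j-1)}$, and a direct substitution shows the cross terms $\pm\Delta_j x^{-e}f^{(j-1)}\mu'^{(j-1)}$ cancel, leaving $\Delta'_j[f^{(j-1)}\mu^{(j-1)}+f_2^{(j-1)}\mu'^{(j-1)}]=\Delta'_j\nabla_{j-1}=\nabla_j$. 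In the case $\Delta_j\neq 0$, $e>0$, Corollary \ref{noindices}(b) gives $\mu'^{(j)}=\mu^{(j-1)}$ while $\mu^{(j)}=\Delta'_j x^e\mu^{(j-1)}-\Delta_j\mu'^{(j-1)}$; the definition sets $f^{(j)}=-f_2^{(j-1)}$ and $f_2^{(j)}=\Delta'_j x^e f_2^{(j-1)}+\Delta_j f^{(j-1)}$, and again the cross terms $\pm\Delta'_j x^e f_2^{(j-1)}\mu^{(j-1)}$ cancel, leaving $\Delta_j[f^{(j-1)}\mu^{(j-1)}+f_2^{(j-1)}\mu'^{(j-1)}]=\Delta_j\nabla_{j-1}=\nabla_j$.

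The whole argument is essentially bookkeeping: the recursion for $\ol{f}^{(j)}$ has been designed so that substituting the recursions for $\mu^{(j)}$ and $\mu'^{(j)}$ produces the telescoping cancellation automatically. There is no real obstacle, only the need to keep the two sign/shift branches straight; the only subtle point is confirming that $\mu'^{(j)}$ really is $\mu'^{(j-1)}$ in the $e\leq 0$ branch and $\mu^{(j-1)}$ in the $e>0$ branch, which is precisely the content of Corollary \ref{noindices}(a)(b). The clean parallel between the two nonzero-discrepancy cases also explains the sign convention $f^{(j)}=-f_2^{(j-1)}$ in the second branch, since this is what makes the leading term $\Delta'_j x^e f_2^{(j-1)}\mu^{(j-1)}$ appear with opposite signs from the two summands.
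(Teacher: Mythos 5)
Your proof is correct and follows essentially the same route as the paper's: induction on $j$, splitting on $\Delta_j=0$ versus $\Delta_j\neq 0$ and on the sign of $e$, substituting the recursions for $\mu^{(j)}$ and $\mu'^{(j)}$ and letting the cross terms cancel so that $\nabla_j=\Delta'_j\nabla_{j-1}$ (resp. $\Delta_j\nabla_{j-1}$) emerges. The only cosmetic difference is that the paper rewrites the inductive hypothesis as $f\mu=\nabla_{j-1}-f_2\mu'$ and substitutes, whereas you expand the full inner product directly; the computation is identical.
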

\begin{proof} The case $n=0$ is a trivial verification. 
Suppose that  $\ol{f}^{(j)}\cdot(\mu^{(j)},\mu'^{(j)})=1$ for $0\leq j\leq n-1$. Put  $\mu=\mu^{(n-1)}$,  $e=\ee_{n-1}$,  $\Delta =\Delta_n$, $\mu'=\mu'^{(n-1)}$,  $\Delta'=\Delta'_n$ and $\ol{f}=\ol{f}^{(n-1)}$.  We can assume that $\Delta_n\neq 0$.  If $e\leq 0$, we have $\mu'^{(n)}=\mu'$ and 
\begin{eqnarray*}
f^{(n)}\mu^{(n)}
&=&f(\Delta'\cdot \mu- \Delta\cdot x^{-e}\mu')=\Delta'\cdot f\mu -\Delta\cdot  x^{-e}f\mu'\\
&=&\Delta'\cdot (\nabla_{n-1}-f_2\mu')- \Delta\cdot  x^{-e}f\mu'\\
&=&\nabla_n-(\Delta'\cdot f_2+\Delta\cdot  x^{-e}f)\mu'=\nabla_n-f_2^{( n)}\mu'^{(n)}.
\end{eqnarray*}
If $e>0$, we have  $\mu'^{(n)}=\mu$. Thus
\begin{eqnarray*}
f^{(n)}\mu^{( n)}&=& -f_2(\Delta'\cdot x^{e}\mu-
\Delta\cdot \mu')= -\Delta'\cdot x^{e}f_2\mu+\Delta\cdot f_2\mu'\\
&=& -\Delta'\cdot x^{e}f_2\mu+\Delta\cdot (\nabla_{n-1}-f\mu)
= -(\Delta'\cdot x^{e}f_2+ \Delta\cdot f)\mu+\nabla_n\\
&=& -f_2^{(n)}\mu'^{(n)}+\nabla_n.
\end{eqnarray*}
\end{proof}
When $D$ is a field, these amount to B\'ezout Identities for the $\mu^{(n)}$ and $\mu'^{(n)}$. 
We see again that $\gcd(\mu^{(n)},\mu'^{(n)})=1$ if $D$ is factorial. The general form of $f_2^{(n)}$ shows that we can compute it iteratively alongside $\mu^{(n)}$ in Algorithm
\ref{MRa}. 
Our first application follows from Proposition \ref{gcd} when $D$ is factorial.
\bc \label{nonzero} Let $D$ be any domain and $s\in D^n$. If ${\mu}^{(n)}_0=0$ then ${\mu}'^{(n)}_0\neq 0$ and similarly for  ${\mu}'^{(n)}_0$.
\ec
If needed, we can obtain $\deg(f_2^{(n)})$ from
$\deg(f^{(n)})+\LC_n=\deg(f_2^{(n)})+\LC'_n$.
 
\subsection{An Identity for ${\mu}^{(n-1)}$ and $\mu^{(n)}$}
We give an identity for ${\mu}^{(n-1)}$ and $\mu^{(n)}$ when $\Delta_n\neq 0$ and $e_{n-1}>0$.  When $\Delta_n\neq 0$, $\ee_{n-1}\leq 0$  and $D$ is factorial, there is a similar non-constructive identity (which does not explicitly involve $\nabla_n$). 

\begin{proposition}\label{nextid}
Let  $s\in D^n$, $\Delta_n\neq 0$ and $e=\ee_{n-1}$  be as in Theorem \ref{bimrt}.
If $e> 0$ and $\ol{f}^{(n-1)}\cdot(\mu^{(n-1)},\mu'^{(n-1)})=\nabla_{n-1}$ then 
$$(\Delta\cdot f^{(n-1)}+\Delta'\cdot x^{e}f_2^{(n-1)},-f_2^{(n-1)})\cdot({\mu}^{(n-1)},\mu^{(n)})=\nabla_n.$$
\end{proposition}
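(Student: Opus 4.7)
My plan is to derive the claimed identity directly from Theorem \ref{fg} by specializing to the case $e=\ee_{n-1}>0$ and then rearranging terms so that $\mu^{(n-1)}$ and $\mu^{(n)}$ appear explicitly (instead of $\mu^{(n)}$ and $\mu'^{(n)}$).

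The key input is the explicit recursive form of $\ol{f}^{(n)}$ from Theorem \ref{fg}. In the branch $e>0$, $\Delta_n\neq 0$, it reads
$$\ol{f}^{(n)}=(-f_2^{(n-1)},\ \Delta'\cdot x^e f_2^{(n-1)}+\Delta\cdot f^{(n-1)}),$$
so that component-wise $f^{(n)}=-f_2^{(n-1)}$ and $f_2^{(n)}=\Delta\cdot f^{(n-1)}+\Delta'\cdot x^e f_2^{(n-1)}$. Theorem \ref{fg} then asserts
$$f^{(n)}\cdot\mu^{(n)}+f_2^{(n)}\cdot\mu'^{(n)}=\nabla_n.$$

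The second observation I need is that, under the same hypothesis $e>0$ and $\Delta_n\neq 0$, Corollary \ref{noindices}(b) (equivalently, the definition of the index function in Definition \ref{indices}) gives $n'=n-1$, and hence $\mu'^{(n)}=\mu^{(n')}=\mu^{(n-1)}$. Substituting this together with the formulas for $f^{(n)}$ and $f_2^{(n)}$ into the identity above yields
$$-f_2^{(n-1)}\cdot\mu^{(n)}+\bigl(\Delta\cdot f^{(n-1)}+\Delta'\cdot x^e f_2^{(n-1)}\bigr)\cdot\mu^{(n-1)}=\nabla_n,$$
which is exactly the inner product $(\Delta\cdot f^{(n-1)}+\Delta'\cdot x^e f_2^{(n-1)},\,-f_2^{(n-1)})\cdot(\mu^{(n-1)},\mu^{(n)})$.

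There is essentially no obstacle: the statement is a bookkeeping reformulation of the $e>0$ case of Theorem \ref{fg}, once one recognizes via the index function that the role of $\mu'^{(n)}$ in that theorem is played by $\mu^{(n-1)}$ here. The only care needed is to get the signs right when rearranging the $\mu^{(n)}$-term to the second slot of the inner product, but this is automatic since the coefficient of $\mu^{(n)}$ is already $-f_2^{(n-1)}$, as required.
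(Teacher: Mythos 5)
Your proof is correct, and it takes a genuinely (if mildly) different route from the paper's. You treat the Proposition as a pure relabelling of the $e>0$ branch of Theorem \ref{fg}: read off $f^{(n)}=-f_2^{(n-1)}$ and $f_2^{(n)}=\Delta\cdot f^{(n-1)}+\Delta'\cdot x^{e}f_2^{(n-1)}$ from the recursion, quote the level-$n$ identity $f^{(n)}\mu^{(n)}+f_2^{(n)}\mu'^{(n)}=\nabla_n$, and substitute $\mu'^{(n)}=\mu^{(n-1)}$ (valid because $\Delta_n\neq 0$ and $\ee_{n-1}>0$ force $n'=n-1$). The paper instead gives a self-contained computation from the stated hypothesis: it expands the inner product, substitutes $f^{(n-1)}\mu^{(n-1)}=\nabla_{n-1}-f_2^{(n-1)}\mu'^{(n-1)}$, uses $\Delta_n\cdot\nabla_{n-1}=\nabla_n$ (the $e>0$ case of Definition \ref{nabla}), and observes that the leftover terms assemble into $f_2^{(n-1)}\bigl(\Delta'\cdot x^{e}\mu^{(n-1)}-\Delta\cdot\mu'^{(n-1)}\bigr)-f_2^{(n-1)}\mu^{(n)}=0$ by the recursion for $\mu^{(n)}$. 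The two arguments are the same algebra --- the paper's $e>0$ computation inside the proof of Theorem \ref{fg} is essentially line-for-line what you are quoting --- so yours is the more economical presentation. One small care point: the Proposition's hypothesis is only that $\ol{f}^{(n-1)}$ satisfies the level-$(n-1)$ identity, whereas Theorem \ref{fg} as stated concerns the particular chain $\ol{f}^{(0)},\ldots,\ol{f}^{(n)}$ built from $\ol{f}^{(0)}=(1,0)$; to cover an arbitrary $\ol{f}^{(n-1)}$ satisfying the hypothesis you should invoke the single induction step of Theorem \ref{fg} (which uses only that identity) rather than its global conclusion. This is cosmetic here, since the notation $\ol{f}^{(n-1)}$ in the Proposition refers to the Theorem \ref{fg} object in any case.
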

\begin{proof}  Put  $\mu=\mu^{(n-1)}$,  $e=\ee_{n-1}$,  $\Delta =\Delta_n$, $\mu'=\mu'^{(n-1)}$,  $\Delta'=\Delta'_n$ and $\ol{f}=\ol{f}^{(n-1)}$.  The identity is trivial if $n=1$. 
Suppose that  $\ol{f}^{(n-1)}\cdot(\mu^{(n-1)},\mu'^{(n-1)})=1$. Then
\begin{eqnarray*}
(\Delta\cdot f+\Delta'\cdot x^{e}f_2,-f_2)\cdot({\mu},\mu^{(n)})&=&\Delta \cdot f\mu+\Delta'\cdot x^ef_2\mu-f_2\mu^{(n)}\\
&=&\Delta\cdot (\nabla_{n-1}-f_2\mu')+\Delta' \cdot x^e f_2\mu-f_2\mu^{(n)}\\
&=&\nabla_{n}-\Delta\cdot  f_2\mu'+\Delta'\cdot  f_2x^e\mu-f_2\mu^{(n)}\\
&=&\nabla_{n}+f_2(\Delta'\cdot  x^e\mu-\Delta \cdot \mu')-f_2\mu^{(n)}=\nabla_{n}.
\end{eqnarray*}
\end{proof}
The case $e_{n-1}\leq 0$ is complicated by $\gcd({\mu}^{(n-1)},x^{-e})=x^m$ say. If $m=-e$, one may check that
$$(x^{-e}f^{(n-1)}+f_2^{(n-1)})\mu^{(n-1)}-f_2^{(n-1)}\mu^{(n-1)}=\nabla_{n-1}\cdot\gcd({\mu}^{(n)},{\mu}^{(n-1)}).$$

In general, we can apply the following result when $D$ is factorial. 
\bt\label{FN} (\cite[Corollary 2.7]{FN95}) Let $D$ be factorial and $a,b\in D[x]$, not both zero. There are $u\in D$ and $f,g\in D[x]$  such that $fa+gb=u\gcd(a,b)$.
\et
For a specific $s\in D^n$, we may even obtain $u$, $f$ and $g$ by applying algorithm XPRS of \cite{FN95} to ${\mu}^{(n-1)}$ and $\mu^{(n)}$. See also Theorem \ref{newBIt}. It would be useful to have an identity for general $D$ which involves $\nabla_n$ when $\Delta_n\neq 0$ and $e_{n-1}\leq 0$. 
\subsection{An Identity for $\mu^{(n)}$ and $\mu^{(n)}_2$}
 Recall that $\ol{\mu}^{(-1)}=(\varepsilon,-1)$, $\ol{\mu}^{(0)}=(1,0)$ and that if $\ol{f}=(f,f_2)$, then $\tilde{f}=(-f_2,f)$.
\begin{theorem} \label{numu} For $0\leq j\leq n$, we have
$\tilde{\mu}'^{({j})}\cdot\ol{\mu}^{(j)}=\nabla_{j}$. Hence if $
{\mu}^{({j})}_0=0$ then $({\mu}_2^{({j})})_0\neq 0$ and similarly for $({\mu}^{({j})}_2)_0$.
\end{theorem}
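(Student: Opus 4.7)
The plan is to prove the identity $\tilde{\mu}'^{(j)}\cdot\ol{\mu}^{(j)}=\nabla_j$ by induction on $j$, following exactly the case structure used to recursively define $\ol{\mu}^{(j)}$ (Theorem \ref{bimrt}, Corollary \ref{olnoindices}) and $\nabla_j$ (Definition \ref{nabla}). The mechanism powering the induction will be the \emph{alternating property} of the pairing: for any $\ol{g}=(g,g_2)\in D[x]^2$, $\tilde{g}\cdot\ol{g}=-g_2 g+gg_2=0$, and more generally $\tilde{g}\cdot\ol{h}=-\tilde{h}\cdot\ol{g}$. This one line will absorb two of the four terms that appear when I expand each recursive step.

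For the base case $j=0$, I will simply plug in $\ol{\mu}^{(0)}=(1,0)$ and $\ol{\mu}'^{(0)}=\ol{\mu}^{(-1)}=(\varepsilon,-1)$ to get $\tilde{\mu}'^{(0)}\cdot\ol{\mu}^{(0)}=(1,\varepsilon)\cdot(1,0)=1=\nabla_0$. Suppose the identity holds at $j-1$. If $\Delta_j=0$, then Corollary \ref{olnoindices} gives $\ol{\mu}^{(j)}=\ol{\mu}^{(j-1)}$, $\ol{\mu}'^{(j)}=\ol{\mu}'^{(j-1)}$, and $\nabla_j=\nabla_{j-1}$, so nothing to check.

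Suppose $\Delta_j\neq 0$. If $e:=\ee_{j-1}\leq 0$, then $\ol{\mu}'^{(j)}=\ol{\mu}'^{(j-1)}$ while $\ol{\mu}^{(j)}=\Delta'_j\cdot\ol{\mu}^{(j-1)}-\Delta_j\cdot x^{-e}\ol{\mu}'^{(j-1)}$, so by bilinearity
\[
\tilde{\mu}'^{(j)}\cdot\ol{\mu}^{(j)}=\Delta'_j\bigl(\tilde{\mu}'^{(j-1)}\cdot\ol{\mu}^{(j-1)}\bigr)-\Delta_j x^{-e}\bigl(\tilde{\mu}'^{(j-1)}\cdot\ol{\mu}'^{(j-1)}\bigr).
\]
The first bracket is $\nabla_{j-1}$ by induction and the second is $0$ by the alternating property, giving $\Delta'_j\nabla_{j-1}=\nabla_j$. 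If $e>0$, then $\ol{\mu}'^{(j)}=\ol{\mu}^{(j-1)}$ and the analogous expansion produces a term $\tilde{\mu}^{(j-1)}\cdot\ol{\mu}^{(j-1)}=0$ together with $-\Delta_j(\tilde{\mu}^{(j-1)}\cdot\ol{\mu}'^{(j-1)})=+\Delta_j(\tilde{\mu}'^{(j-1)}\cdot\ol{\mu}^{(j-1)})=\Delta_j\nabla_{j-1}=\nabla_j$, using anticommutativity and the inductive hypothesis.

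For the ``hence'' clause, evaluate the resulting scalar identity $-\mu_2'^{(j)}\mu^{(j)}+\mu'^{(j)}\mu_2^{(j)}=\nabla_j$ at $x=0$: since $\nabla_j\in D^\times$, the pair $(\mu^{(j)}_0,(\mu_2^{(j)})_0)$ cannot simultaneously vanish, and symmetrically for $(\mu'^{(j)}_0,(\mu_2'^{(j)})_0)$. I foresee no substantive obstacle; the only care needed is tracking which of the two swap branches of Corollary \ref{olnoindices} is in force at each step so that the alternating property is applied to the correct pair.
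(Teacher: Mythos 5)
Your proposal is correct and follows essentially the same route as the paper: induction on $j$ with the same base case $(1,\varepsilon)\cdot(1,0)=1=\nabla_0$, the same split according to $\Delta_j=0$, $e\leq 0$, $e>0$, and the same cancellation (the paper writes out $\mu_2'\mu'-\mu'\mu_2'=0$ and $-\mu_2\mu+\mu\mu_2=0$ explicitly where you invoke the alternating property $\tilde{g}\cdot\ol{g}=0$ and anticommutativity). The deduction of the ``hence'' clause by evaluating the scalar identity at $x=0$ and using $\nabla_j\neq 0$ is also exactly what is intended.
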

\begin{proof} For $j=0$, we have $(1,\varepsilon)\cdot(1,0)=1=\nabla_0$.  Suppose $1\leq j\leq n$ and the result is true for $k\leq j-1$. If $\Delta_j=0$, there is nothing to prove. Otherwise, let    $e=\ee_{j-1}$. If $e\leq 0$ then
\begin{eqnarray*}
\tilde{\mu}'^{(j)}\cdot\ol{\mu}^{(j)}&=&\tilde{\mu}'\cdot\ol{\mu}^{(j)}\\&=&
-\mu_2'
(\Delta'_{j} \cdot\mu-\Delta_j\cdot  x^{-e}\mu')
+\mu'(\Delta'_j \cdot\mu_2-\Delta_j\cdot  x^{-e}\mu_2')\\
&=&\Delta'_j \cdot( -\mu_2'\mu+\mu'\mu_2)
+\Delta_j\cdot  x^{-e}(\mu_2'\mu'-\mu'\mu_2')\\
&=&\Delta'_j\cdot \tilde{\mu}'\cdot\ol{\mu}=\Delta'_j\cdot\nabla_{j-1}=\nabla_{j} 
\end{eqnarray*}
and if $e>0$ then
\begin{eqnarray*}
\tilde{\mu}'^{(j)}\cdot\ol{\mu}^{(j)}&=&\tilde{\mu}\cdot\ol{\mu}^{(j)}\\
&=&-\mu_2
(\Delta'_j\cdot x^{e}\mu-\Delta_j\cdot  \mu')+
\mu(\Delta'_j \cdot x^{e}\mu_2-\Delta_j\cdot  \mu_2')\\
&=&\Delta'_j \cdot x^{e}(-\mu_2\mu+\mu\mu_2)
+\Delta_j\cdot  (-\mu_2'\mu+\mu'\mu_2)\\
&=&\Delta_j\cdot \tilde{\mu}'\cdot\ol{\mu}=\Delta_j\cdot \nabla_{j-1}=\nabla_{j}
\end{eqnarray*}
which completes the induction.
\end{proof}
When $D$ is a field, the identity for ${\mu}^{(n-1)}$ and $\mu^{(n)}$ of Theorem \ref{numu} amounts to a B\'ezout identity. 
The following result was proved differently in \cite[Corollary 3.24]{N95b}.
\begin{corollary} Let $D$ be a factorial domain and $s\in D^n$. Then $\gcd(\ol{\mu}^{(j)})\in D$ for $1\leq j\leq n$.
\end{corollary}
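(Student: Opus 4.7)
The plan is to read this off directly from Theorem \ref{numu}, which is the third identity just proved. Expanding the inner product in that theorem gives
$$\mu'^{(j)}\cdot\mu_2^{(j)} - \mu_2'^{(j)}\cdot\mu^{(j)} = \nabla_j,$$
and by the remark immediately after Definition \ref{nabla}, $\nabla_j\in D^\times$, i.e.\ a nonzero scalar.

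From here the argument is essentially one line: any common divisor $d\in D[x]$ of $\mu^{(j)}$ and $\mu_2^{(j)}$ must divide the left-hand side of the displayed identity, hence divides $\nabla_j\in D^\times$. Since $D$ is factorial, $D[x]$ is factorial too, and any divisor of a nonzero constant in $D[x]$ has degree $0$, i.e.\ lies in $D$. Applying this to a $\gcd$ (which exists in $D[x]$ because $D[x]$ is factorial) yields $\gcd(\mu^{(j)},\mu_2^{(j)})\in D$, which by definition is $\gcd(\ol{\mu}^{(j)})\in D$.

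There is no real obstacle; the only thing worth pausing on is the trivial edge case $\mu_2^{(j)}=0$, which occurs (for instance) when $s^{(j)}$ is the zero sequence, but there $\gcd(\mu^{(j)},0)=\mu^{(j)}$ is a unit of $D$ by the initial values in Definition \ref{initialvalues}, so the conclusion still holds. One might also compare this short argument with the proof in \cite[Corollary 3.24]{N95b}, which establishes the same fact without the identity of Theorem \ref{numu} and is correspondingly longer; the present approach is conceptually cleaner because the identity packages the minimality of $\ol{\mu}^{(j)}$ into an explicit B\'ezout-type equation.
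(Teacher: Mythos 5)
Your proof is correct and is exactly the argument the paper intends: the corollary is placed as an immediate consequence of Theorem \ref{numu} (the paper only remarks that \cite[Corollary 3.24]{N95b} proved it \emph{differently}), namely that any common divisor of $\mu^{(j)}$ and $\mu_2^{(j)}$ divides $\tilde{\mu}'^{(j)}\cdot\ol{\mu}^{(j)}=\nabla_j\in D\setminus\{0\}$ and hence is a constant since $D[x]$ is factorial. Your treatment of the degenerate case $\mu_2^{(j)}=0$ is fine (and in fact unnecessary, since a divisor of $(\mu^{(j)},0)$ is a divisor of $\mu^{(j)}$ and still divides $\nabla_j$ by the same identity).
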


\begin{corollary} \label{anyMR} Let $D$ be a field and $s\in D^n$.  Then for $0\leq j\leq n$, $\tilde{\mu}'^{({j})}$ are {B\'{e}zout}  coefficients of any minimal realization of $s$.
\end{corollary}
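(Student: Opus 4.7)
My plan is to combine Theorem \ref{MRthm}, which parametrises all minimal realisations of $s$ over a field, with the identity of Theorem \ref{numu}, which already gives the desired B\'ezout relation for the specific minimal realisation $\ol{\mu}^{(j)}$ produced by Algorithm \ref{MRa}. Fix $0\leq j\leq n$ and let $\ol{f}\in\MR(s^{(j)})$. By Theorem \ref{MRthm} there exists $f'\in D[x]$ (either zero or of degree at most $-\ee_j$) such that $\ol{f}=\ol{\mu}^{(j)}+f'\cdot\ol{\mu}'^{(j)}$; in components this reads $f=\mu^{(j)}+f'\mu'^{(j)}$ and $f_2=\mu_2^{(j)}+f'\mu_2'^{(j)}$.

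Next I would expand $\tilde{\mu}'^{(j)}\cdot\ol{f}$ using bilinearity of the scalar pairing $(a,b)\cdot(c,d)=ac+bd$ (which is the form used in the proof of Theorem \ref{numu}). This yields
$$\tilde{\mu}'^{(j)}\cdot\ol{f}=\tilde{\mu}'^{(j)}\cdot\ol{\mu}^{(j)}+f'\cdot\bigl(\tilde{\mu}'^{(j)}\cdot\ol{\mu}'^{(j)}\bigr).$$
The first term equals $\nabla_j$ by Theorem \ref{numu}. For the second term, the key observation is the antisymmetric self-annihilation
$$\tilde{\mu}'^{(j)}\cdot\ol{\mu}'^{(j)}=-\mu_2'^{(j)}\mu'^{(j)}+\mu'^{(j)}\mu_2'^{(j)}=0,$$
which holds by construction of the tilde operation. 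Hence $\tilde{\mu}'^{(j)}\cdot\ol{f}=\nabla_j$ for every minimal realisation $\ol{f}$.

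Finally I would interpret this as a B\'ezout identity. Since $D$ is a field and $\nabla_j\in D^\times$ by Definition \ref{nabla}, the identity $-\mu_2'^{(j)}\cdot f+\mu'^{(j)}\cdot f_2=\nabla_j$ immediately forces $\gcd(f,f_2)$ to be a unit, so after dividing the coefficients by $\nabla_j$ we obtain $(-\mu_2'^{(j)}/\nabla_j)\cdot f+(\mu'^{(j)}/\nabla_j)\cdot f_2=1$, which is the B\'ezout identity for $\ol{f}$ up to the unit $\nabla_j$. No step here is technically difficult; the only subtlety is keeping straight that the pairing appearing in Theorem \ref{numu} is the scalar dot product rather than the componentwise product of Section 2.1, but this is already clear from the computation in the proof of Theorem \ref{numu}.
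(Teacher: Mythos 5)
Your proof is correct and is essentially identical to the paper's: both invoke Theorem \ref{MRthm} to write an arbitrary minimal realisation as $\ol{\mu}^{(j)}+f'\cdot\ol{\mu}'^{(j)}$, then use bilinearity together with the self-annihilation $\tilde{\mu}'^{(j)}\cdot\ol{\mu}'^{(j)}=0$ and Theorem \ref{numu} to conclude $\tilde{\mu}'^{(j)}\cdot\ol{f}=\nabla_j$. The closing remark about normalising by the unit $\nabla_j$ is a harmless addition not present in the paper.
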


\begin{proof} Let  $\ol{m}$ be a minimal realization of $s$. By Theorem \ref{MRthm}, $\ol{m}=\ol{\mu}^{(n)}+m'\ol{\mu}'^{(n)}=\ol{\mu}+m'\ol{\mu}'$ say, where $\deg(m')\leq -\ee_n$. Now $\tilde{\mu}'\cdot\ol{\mu}'=0$ gives $$\tilde{\mu}'\cdot\ol{m}=\tilde{\mu}'\cdot\ol{\mu}+m'\tilde{\mu}'\cdot\ol{\mu}'=\tilde{\mu}'\cdot\ol{\mu}=\nabla_{n}.$$
\end{proof}
\begin{proposition} \label{xea} Let $D$ be a field, $s\in D^n$ and B\'ezout coefficients $\ol{a}$ be computed for $\ol{\mu}^{(n)}$ by the extended Euclidean algorithm. Then
$\nabla_{n}\cdot\ol{a}=\tilde{\mu}'^{(n)}$. 
\end{proposition}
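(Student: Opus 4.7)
The plan is to derive Proposition \ref{xea} from two ingredients: Theorem \ref{numu}, which exhibits an explicit B\'ezout pair for $\ol{\mu}^{(n)}$, and the classical uniqueness of the extended Euclidean output among all B\'ezout pairs, characterised by small-degree cofactors.

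First I would observe that Theorem \ref{numu} gives $\tilde{\mu}'^{(n)}\cdot\ol{\mu}^{(n)}=\nabla_n$, and since $D$ is a field with $\nabla_n\in D^\times$, dividing yields $(\tilde{\mu}'^{(n)}/\nabla_n)\cdot\ol{\mu}^{(n)}=1$. In particular $\gcd(\mu^{(n)},\mu_2^{(n)})$ is a unit, so the EEA applied to $(\mu^{(n)},\mu_2^{(n)})$ (noting $\deg\mu^{(n)}>\deg\mu_2^{(n)}$ from the MR definition) returns a pair $\ol{a}$ with $\ol{a}\cdot\ol{\mu}^{(n)}=1$. Any two B\'ezout pairs for value $1$ differ by a $D[x]$-multiple of $(-\mu_2^{(n)},\mu^{(n)})$, and the EEA output is the unique such pair whose first component has degree strictly less than $\deg\mu_2^{(n)}$ (equivalently, second component of degree $<\deg\mu^{(n)}$).

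Next I would verify that $\ol{b}:=\tilde{\mu}'^{(n)}/\nabla_n=(-\mu_2'^{(n)}/\nabla_n,\,\mu'^{(n)}/\nabla_n)$ meets these bounds. Setting $-m=\vv(\ul{s})$, the definition of $\mu_2$ gives $\deg\mu_2^{(n)}=\LC_n-m$ and, when $n'\geq m$, $\deg\mu_2'^{(n)}=\LC'_n-m$ with the same $m$, because the leading zeros of $s^{(n')}$ and of $s$ coincide. Both required inequalities then reduce to the single claim $\LC'_n<\LC_n$. This strict inequality I would prove by induction on $n$ using Corollary \ref{noindices}: when $\Delta_n=0$, or when $\Delta_n\neq 0$ and $\ee_{n-1}\leq 0$, both $\LC_n=\LC_{n-1}$ and $\LC'_n=\LC'_{n-1}$ are inherited from step $n-1$ and the inductive hypothesis applies; when $\Delta_n\neq 0$ and $\ee_{n-1}>0$, we have $\mu'^{(n)}=\mu^{(n-1)}$ and Theorem \ref{bit}(i) gives $\LC_n=\ee_{n-1}+\LC_{n-1}>\LC_{n-1}=\LC'_n$.

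The main obstacle will be the degenerate situations where $\mu_2'^{(n)}=0$ (so $n'<m$) or $\mu_2^{(n)}=0$ (so $s=0$), which need to be checked directly, using the convention $\deg 0=-\infty$ and noting that the EEA in such cases returns the trivial pair $(0,1)$ or similar, which agrees with $\ol{b}$ by direct inspection. Once these edge cases and the strict bound $\LC'_n<\LC_n$ are settled, $\ol{b}$ satisfies the defining degree constraints of the EEA output, the uniqueness recalled above forces $\ol{a}=\ol{b}$, and clearing $\nabla_n$ yields $\nabla_n\cdot\ol{a}=\tilde{\mu}'^{(n)}$ as required.
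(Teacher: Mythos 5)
Your proposal is correct and rests on the same mechanism as the paper's own proof: both the EEA output $\ol{a}$ and $\nabla_n^{-1}\tilde{\mu}'^{(n)}$ are B\'ezout pairs for the coprime pair $\ol{\mu}^{(n)}$, and the degree bound on the second components forces them to coincide. The paper gets there more directly --- subtract the two identities, use $\gcd(\ol{\mu}^{(n)})=1$ to conclude $\mu^{(n)}\mid(a_2-\nabla_n^{-1}\mu'^{(n)})$, and kill this factor by degree --- so it never needs the bound on the first components, and the strict inequality $\LC'_n<\LC_n$ that you establish by induction is immediate from Corollary \ref{LL'}, since $\LC_n-\LC'_n=n'+1-2\LC'_n=\ee_{n'}>0$ once a jump has occurred.
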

\begin{proof} Put $\ol{a}=(a,a_2)$, $\ol{\mu}=\ol{\mu}^{(n)}$, $\ol{\mu}'=\ol{\mu}'^{(n)}$ and $\nabla=\nabla_{n}$. We have $\ol{a}\cdot\ol{\mu}=\gcd(\ol{\mu})=1=\nabla^{-1}\cdot\tilde{\mu}'\cdot\ol{\mu}$.
Hence $(a+\nabla^{-1}\mu_2')\mu=-
(a_2-\nabla^{-1}\mu')\mu_2$. If  $a_2-\nabla^{-1}\mu_2=0$, we are done. Otherwise $\mu|(a_2-\nabla^{-1}\mu')$ since $\gcd(\ol{\mu})=1$. This is impossible since $\deg(a_2-\nabla^{-1}\mu')<\deg(\mu)$. Thus
$\ol{a}=\nabla^{-1}\tilde{\mu}'$.
\end{proof}
We can also apply Theorem \ref{numu} to linear recurring sequences.
\begin{proposition} \label{lrs} Let $D$ be a principal ideal domain and $s\in D^n$ be a linear recurring sequence with
 minimal polynomial $\mu$. There are explicit $\ol{f}\in D[x]^2$ and $\nabla\in D\setminus\{0\}$ such that $\ol{f}\cdot\ol{\mu}=\nabla$.
\end{proposition}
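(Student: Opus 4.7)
The plan is to reduce Proposition~\ref{lrs} to Theorem~\ref{numu} by running Algorithm~\ref{MRa} on the finite sequence $s$. First I would execute the algorithm to obtain the minimal realizations $\ol{\mu}^{(n)}=(\mu^{(n)},\mu_2^{(n)})$ and $\ol{\mu}'^{(n)}$, along with the product-of-discrepancies scalar $\nabla_n\in D\setminus\{0\}$ of Definition~\ref{nabla}. By construction $\mu^{(n)}\in\Min(s)$, and Theorem~\ref{numu} yields the identity
$$\tilde{\mu}'^{(n)}\cdot\ol{\mu}^{(n)}=-\mu_2'^{(n)}\,\mu^{(n)}+\mu'^{(n)}\,\mu_2^{(n)}=\nabla_n,$$
which already has the desired form $\ol{f}\cdot\ol{\mu}^{(n)}=\nabla$ with explicit coefficients.

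The remaining step is to pass from the algorithmic $\mu^{(n)}$ to the given minimal polynomial $\mu$. Here I would use both hypotheses together: since $s$ is a linear recurring sequence, its extension in $x^{-1}D[[x^{-1}]]$ is a torsion element whose annihilator in $D[x]$ is a non-zero ideal; and since $D$ is a PID, this ideal is principal. Hence all minimum-degree annihilators of $s$ are $D^\times$-associate, so $\mu=c\cdot\mu^{(n)}$ for some $c\in D^\times$. Because the polynomial-part map $f\mapsto f_2$ is $D$-linear (from $(cf)\cdot\ul{s}=c\cdot(f\cdot\ul{s})$ in $D[x,x^{-1}]$), we also obtain $\mu_2=c\cdot\mu_2^{(n)}$. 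Substituting gives $\tilde{\mu}'^{(n)}\cdot\ol{\mu}=c\cdot\nabla_n$, so I would take $\ol{f}:=\tilde{\mu}'^{(n)}\in D[x]^2$ and $\nabla:=c\cdot\nabla_n\in D\setminus\{0\}$.

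The main obstacle is rigorously justifying that $\mu$ and $\mu^{(n)}$ are $D^\times$-associate. Both hypotheses are essential here: the linear-recurring-sequence hypothesis lifts $\Ann(s)$ (for the finite sequence) to a genuine ideal of $D[x]$ via the infinite extension, and the PID hypothesis guarantees principality and hence uniqueness of a generator up to units. A fallback, should this step prove awkward, would be to normalize $\mu^{(n)}$ at its leading coefficient — which must be a unit in $D$ since the linear-recurrence hypothesis supplies a monic annihilator of the same degree — and then invoke Theorem~\ref{numu} with this normalized associate of $\mu$ in place of $\mu^{(n)}$.
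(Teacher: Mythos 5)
Your proposal takes essentially the same route as the paper: the paper's entire proof is the single line ``Apply Theorem \ref{numu} to $s^{(2\deg(\mu))}$'', so both arguments reduce the statement to Theorem \ref{numu}, and both must (silently, in the paper's case) identify the recursively computed minimal realisation with the given $\ol{\mu}$ up to a unit of $D$. The only substantive difference is that the paper truncates at $2\deg(\mu)$ terms while you run the recursion on all $n$ terms before reconciling. One step of your reconciliation is not valid as written: from ``$\Ann(s)$ is a non-zero ideal of $D[x]$'' and ``$D$ is a PID'' you cannot conclude that the ideal is principal, because $D[x]$ is not a principal ideal domain when $D$ is not a field (e.g. $(2,x)\subseteq\Z[x]$). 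The conclusion you need is still true, but it rests on $D[x]$ being factorial --- via Gauss's lemma the annihilator ideal of a linear recurring sequence over a factorial domain is generated by the primitive form of the minimal polynomial over the fraction field --- or, more in the spirit of the paper, on the uniqueness of the minimal realisation for $n\geq 2\LC$ that is invoked in Theorem \ref{newBIt} and the Appendix. Your fallback via normalising the leading coefficient is the safer repair and is closest to what the paper leaves implicit.
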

\begin{proof} Apply Theorem \ref{numu} to $s^{(2\deg(\mu))}$.
\end{proof}

\subsection{The Iterative Algorithm}

The following extension of Algorithm \ref{MRa} is immediately implied by Theorem \ref{numu}. 
\begin{algorithm}   \label{MRplusmult}\  
\begin{tabbing}
Input: \ \ \=$n\geq 1$ and $s=(s_1,\ldots,s_{n})\in D^n$.\\
Output: \>$\ol{\mu}\in\MR(s)$ and the MR identities  of $s$.\\\\
\{$\ol{\mu}'  :=(\varepsilon,-1)$;\ $\Delta':=1$;\  $e := 1;$\ $\ol{\mu}  :=  (1,0)$;\\\\

$\nabla:= 1$;\\
{\tt FOR} \= $j = 1$ {\tt TO }$n$\\
    \> \{$\Delta    :=  \sum_{k=0}^{\frac{j-e}{2}} \mu_k\   s_{k+\frac{j+e}{2}};$ \\
   \> {\tt IF} $\Delta  \neq  0$ {\tt THEN }\{{\tt IF} $e\leq 0$  \=   						{\tt THEN} \{\=$\ol{\mu} :=  \Delta'\cdot \ol{\mu}-\Delta\cdot  x^{-e}\ \ol{\mu}'$;\\ 
 \> \>  \> $\nabla:=\Delta'\cdot \nabla$\}\\
  \>                 \>    {\tt ELSE} \{\=$\ol{t}:=\ol{\mu}$;\\
  \>                 \>           \> $\ol{\mu}:=   \Delta'\cdot x^e\ \ol{\mu}-\Delta\cdot \ol{\mu}'$;\\
  \>                 \>            \> $\ol{\mu}' :=\ol{t};\,\Delta':=\Delta$;\\
  \>                 \>           \> $\nabla:=\Delta\cdot\nabla;$\\ 
  \>                 \>           \> $e := -e$\}\}\\ 
  \> $e  := e+1$;\}\\
{\tt RETURN} $(\ol{\mu},\tilde{\mu}',\nabla)$;\}
\end{tabbing}
\end{algorithm}
 We could also compute $\nabla_j$ at the end of iteration as $\ol{f}\cdot\ol{\mu}^{(j)}$, but this would increase the complexity of the algorithm.

 \begin{table}   \label{shortexM}
\caption{Theorem \ref{numu} with  $s=(1,0,1,1,0,1)\in \mathrm{GF}(2)^{6}$.}
\begin{center}
\begin{tabular}{|c|r|l|l|l|l|l|l|}\hline
$j$ & $e$  & $\Delta$  &$\ol{\mu}$    & $\ol{\mu}'$ &$\tilde{\mu}'$ \\\hline\hline
 $1$   &$1$ &$1$ &$(x,1)$ & $(1,0)$   &$(0,1)$\\\hline
$2$    & $0$  &$0$  &$(x,1)$& $(1,0)$  &$(0,1)$\\\hline
$3$   & $1$   &$1$ &$(x^2+1,x)$ &$(x,1)$ &$(1,x)$ \\\hline
$4$ &$0$ &$1$ &$(x^2+x+1,x+1)$ & $(x,1)$ &$(1,x)$\\\hline
$5$ & $1$ &$0$ &$(x^2+x+1,x+1)$ &$(x,1)$ &$(1,x)$\\\hline
$6$   & $2$ &$0$  &$(x^2+x+1,x+1)$& $(x,1)$ &$(1,x)$\\\hline
\end{tabular}
\end{center}
\end{table}

\begin{example} \label{Bezoutex} Let $s=(1,0,1,1,0,1)\in \mathrm{GF}(2)^{6}$.
Then $$\ol{f}^{(3)}\cdot\ol{\mu}^{(3)}=1\cdot (x^2+1)+x\cdot x=1$$
$$\ol{f}^{(j)}\cdot\ol{\mu}^{(j)}=1\cdot (x^2+x+1)+x\cdot(x+1)=1\mbox{ for }4\leq j\leq 6.$$
\end{example}

\section{Perfect Linear-Complexity Profile}
The following generalisation of a definition from \cite{Rueppel} to any domain $D$ makes sense by Theorem \ref{bit}.
 \subsection{Basics}
\bd A sequence $s\in D^n$ has a {\em perfect linear-complexity profile (PLCP)} if   $\LC_j = \lfloor \frac{j+1}{2}\rfloor$ for $1\leq j\leq n$.
\ed
It is easy to see that for $D=\F_2$, the sequences of length 1 to 4 with a PLCP are $(1)$, $(1,s_2)$, $(1,1,0), (1,0,1)$ and $(1,1,0,s_4), (1,0,1,s_4)$. See Table \ref{plcp}.2 for the corresponding $\ol{\mu}^{(j)}$, where $\Delta_j=\Delta(\mu^{(j-1)})$.
Recall that for any sequence, $\mu^{(0)}=1$ and $e_0=1$.

\bp \label{basictfae}\tfae

(i) $s$ has a PLCP

(ii) for $1\leq j\leq n$
$$e_j= \left\{\begin{array}{rl}

         1& \mbox{if } j \mbox{ is even}\\
	0 &\mbox{otherwise,}
 \end{array}
\right. $$

(iii) $\Delta_{j}\neq 0$ for all odd $j$, $1\leq j\leq n$

(iv) $\ol{\mu}^{(1)}=(x+\varepsilon,1)$ and for $2\leq j\leq n$
$$\ol{\mu}^{(j)}= \left\{\begin{array}{ll}
        \Delta_{j-1}\cdot \ol{\mu}^{(j-1)}-\Delta_j\cdot\ol{\mu}^{(j-2)}&\mbox{ if } j \mbox{ is even}\\\\
\Delta_{j-2}\cdot x\ol{\mu}^{(j-1)}-\Delta_{j}\cdot\ol{\mu}^{(j-3)}& \mbox{ otherwise.}
 \end{array}
\right. $$
\ep
\bpr
(i) $\Leftrightarrow$ (ii): Easy consequence of the definitions.

(i) $\Rightarrow$ (iii): If $j\leq n+1$ is odd then $\Delta_j\neq 0$, for otherwise $\frac{j-1}{2}+1=\frac{j+1}{2}=\LC_{j}=\LC_{j-1}=\frac{j-1}{2}$. 

(iii) $\Rightarrow$ (i): Let  $\Delta_{j}\neq 0$ for all odd $j$, $1\leq j\leq n+1$. Then $s_1\neq 0$, $\LC_1=1$ and $\ee_1=0$. If $\Delta_2=0$, then $\LC_2=\LC_1=1$, otherwise $\LC_2=\max\{\ee_1,0\}+1=1$, so that $\LC_2$ is as required. Suppose that $j\leq n$ is odd and $\LC_{k}=\lfloor\frac{k+1}{2}\rfloor$ for all $k$, $1\leq k\leq j-1$. We have $\LC_j=j-\LC_{j-1}=j-\frac{j-1}{2}=\lfloor\frac{j+1}{2}\rfloor$. If $j=n+1$, we are done. Otherwise,  if $\Delta_{j+1}=0$, we have $\LC_{j+1}=\LC_j=\lfloor\frac{j+1}{2}\rfloor=\lfloor\frac{j+2}{2}\rfloor$, whereas if $\Delta_{j+1}\neq 0$,
$\LC_{j+1}=j+1-\LC_j=j+1-\lfloor\frac{j+1}{2}\rfloor=\lfloor\frac{j+2}{2}\rfloor$.
(ii) $\Ra$ (iv):  We have 
$$\ol{\mu}^{(j)}=  \Delta'_{j}\cdot x^{\max\{\ee_{j-1},0\}}\ \ol{\mu}^{(j-1)}-\Delta_j\cdot x^{\max\{-\ee_{j-1},0\}}\ \ol{\mu}'^{(j-1)}$$
 where $\Delta_j$ may be zero and $\ee_{j-1}$ is as usual.
If $j$ is even then $\Delta_{j-1}\neq 0$ and $\ee_{j-2}=1$, so $i=j-2$; if $j$ is odd then $e_{j-2}=0$ and $e_{j-3}=1$ so $(j-2)'=j-3$. Inserting  $\ee_{j-1}$ now gives the formulae for $\ol{\mu}$. (Note that $(j-1)'+1$ is always odd, so that $\Delta'_j\neq 0$.)

(iv) $\Ra$ (ii): We have $\LC_1=1$, $\LC_j=\LC_{j-1}$ if $j$ is even and $\LC_j=\LC_{j-1}+1$ if $i$ is odd. Thus if $j$ is odd,
$\ee_j=j+1-2\LC_j=j+1-2(\LC_{j-1}+1)=j+1-2\LC_{j-1}=\ee_{j-1}+1$. Applying this inductively gives (ii).
\epr
\brs
(i) We applied Theorem \ref{bimrt} to $s\in D^n$, not to $(\Delta_1,\ldots,\Delta_{n})\in D^n$ as stated in \cite{Wangpdf}. 

(ii) If $s$ is an infinite linear recurring sequence [i.e. $\ul{s}\in D[[x^{-1}]]$ is a rational function $g/f$ with $\deg(g)<\deg(f)$] then $\lim_{n\ra\infty}\ee_{n}=\infty$ whereas if $s$ has a PLCP, trivially $\ul{\lim}_{n\ra\infty}\ee_{n}=0$, $\ol{\lim}_{n\ra\infty}\ee_{n}=1$ but $\lim_{n\ra\infty}\ee_{n}$ does not exist. So linear recurring sequences never have PLCP.

(iii) For  an infinite sequence $s$, Proposition \ref{basictfae} is consistent with \cite{Nied86b}. Let  $$\mathrm{K}(\ul{s})=\sup_{n\geq 1}\{\deg(A_n)\}$$  as in \cite{Nied86b}. Then if $\ul{s}$ is not a rational function and has partial quotients $A_i$, Theorem 1 of \cite{Nied86b}
implies that for all $n$, $1-\mathrm{K}(\ul{s})\leq \ee_n\leq \mathrm{K}(\ul{s})$ and 
$s$ has  PLCP if and only if $\ul{s}$ is not a rational function and $\mathrm{K}(\ul{s})=1$ \cite[Theorem 3]{Nied86b}. 
\ers
\bp\label{count} If $D=\F_q$, then $|\{s\in D^n:\ s \mbox{ has  PLCP}\}|= (q-1)^{\lceil \frac{n}{2}\rceil}q^{\lfloor \frac{n}{2}\rfloor}$, and if $|D|=\infty$ then $\{s\in D^n:\ s \mbox{ has  PLCP}\}$ is infinite.
\ep
 \bpr A sequence $s$ determines $(\Delta_1,\ldots,\Delta_{n})\in D^n$ uniquely, and conversely. Thus the result follows from Proposition \ref{basictfae}(iii).\epr
 \begin{table}   \label{plcp}
\caption{$\ol{\mu}^{(j)}$, ${\sigma}^{(j)}\in \F_2[x]$ for $0\leq i\leq 4$.}
\begin{center}
\begin{tabular}{|r|l|l|}\hline
$j$ & $\ol{\mu}^{(j)}$&$\sigma^{(j)}$\\\hline\hline
$0$ & $(1,0)$&$1$\\\hline
$1$ & $(x,1)$&$x+1$\\\hline
$2$ & $(x+\Delta_1,1)$&$(\Delta_1+1)x+1$\\\hline
$3$ & $(x^2+\Delta_1 x+1,x)$&$(\Delta_1+1)x^3+x+1$\\\hline
$4$ & $(x^2+(\Delta_1+\Delta_3)x+1,x+\Delta_3)$&$(\Delta_1+1)x^3+(\Delta_1\Delta_3+1)x+1$\\\hline
\end{tabular}
\end{center}
\end{table}
It follows that without loss of generality, we may assume that $n$ is odd.
It is easy to see that for odd $n$, $\ol{\mu}^{(n)}+c\cdot \ol{\mu}'^{(n)}=\ol{\mu}^{(n)}+c\cdot \ol{\mu}^{(n-1)}\in \mathrm{MR}(s)$ for any $c\in D$.

\bc Let $D$ be a field and $\ol{\mu}^{(j)}$  and $\Delta_j=\Delta_j(\mu^{(j-1)})$ be as in Theorem \ref{MRthm} for $1\leq j\leq n$. If $s$ has PLCP, then 
$$\mathrm{MR}(s^{(j)})= \left\{\begin{array}{ll}
\{\ol{\mu}^{(j-1)}-\frac{\Delta_j}{\Delta_{j-1}}\cdot\ol{\mu}^{(j-2)}\} 
            &  \mbox{ if } j \mbox{ is even}\\\\
\{(x+c)\cdot\ol{\mu}^{(j-1)}-\frac{\Delta_j}{\Delta_{j-2}}\cdot\ol{\mu}^{(j-3)}:\ c\in D\}&\mbox{ otherwise.}
 \end{array}
\right. $$
\ec

\bpr If $j$ is even, then $s^{(j)}$ has a unique monic minimal realisation by Theorem \ref{MRthm}. Similarly if $j$ is odd,  any monic minimal realisation of $s^{(j)}$ is 
$$\ol{\mu}^{(j)}+c\cdot\ol{\mu}'^{(j)}
=(x\ol{\mu}^{(j-1)}-\frac{\Delta_j}{\Delta_{j-2}}\cdot\ol{\mu}^{(j-3)})+c\cdot\ol{\mu}^{(j-1)}=(x+c)\cdot\ol{\mu}^{(j-1)}-\frac{\Delta_j}{\Delta_{j-2}}\cdot\ol{\mu}^{(j-3)}$$ for some $c\in D$.
\epr
\bc \label{plcpcoeffs}
If $s$ has PLCP then for $2\leq j\leq n$, then $\ol{g}^{(j)}\cdot\ol{\mu}^{(j)}=\nabla_{j}$ where
 $$\ol{g}^{(j)}=\left\{\begin{array}{ll}
  (-\mu_2^{(j-2)},\mu^{(j-2)}) & \mbox{if } j 	      \mbox{ is even}\\\\
(-\mu_2^{(j-1)},\mu^{(j-1)}) &\mbox{otherwise}
\end{array}
\right.
$$
and 
$$\nabla_{j}=\left\{\begin{array}{ll}
 (\Delta_{j-1}\cdot\Delta_{j-3}\cdots \Delta_1)^2  & \mbox{if } j 	      \mbox{ is even}\\\\
 \Delta_j\cdot (\Delta_{j-2}\cdot\Delta_{j-4}\cdots \Delta_1)^2 &\mbox{otherwise.}
\end{array}
\right.
$$

\ec
\bpr We have $g^{(j)}=\tilde{\mu}'^{(j)}$ and $j'=j-2$ if $j$ is even and $j'=j-1$ if $j$ is odd. To verify the values for $\nabla_{j}$, we have $e_j=1$ if $j$ is even and $e_j=0$ if $j$ is odd. Thus if $j\geq 2$ is even, $\nabla_{j}=\Delta_{j-1}\cdot\nabla_{j-1}$ and if $j$ is odd,  $\nabla_{j}=\Delta_{j}\cdot\nabla_{j-1}$, which easily yields the stated formulae.
\epr
With $D=\F_2$ for example, $(0,1)\cdot(x+\Delta_1,1)=1$ and 
 $(1,x+\Delta_1)\cdot(x^2+\Delta_1 x+1,x)
=(x^2+\Delta_1 x+1) +x(x+\Delta_1 )=1$.

\subsection{Binary Sequences}
Here we give a simpler proof of a theorem of  Wang and Massey \cite{Wangpdf} on binary sequences via B\'ezout coefficients. Thus $D=\F_2$ throughout this subsection.
Let us call $s\in D^n$ {\em stable} if $s_1=1$ and for even $j$, $2\leq j\leq n$, $s_{j+1}=s_j+s_{\frac{j}{2}}$.
It is convenient to use two auxiliary functions
$$\ul{t}=\ul{t}(s)=\ul{s}^2+(x+1)\ul{s}+1$$ and
$$\sigma^{(j)}=\nu^{(j)2}+(x+1)\nu^{(j)}\mu^{(j)}+\mu^{(j)2}$$ 
where $\ol{\mu}^{(j)}=(\mu^{(j)},\nu^{(j)})$ is as in Proposition
\ref{basictfae}(iv); $\ul{t}$ is also used in the continued fraction treatment of this theorem in  \cite{Nied87} (for infinite sequences).  First we have the following simple consequence of the definitions.
\bp $\mu^{(n)2}\ \ul{t}=\sigma^{(n)}+F$ where $\vv(F)\leq 0$ if $n$ is even and $\vv(F)\leq 1$ if $n$ is odd.
\ep
\bl \label{ult} 
Let $n$ be odd. \tfae

(i) $s$ is stable

(ii) $t_j=0$ for $j$ even, $0\leq j\leq n$.
\el
\bpr We have $t_0=s_1+1$ and for all even $j$, $2\leq j\leq {n-1}$, we have ${t}_j=s_j+s_{j+1}+s_{\frac{j}{2}}$.
\epr
The innocuous-looking part (ii) of the following lemma is essential to our proof.
\bl (Cf. \cite[Lemma 1]{Wangpdf})\label{sigma} If $s$ has PLCP then  (i) for $2\leq j\leq n$,
 $$\sigma^{(j)}=\left\{\begin{array}{ll}
\sigma^{(j-1)}+\Delta_i\cdot \sigma^{(j-2)}
+\Delta_j\cdot (x+1)& \mbox{if } j 	      \mbox{ is even}\\\\
x^2\sigma^{(j-1)}+\sigma^{(j-3)}+x(x+1) &\mbox{otherwise}
\end{array}
\right.
$$ 
(ii) for $0\leq j\leq n$, $\sigma^{(j)}_0=1$ and $\sigma^{(j)}_2=0$.
\el
\bpr Put $\nu=\mu_2$. 
If $j$ is odd, $\ol{\mu}^{(j)}=x\ol{\mu}^{(j-1)}+\ol{\mu}^{(j-3)}$, so
\begin{eqnarray*}\sigma^{(i)}&=&[x^2\nu^{(j-1)2}+\nu^{(j-3)2}]+(x+1)[x\mu^{(j-1)}+\mu^{(j-3)}][x\nu^{(j-1)}+\nu^{(j-3)}]\\&&\ \ +[x^2\mu^{(j-1)2}+\mu^{(j-3)2}]\\
&=&x^2\sigma^{(j-1)}+[\nu^{(j-3)2}+\mu^{(j-3)2}]+(x+1)[x\mu^{(j-1)}\nu^{(j-3)}+x\mu^{(j-3)}\nu^{(j-1)}+\mu^{(j-3)}\nu^{(j-3)}]\\
&=&x^2\sigma^{(j-1)}+\sigma^{(j-3)}+x(x+1)[\nu^{(j-1)}\mu^{(j-3)}+\mu^{(j-1)}\nu^{(j-3)}].
\end{eqnarray*}
We have $\ol{\mu}^{(j-1)}=\ol{\mu}^{(j-2)}+\Delta_{j-1}\cdot\ol{\mu}^{(j-3)}$, so 
\begin{eqnarray*}\nu^{(j-1)}\mu^{(j-3)}+\mu^{(j-1)}\nu^{(j-3)}&=&[\nu^{(j-2)}+\Delta_{j-1}\cdot\nu^{(j-3)}]\mu^{(j-3)}
+[\mu^{(j-2)}+\Delta_{j-1}\cdot\mu^{(j-3)}]\nu^{(j-3)}\\
&=&\nu^{(j-2)}\mu^{(j-3)}+\mu^{(j-2)}\nu^{(j-3)}
+2\Delta_{j-1}\cdot\nu^{(j-3)}\mu^{(j-3)}\\
&=&\nu^{(j-2)}\mu^{(j-3)}+\mu^{(j-2)}\nu^{(j-3)}=1
\end{eqnarray*}
by Corollary \ref{plcpcoeffs} and so for $i$ odd,
$\sigma^{(i)}=x^2\sigma^{(j-1)}+\sigma^{(j-3)}+x(x+1)$.
If $i$ is even, $\ol{\mu}^{(i)}=\ol{\mu}^{(j-1)}+\Delta_i\cdot\ol{\mu}^{(j-2)}$ and  Corollary \ref{plcpcoeffs} yields 
\begin{eqnarray*}\sigma^{(i)}&=&[\nu^{(j-1)2}+\Delta_i\cdot\nu^{(j-2)2}]+(x+1)[\nu^{(j-1)}+\Delta_i\cdot\nu^{(j-2)}][\mu^{(j-1)}+\Delta_i\cdot\mu^{(j-2)}]\\&&\ \ +[\mu^{(j-1)2}+\Delta_i\cdot\mu^{(j-2)2}]\\
&=&\sigma^{(j-1)}+\Delta_i\cdot\sigma^{(j-2)}+\Delta_i\cdot(x+1)[\nu^{(j-2)}\mu^{(j-1)}+\mu^{(j-2)}\nu^{(j-1)}]\\
&=&\sigma^{(j-1)}+\Delta_i\cdot\sigma^{(j-2)}+\Delta_i\cdot(x+1)
\end{eqnarray*}
by Corollary \ref{plcpcoeffs}. We omit the simple inductive proof of Part (ii).
\epr
The reader may also check that $\deg(\sigma^{(j)})=
 j-1$ if $j$ is even and $j$ if $\deg(\sigma^{(j)})=j$ is odd, but we will not need this fact.

\bt \label{stable} Let  $s\in D^n$. Then $s$ has PLCP if and only if $s$ is stable.
\et
\bpr It suffices to show that if $n$ is odd and $s$ has PLCP then ${t}_{j}=0$ for even $j$, $0\leq j\leq n-1$.
Put $\mu=\mu^{(j)}$. Inductively, we have
$$\mu^2\cdot(t_1x^{-1}+\cdots+ t_{n-2}x^{2-n}+t_{n-1}x^{1-n}+t_nx^{-n})=\mu^2\cdot\ul{t}=\sigma+F$$
where $\vv(F)\leq 1$. We want to show that $t_{n-1}=0$. The coefficient of $x^2$ in the left-hand side is $\mu_{\frac{n+1}{2}}\cdot t_{n-1}$ and $\sigma^{(n)}_2=0$ by Lemma \ref{sigma}(ii). Since $\mu_{\frac{n+1}{2}}$ is the leading coefficient of $\mu$, we must have $t_{n-1}=0$ . Conversely, as in noted in \cite{Wangpdf}, there are clearly $\lceil \frac{n}{2}\rceil$ stable sequences in $D^n$, so the result follows from Proposition \ref{count}.
\epr
\section{An Algorithm for the Polynomial B\'{e}zout Identity.}

\begin{theorem} \label{newBIt} (Cf. \cite[Corollary 2.7]{FN95}) Let $D$ be a principal ideal domain, $\ol{u}\in D[x]^2$, with $u$ monic and $0\leq \deg(u_2)\leq \deg(u)$. There are explicit $\ol{f}\in D[x]^2$ and $\nabla\in D\setminus\{0\}$ such that  $\ol{f}\cdot\ol{u}=\nabla\cdot\gcd(\ol{u})$.
\end{theorem}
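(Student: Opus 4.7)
The plan is to follow the three-step outline at the end of Section~1. Let $d=\deg(u)$. Since $u$ is monic, $2d$ steps of long division in $D[x^{-1},x]$ produce the truncated Laurent expansion
$$u_2/u=c_0+s_1x^{-1}+\cdots+s_{2d}x^{-2d}+O(x^{-2d-1})$$
with $c_0,s_i\in D$. I set $s=(s_1,\dots,s_{2d})$ and $r=u_2-c_0u\in D[x]$, so $\deg(r)<d$ and $\ul{s}_\infty=r/u\in x^{-1}D[[x^{-1}]]$. (If $r=0$ then $u\mid u_2$, $\gcd(\ol{u})=u$, and $\ol{f}=(1,0)$, $\nabla=1$ work trivially; assume $r\neq 0$ henceforth.) I then run Algorithm~\ref{MRplusmult} on $s$ to obtain $\ol{\mu}^{(2d)}\in\MR(s)$, $\tilde{\mu}'^{(2d)}$ and $\nabla_{2d}\neq 0$ satisfying the identity of Theorem~\ref{numu},
$$-\mu_2'^{(2d)}\mu^{(2d)}+\mu'^{(2d)}\mu_2^{(2d)}=\nabla_{2d}.$$

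The key bridge is the claim that, taking $g=\gcd(\ol{u})$ to be monic (possible since $g\mid u$ and $u$ is monic),
$$\mu^{(2d)}=c_1\cdot u/g,\qquad \mu_2^{(2d)}=c_1\cdot r/g,\qquad c_1\in D\setminus\{0\}.$$
To see this, observe that $u/g$ is monic, annihilates $\ul{s}_\infty=(r/g)/(u/g)$, and $\gcd(u/g,r/g)=1$, so $u/g$ is a minimal polynomial of $\ul{s}_\infty$ over $\mathrm{Frac}(D)$. Because $2d\geq 2\deg(u/g)$, the standard stabilisation argument shows that $u/g$ remains a minimal polynomial of the truncation $s$ itself, so $\deg(\mu^{(2d)})=\deg(u/g)$; uniqueness up to scalar of minimal polynomials in $\mathrm{Frac}(D)[x]$ then gives $\mu^{(2d)}=c_1(u/g)$ with $c_1$ the leading coefficient of $\mu^{(2d)}$, hence in $D\setminus\{0\}$. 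For $\mu_2^{(2d)}$, the realisation bound $\vv(\mu^{(2d)}\cdot\ul{s}-\mu_2^{(2d)})<\deg(\mu^{(2d)})-2d$ together with the truncation error give $\vv(\mu^{(2d)}\cdot\ul{s}_\infty-\mu_2^{(2d)})\leq -(d+1)$; multiplying by $u$ shows the polynomial $\mu^{(2d)}\cdot r-u\cdot\mu_2^{(2d)}$ has negative valuation and so vanishes, and combined with $\mu^{(2d)}=c_1(u/g)$ this forces $\mu_2^{(2d)}=c_1(r/g)$.

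Substituting into the MR identity and multiplying through by $g$ gives
$$\bigl[c_1(-\mu_2'^{(2d)}-c_0\mu'^{(2d)})\bigr]\cdot u+\bigl[c_1\mu'^{(2d)}\bigr]\cdot u_2=\nabla_{2d}\cdot g,$$
so the required explicit pair is $\ol{f}=c_1\cdot(-\mu_2'^{(2d)}-c_0\mu'^{(2d)},\,\mu'^{(2d)})\in D[x]^2$ and $\nabla=\nabla_{2d}\in D\setminus\{0\}$, with $\ol{f}\cdot\ol{u}=\nabla\cdot\gcd(\ol{u})$. The main obstacle will be the bridge step of the middle paragraph: the stabilisation $\LC(s)=\deg(u/g)$ for $n=2d\geq 2\deg(u/g)$ and the passage between minimal polynomials in $D[x]$ and $\mathrm{Frac}(D)[x]$ both rely essentially on $D$ being factorial (PID is more than enough).
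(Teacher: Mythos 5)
Your argument is correct and follows the same three-step strategy as the paper's own proof (truncate $u_2/u$ to $2d$ terms, run Algorithm \ref{MRplusmult} on the truncation, and convert the identity of Theorem \ref{numu} into one for $\ol{u}$), but it justifies the crucial middle step by a genuinely different route. The paper never identifies $\ol{\mu}^{(2d)}$ explicitly: it invokes the principality $\Ann(s)=\mu D[x]$ (valid over a factorial domain since $2d\geq 2\,\LC$, from \cite{N95b}/\cite{FN95}) to write $u=w\mu$, deduces $u_2=w\mu_2$ from the module structure as in Lemma \ref{f2}(i), shows $w=\gcd(\ol{u})$, and then simply multiplies $\ol{f}\cdot\ol{\mu}=\nabla$ through by $w$. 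You instead pin down $\ol{\mu}^{(2d)}=c_1\cdot(u/g,\,r/g)$ directly, via stabilisation of the linear complexity at $n=2d$, uniqueness (up to scalar) of the minimal polynomial over $\mathrm{Frac}(D)$, and a clean valuation argument forcing $\mu^{(2d)}r-u\,\mu_2^{(2d)}=0$; this is the same divisibility fact read in the opposite direction ($\mu=c_1 u/w$ rather than $u=w\mu$). What your route buys is independence from the annihilator-ideal structure theorem of \cite{N95b}; what it costs is a detour through the fraction field and reliance on the stabilisation claim $\LC(s)=\deg(u/g)$, which you correctly flag as the main obstacle but leave unproved --- it does follow from the tools at hand (if $h\in\Ann(s)^\times$ then $(u/g)\cdot(h\cdot\ul{s}_\infty-h_2)$ is a polynomial of valuation at most $\deg(h)-\ell-1<0$, hence zero, so $u/g$ divides $h\cdot(r/g)$ and thus $h$), so this is a presentational gap rather than a mathematical one. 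Your handling of the case $\deg(u_2)=\deg(u)$ via the constant term $c_0$ is equivalent to the paper's preliminary replacement of $u_2$ by $\ell_2u-\ell u_2$.
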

\begin{proof} If $\deg(u_2)=\deg(u)=d$, let $\ell,\ \ell_2$ be the leading coefficients of $u$ and $u_2$ respectively. We replace $u_2$ by $u_2'=\ell_2\cdot u-\ell\cdot u_2$. If $\ol{f}\cdot (u, u_2')=\nabla\cdot\gcd(\ol{u})$, one checks that $(f+\ell_2f_2,-\ell f_2)\cdot \ol{u}=\nabla\cdot\gcd(\ol{u})$. So we can assume that $\deg(u_2)<\deg(u)$.
Let  $t$ be the linear recurring sequence defined by $\ul{t}=\frac{v}{u}\in D[[x^{-1}]]$, $s=t^{(2d)}$  and $\ol{\mu}$ be the unique MR of $s$, with $\ol{f},\nabla$ as in Proposition \ref{lrs}, so that $\ol{f}\ol{\mu}=\nabla$.

We know that $u\in\Ann(s)=\mu D[x]$ and so $u=w\mu$ for some $w\in D[x]$. We show that $w=\gcd(\ol{u})$. For   $u_2=u t=(w\mu) t=w(\mu t)=w\mu_2$ i.e.
$ \ol{u}=w\cdot \ol{\mu}$, so $\gcd(\ol{u})|w$. We also  know that there exists $\ol{a}\in D[x]$ such that $\ol{a}\cdot\ol{u}=\gcd(\ol{u})$. Then $\gcd(\ol{u})= \ol{a}\cdot w\ol{\mu}
=w \ol{a}\cdot \ol{\mu}$, so $w|\gcd(\ol{u})$.
Finally, $\nabla\cdot\gcd(\ol{u})=\nabla \cdot w=w\cdot(\ol{f}\cdot \ol{\mu})=\ol{f}\cdot (w\ol{\mu})=\ol{f}\ol{u}$.
\end{proof}

As in Proposition \ref{xea}, a degree argument shows that $\nabla^{-1}\ol{f}$ agrees with the coefficients found by the extended Euclidean algorithm. Thus Algorithm \ref{newBezouta} will be widely applicable, not just to $\F[x]$, $\F$ a field, but also to $\F[x,y]$ for example, as $\F[x]$ is a principal ideal domain.
\begin{algorithm} \label{newBezouta} (Cf. \cite[Algorithm 4.2]{FN95})
\begin{tabbing}
Input:\ \ \ \= P.I.D. $D$, $\ol{u}\in D[x]^2$, with $u$ monic and $0\leq \deg(u_2)<\deg(u)=d$.\\
Output: \> $\ol{f}\in D[x]^2$ such that  $\ol{f}\cdot\ol{u}=\nabla\cdot\gcd(\ol{u})$.
\end{tabbing}
1. Compute $s=t^{(2d)}$  by $2d$ subtractions of $u$ from $u_2$ in $D[x^{-1},x]$, where $\ul{t}=\frac{u_2}{u}$. 

\noindent 2. Apply Algorithm \ref{MRplusmult} to $s$, giving $\ol{f}\in D[x]^2$  and $\nabla\in D$. 

\noindent 3. Compute  $\ol{f}\cdot\ol{u}$, which is $\nabla\cdot\gcd(\ol{u})$.
\end{algorithm}
Note that step 2 requires at most $5d^2$ multiplications (and at most $3d^2$ multiplications
if $D$ is a field).
\begin{example}  Let $D= \mathrm{GF}(2)$,  $\ol{u}=(x^3+1,x^2+1)$ with $\gcd(\ol{u})=x+1$. Then $2d$ subtractions    of $u$ from $u_2$ in  $D[x^{-1},x]$ gives the generating function $x^{-1}+x^{-3}+x^{-4}+x^{-6}$ of $s$ i.e. 
$s=(1,0,1,1,0,1)$ as in Example \ref{Bezoutex}. We get
$1\cdot(x^3+1)+x\cdot(x^2+1)=x+1$.

\end{example}

 \section{Annihilators Which Do Not Vanish At Zero}

First we establish a lower bound lemma which was stated without proof in \cite{Salagean}. The resulting easy corollary and algorithm do not require any characterisations of minimal polynomials. Theorem \ref{nonzero} (or Proposition \ref{gcd} if $D$ is factorial) allows us to remove a test on $\mu'$. We also generalise Proposition \ref{MRprop} and Theorem \ref{numu} to give another construction of these annihilators (by extending the original sequence by one term) and include minimal polynomial identities.  In the final subsection, we prove a characterisation which was also stated without proof in \cite{Salagean}.
\subsection{The Lower Bound Lemma}
First of all, $s\in D^n$ and we can assume that $s$ is not the all-zero sequence, so that $\LC(s)\geq 1$. It is convenient to write $\Ann^\bullet(s)=\{f\in\Ann(s):\  f_0\neq 0\}$. We seek an $f\in\Ann^\bullet(s)$  such that $\deg(f)=\min\{\deg(g):\ g\in\Ann^\bullet(s)\}$.
To simplify the notation, we write  $\ol{\mu}=\ol{\mu}^{(n)}$,  $e=\ee_n>0$ and $\LC=\LC_n$, each obtained as in Theorem \ref{bimrt}. We also write  $\LC'=\LC_{n'}$,  $e'=\ee_{n'}$,    $\ol{\mu}'=\ol{\mu}^{(n')}$,
 $$
\MR^\bullet(s)=\{(f,f_2):\ f\in\Ann^\bullet(s), \deg(f)\mbox{ is minimal}\},
$$
$\LC^\bullet=\LC^\bullet(s)=\min\{\deg(f):\ f\in \Ann^\bullet(s)\}$ and for $n'\geq 1$, $s'=(s_1,\ldots,s_{n'})$.

If $\mu_0\neq 0$, clearly $\ol{\mu}\in \MR^\bullet(s)$  and evidently $\MR^\bullet(s_1,\ldots,s_{n+1})
\subseteq\MR^\bullet(s_1,\ldots,s_{n})$. 
The case $e\leq 0$ is an easy consequence of Proposition \ref{MRprop}, but the case $e>0$ requires some preparation. We begin with the following easy result.
\begin{proposition}\label{shift} 
Let  $f\in\Ann(s)$ and  $d=\deg(f)$.

(i) If $1\leq k\leq d$ and $x^k|f$ then $f/x^k\in \Ann(s_{k+1},\ldots,s_n)$.

(ii) If $2\leq k\leq n$ and $g\in \Ann(s_{k},\ldots,s_n)$, then  for any $t\in  D^{k-1}$, $x^{k-1}g\in \Ann(t_1,\ldots,t_{k-1},s_k,\ldots,s_n)$.
\end{proposition}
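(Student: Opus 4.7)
The plan is to prove both parts by direct, elementary unwinding of Definition \ref{anndefn}: each statement reduces to a reindexing of the annihilator sum, and the main (and only) task is careful bookkeeping of the index ranges.

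For Part (i), write $h = f/x^k$, so $\deg(h) = d - k$, $f_i = 0$ for $0 \le i < k$, and $f_{i} = h_{i-k}$ for $k \le i \le d$. The hypothesis $f \in \Ann(s)$ gives
$\sum_{i=0}^{d} f_i\, s_{j-d+i} = 0$ for $d+1 \le j \le n$. Dropping the vanishing terms and substituting $m = i-k$, this becomes $\sum_{m=0}^{d-k} h_m\, s_{j-d+m+k} = 0$ for $d+1 \le j \le n$. Setting $J = j-k$, the range becomes $(d-k)+1 \le J \le n-k$, and the entry $s_{j-d+m+k}$ rewrites as $s'_{J-(d-k)+m}$ where $s' = (s_{k+1},\ldots,s_n)$ is indexed from $1$. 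This is exactly the annihilator condition for $h$ on $s'$, giving $h \in \Ann(s_{k+1},\ldots,s_n)$.

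For Part (ii), put $e = \deg(g)$, $F = x^{k-1}g$ and $d = \deg(F) = e + k - 1$. Then $F_i = 0$ for $i < k-1$ and $F_i = g_{i-(k-1)}$ for $i \ge k-1$. Let $r = (t_1, \ldots, t_{k-1}, s_k, \ldots, s_n)$. The required condition $\sum_{i=0}^d F_i\, r_{j-d+i} = 0$ for $d+1 \le j \le n$ collapses, after substituting $l = i-(k-1)$ and using $d+1-k = e+1$, to
$\sum_{l=0}^{e} g_l\, r_{j-e+l} = 0$ for $e+k \le j \le n$.
The crucial observation is that in this range $j - e + l \ge k$ for every $l \ge 0$, so the $t_i$ entries are never touched and $r_{j-e+l} = s_{j-e+l}$. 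Reindexing $J = j - (k-1)$ identifies the remaining sum with the annihilator condition for $g$ applied to $(s_k, \ldots, s_n)$, which holds by hypothesis.

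There is no genuine obstacle here: both parts are purely arithmetic on the indices of the annihilator sum. The only delicate point, and the place where the statement of (ii) really uses the shift by $x^{k-1}$ (and not, say, $x^{k}$ or $x^{k-2}$), is verifying the inequality $j - e + l \ge k$ on the relevant range, which is what guarantees that the prefix $t \in D^{k-1}$ can be chosen arbitrarily without affecting the annihilator relation.
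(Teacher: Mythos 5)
Your proof is correct and takes essentially the same approach as the paper's: both parts are direct verifications of the index condition in Definition \ref{anndefn}, the only difference being that the paper checks the single-shift case in the Laurent-coefficient form $(f\cdot\ul{s})_j$ and then inducts on $k$, whereas you reindex for general $k$ in one step. (A minor quibble with your closing aside: $x^{k}g$ would in fact also annihilate $(t_1,\ldots,t_{k-1},s_k,\ldots,s_n)$ --- the point of the exponent $k-1$ is only that it is the smallest shift that works --- but this does not affect the argument.)
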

\bpr (i) Let $f'=f/x$ and $s'=s_2x^{-1}+\cdots +s_nx^{1-n}$. For $d-n=(d-1)-(n-1)\leq j\leq -1$, 
$(f's')_j=(f\ul{s})_j-s_1f_{j+1}=0$ since $f\in\Ann(s)$  and $j+1\leq 0$. Now induct on $k$. 

(ii) If  $g\in\Ann(s)$ then
for any $t\in  D$, $xg\in\Ann(t,s_1,\ldots,s_n)$: for $d+1-(n+1)=d-n\leq j\leq -1$,
$\left(xg(tx^{-1}+s_1x^{-2}+\cdots+s_nx^{-n-1})\right)_j=g_jt+(g\ul{s})_j=0$ since $j\leq -1$ implies that $g_j=0$ and $g\in\Ann(s)$. The result now follows by induction on $k$.
\epr

\begin{lemma} \label{noreciprocals} Let $n\geq 2$ and $g\in \Ann^\bullet(s)$.

(i) If $h\in \Ann(s_2,\ldots,s_n)\setminus\Ann(s)$, then $\deg(g)\geq n-\deg(h)$.

(ii)  If $f\in\Ann(s)$,  $x|f$ and  $f/x\not\in\Ann(s^{(n-1)})$, then  $\deg(g)\geq n+1-\deg(f)$.
\end{lemma}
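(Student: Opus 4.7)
My strategy is to prove (i) by a direct double-sum (Fubini-type) argument and then deduce (ii) from (i) via Proposition \ref{shift}. The name ``noreciprocals'' in the label suggests avoiding the reciprocal trick (which is saved for the Appendix), so I will work directly with the annihilator relations as written in Definition \ref{anndefn}.

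For (i), set $d = \deg(h)$ and $D' = \deg(g)$, and suppose for contradiction that $D' < n - d$, equivalently $D' + d \le n - 1$. The hypotheses translate into three families of relations in $D$:
\begin{itemize}
\item $g \in \Ann(s)$ gives $\sum_{j=0}^{D'} g_j\, s_{k+j} = 0$ for $1 \le k \le n - D'$;
\item $h \in \Ann(s_2,\ldots,s_n)$ gives $\sum_{i=0}^{d} h_i\, s_{m+i} = 0$ for $2 \le m \le n - d$;
\item $h \notin \Ann(s)$ means the ``missing'' relation $\delta := \sum_{i=0}^{d} h_i\, s_{1+i}$ is nonzero.
\end{itemize}
I would then evaluate the double sum
$$S \;=\; \sum_{i=0}^{d}\sum_{j=0}^{D'} g_j\, h_i\, s_{i+j+1}$$
in two ways. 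Summing over $j$ first, for each fixed $i$ the inner sum is a $g$-relation at index $k = i+1$; since $0 \le i \le d$ and $D'+d \le n-1$ we have $1 \le i+1 \le n - D'$, so every such inner sum vanishes and $S = 0$. Summing over $i$ first, for each fixed $j \ge 1$ the inner sum is an $h$-relation at index $m = j+1$, which lies in $[2, n-d]$ because $D' + d \le n-1$, so it vanishes; for $j = 0$ the inner sum equals $\delta$. Hence $S = g_0 \delta$. Comparing gives $g_0 \delta = 0$ in the domain $D$, contradicting $g_0 \ne 0$ and $\delta \ne 0$. Therefore $D' \ge n - d$, which is (i).

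For (ii), set $h = f/x$, so $\deg(h) = \deg(f) - 1$. Proposition \ref{shift}(i) with $k=1$ yields $h \in \Ann(s_2,\ldots,s_n)$. If $h$ were in $\Ann(s)$, then restricting the defining relations to the range $d+1 \le j \le n-1$ would force $h \in \Ann(s^{(n-1)})$, contradicting the hypothesis; so $h \in \Ann(s_2,\ldots,s_n) \setminus \Ann(s)$. Part (i) then delivers $\deg(g) \ge n - \deg(h) = n + 1 - \deg(f)$.

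The delicate point is the Fubini step: one must verify that the single assumption $D' + d \le n - 1$ simultaneously places every inner $g$-sum inside the range $[1, n - D']$ and every inner $h$-sum (for $j \ge 1$) inside the range $[2, n-d]$, while the exceptional $j = 0$ term produces exactly the nonzero $\delta$. Once these index checks are laid out carefully the contradiction is immediate, and no case split on whether $h_0$ vanishes is needed. A sanity check on the degenerate case where $s$ has many leading zeros (so that $\deg(g)$ is forced to be large) confirms that the inequality is tight and the argument handles all sequences uniformly.
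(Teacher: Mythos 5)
Your proof is correct, and part (ii) coincides exactly with the paper's: reduce to part (i) by applying Proposition \ref{shift}(i) to $h=f/x$ and using the inclusion $\Ann(s)\subseteq\Ann(s^{(n-1)})$ to see $h\notin\Ann(s)$. For part (i) you and the paper are ultimately exploiting the same fact --- commutativity of $g$ and $h$ acting on $\ul{s}$, read off at one strategically chosen coefficient --- and both arguments terminate in the same equation $g_0\cdot\delta=0$ in a domain; but the executions are genuinely different. The paper decomposes $g\cdot\ul{s}=G+cx^{d-n-1}+P$ and $h\cdot\ul{s}=H+ax^{-2}+bx^{-1}+Q$ into Laurent tail, boundary coefficients and polynomial part, cross-multiplies, and extracts the coefficient of $x^{-2}$ from $hP-gQ\in D[x]$, using valuation estimates to kill $(gH)_{-2}$, $(hG)_{-2}$ and $h_{n+1-d}$ under the contradiction hypothesis $\deg(g)+\deg(h)\leq n-1$. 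You replace all of that bookkeeping by the single finite double sum $S=\sum_{i,j}g_jh_is_{i+j+1}$ and a Fubini swap, with the two index verifications ($1\leq i+1\leq n-\deg(g)$ for the inner $g$-sums, $2\leq j+1\leq n-\deg(h)$ for the inner $h$-sums with $j\geq 1$) following directly from the same hypothesis; the unique surviving term is visibly $g_0\delta$, where $\delta=\sum_i h_is_{1+i}$ is precisely the one relation that distinguishes $\Ann(s)$ from $\Ann(s_2,\ldots,s_n)$ --- your translation of the hypothesis on $h$ is the right one. What your version buys is that every summand is a defined entry of $s$ and every index range is checked explicitly, so there is no delicate Laurent indexing to get wrong; what it gives up is the tie-in to the valuation formalism $\vv$ that the paper reuses elsewhere (e.g.\ in the Appendix). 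Your observation that no case split on $h_0$ is needed is also accurate.
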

\begin{proof}  (i) Let $d=\deg(g)$. Then  $(g\ul{s})_j=0$ for $d-n\leq j\leq -1$ and so we can write $g\ul{s}=G+cx^{d-n-1}+P$ where $G\in D[x^{-1}]$, 
$G_j=0$ for $d-n-1\leq j\leq -1$, $c\in D$ and $P\in D[x]$.
Put $s'=(s_2,\ldots,s_n)$ and $e=\deg(h)$. Then we have $(h\ul{s'})_j=0$ for $e-n\leq j\leq -2$ and $a=(h\ul{s'})_{-1}\neq 0$.
As $\ul{s}=x^{-1}\ul{s'}+s_1x^{-1}$, we get
$$(h\ul{s})_j=\left\{   \begin{array}{ll} 
             0      & \mbox{ if } e-n-1\leq j\leq -3\\
             a      & \mbox{ if } j=-2\\
             b=(h\ul{s'})_0+s_1f_0 & \mbox{ if }j=-1
                                  \end{array}
          \right. $$
and so $h\ul{s}=H+ax^{-2}+bx^{-1}+Q$ where
$H\in D[x^{-1}]$, $H_j=0$ for $e-n-1\leq j\leq -1$ and $Q\in D[x]$. This gives 
$hP-gQ=gH-hG+gax^{-2}+gbx^{-1}-hcx^{d-n-1}\in D[x]$
and $$0=(gH)_{-2}-(hG)_{-2}+g_0a+h_{n+1-d}c.$$
If $d+e-n-1\leq -2$, then $n+1-d> e$ and so  this reduces to $0=ag_0$, for a contradiction. Hence $d+e-n-1\geq -1$, as required.

(ii) Proposition \ref{shift}(i) implies that $h=f/x\in\Ann(s_2,\ldots,s_n)$ and as $\Ann(s)\subseteq \Ann(s^{(n-1)})$, $h\not\in\Ann(s)$. Part (i) now implies that
 $\deg(g)\geq n-\deg(h)=n+1-\deg(f)$.
\end{proof}

The next corollary was stated without proof in \cite[Proof of Theorem 3.7]{Salagean} when $D$ is a field using different notation. 
\bl \label{x|min} Let  $\mu_0=0$ and $e>0$. If $g\in\Ann^\bullet(s)$ then $\deg(g)\geq n+1-\LC$. 
\el
\bpr We have $n+1-2\LC=e>0$. If $s$ is the all-zero sequence, $\mu=1$ and $\mu_0\neq 0$, so $s$ is not the all-zero sequence and $1\leq \LC\leq \frac{n}{2}$. Next we show that $\mu/x\not\in\Ann(s^{(n-1)})$. For suppose that $\mu/x\in\Ann(s^{(n-1)})$. Then
$\LC_{n-1}\leq \LC-1=\max\{\LC_{n-1},n-\LC_{n-1}\}-1$, which implies that $\LC_{n-1}\leq \LC-1=n-\LC_{n-1}-1$ 
i.e. $\ee_{n-1}>0$. Now Theorem \ref{bit} implies that $e=-\ee_{n-1}+1\leq 0$, which is a contradiction. Thus $\mu/x\not\in\Ann(s^{(n-1)})$ and the result follows from Lemma \ref{noreciprocals}(ii). 
\epr
\subsection{Two Corollaries, an Algorithm and a Theorem}
Recall that $e=\ee_n=n+1-2\LC$.
\bc \label{simple}(Cf. \cite[Theorem 3.8]{Salagean})  If $\mu_0=0$, let $q\in D[x],\ \deg(q)=e$,
$a\in D^\times$ and
$$\ol{\mu}^\bullet=\left\{   \begin{array}{ll} 
\ol{\mu}+a\cdot\ol{\mu}'       & \mbox{ if }  e\leq 0\\
            q\cdot\ol{\mu}+a\cdot\ol{\mu}' & \mbox{ otherwise.}
                                  \end{array}
          \right.
          $$ 
Then (i) $\ol{\mu}^\bullet\in\MR^\bullet(s)$ and (ii)
$\tilde\mu'\cdot\ol{\mu}^\bullet=\nabla_n$ if $e\leq 0$;
$\ol{\mu}\cdot\ol{\mu}^\bullet=-a\cdot \nabla_n$ if $e<0$.
\ec
\begin{proof}  Corollary \ref{nonzero} implies that $\mu'_0\neq 0$, so $\mu^\bullet_0\neq 0$. If $e\leq 0$, Part (i) follows from Proposition \ref{MRprop} with $f'=a$ and Part (ii) follows from Theorem \ref{numu}. Let $e>0$. We check that
$\mu^\bullet\in \Ann(s)$ directly from the definition.
Let $e+\LC-n=1-\LC\leq j\leq -1$. Then
$$(\mu^\bullet\cdot\ul{s})_j=(q\cdot\mu\cdot\ul{s})_{j}+a\cdot(\mu'\cdot\ul{s})_j=0$$
since (i) $1-\LC\leq j$ implies that  $\LC-n\leq j-e$ and $(q\cdot\mu\cdot\ul{s})_{j}=\sum_{k=0}^eq_k(\mu\cdot\ul{s})_{j-k}$, where $ j-e\leq j-k\leq j$ and (ii) $n'-\LC'\leq j$. We have $\deg(\mu^\bullet)=e+\LC$, which is minimal by Lemma \ref{x|min}. Finally, $\LC\geq 1$ implies that $e\leq n-\LC$ and hence by Lemma \ref{f2}(ii), $\mu^\bullet_2=q\mu_2+a\cdot\mu'_2$, so that $\ol{\mu}^\bullet=q\cdot\ol{\mu}+a\cdot\ol{\mu}'\in\MR^\bullet(s)$. For Part (ii), we easily get 
$\tilde{\mu}\cdot\ol{\mu}^\bullet=-a\cdot\tilde{\mu}'\cdot\ol{\mu}$ which is $-a\cdot\nabla_n$ by Theorem \ref{numu}. 
\end{proof} A trivial example of the first case is $n=1$ and $s_1\in D^\times$.
 We have $\ol{\mu}=(x,1)$, $\ol{\mu}'=(1,0)$ and $e=0$. According to the Corollary, we can take $\mu^\bullet=(x+a,1)$ where $a\in D^\times$. For the second case, let $n=4$ and $s=(0,1,0,0)$. Then $\ol{\mu}=(x^2,1)$, $\ol{\mu}'=(1,0)$ and $e=1$. According to the Corollary, $\mu^\bullet=x^e\ol{\mu}+a(1,0)=(x^3+a,x)$.

Lemma \ref{x|min} now yields the following value of $\LC^\bullet$. Recall that if $e<0$ then $\mu\in\Min(s)$ is unique.
\bc\label{LCbullet} For $s\in D^n$,
$$\LC^\bullet=\left\{   \begin{array}{ll} 
 \LC       & \mbox{ if }  e\leq 0 \mbox{ or }(e>0\mbox{ and }\mu_0\neq 0)\\
            n+1-\LC & \mbox{ otherwise.}
                                  \end{array}
          \right.
          $$ 
 \ec
We now have all the ingredients for an iterative algorithm for sequences over $D$ by taking $q=x^e$ and $a=1$ in Corollary \ref{simple}.
\begin{algorithm}\label{rewrite2} (Cf. \cite[Algorithm 3.2]{Salagean})

\begin{tabbing} Input:\ integer $n\geq 1$ and  $s=(s_1,\ldots,s_{n})\in D^n$.\\
Output:\  $\ol{\mu}^\bullet\in\MR^\bullet(s)$.\\\\
\{Algorithm \ref{MRa}\ $(\varepsilon=0,n,s,\ol{\mu},\ol{\mu}',e)$;\\\\
{\tt IF } $\mu_0\neq 0$\ \= \ {\tt THEN} $\ol{\mu}^\bullet:= \ol{\mu}$ \\
          \> \ {\tt ELSE}\ \{{\tt IF}\ $e\leq 0$\= \ {\tt THEN}\= \ $\ol{\mu}^\bullet:= \ol {\mu}+\ol{\mu}'$ \ {\tt ELSE}  $\ol{\mu}^\bullet:= x^{e}\cdot\ol{\mu}+\ol{\mu}'$\};\\
{\tt RETURN} (${\mu}^\bullet)$\}\\
\end{tabbing}
\end{algorithm}
\brs (Cf. \cite[Section III]{Salagean})  (i)  The correctness of Algorithm \ref{rewrite2} does not require us to  characterise elements of $\Ann^\bullet(s)$ of minimal degree. (ii) Theorem \ref{bit} uses any ${\mu}^{(n-1)}$; in particular, if $1\leq j\leq n$, $\ee_{j-1}\leq 0$ and ${\mu}^{(j-1)}_0=0$, we can replace $\ol{\mu}^{(j-1)}$ by $\ol{\mu}^{(j-1)}+\ol{\mu}'^{(j-1)}$.
 (iii)  Algorithm \ref{rewrite2} does not include any tests on $\mu'_0$. (iv) We can also use $\varepsilon =1$. 
 \ers
\begin{example} \label{bullet}We apply Algorithm \ref{rewrite2} to Example \ref{mrex}: $\ol{\mu}^{(8)}=(x^4+x^2+x,x^2+x+1)$, $\ol{\mu}'^{(8)}=(x^3+x^2+x+1,x)$  and $\ee_8=1$, so that $\ol{\mu}^{(8)\bullet}=x\ol{\mu}^{(8)}+\ol{\mu}^{(8')}=(x^5+x+1,x^3+x^2)$. 
\end{example}
The following theorem is our analogue of Proposition \ref{MRprop} and Theorem \ref{numu}.
\begin{theorem} \label{dge0} (Cf. \cite{Salagean}) Let $s\in D^n$ and $f'\in D[x]$. \begin{tabbing}
(i) \= If   $f'=0$ or $\deg(f')\leq -e$ and $\mu_0+ f'_0\cdot \mu'_0\neq 0$ then (a) $\ol{\mu}^\bullet=\ol{\mu}+f'\cdot \ol{\mu}'\in\MR^\bullet(s)$\\
\>and (b) $\tilde{\mu}'\cdot \ol{\mu}^{\bullet}=\nabla_n$.\\

(ii) \> If $e>0$ and $\mu_0=0$, let $s_{n+1}$ be such that  $\Delta_{n+1}=\Delta_{n+1}(\mu,t)\neq 0$, where\\
\>$t=(s_1,\ldots,s_n,s_{n+1})$. If $f'=0$ or $\deg(f')\leq e-1$ then\\\\

\> (a) $\ol{\mu}^{\bullet}=\ol{\mu}^{(n+1)}+f'\cdot\ol{\mu}\in \MR^\bullet(t)\subseteq\MR^\bullet(s)$\\\\
 
\> (b)  $\tilde{\mu}\cdot \ol{\mu}^{\bullet}=\nabla_{n+1}(t)$.
\end{tabbing}
\end{theorem}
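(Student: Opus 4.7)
My plan is to reduce both parts to Proposition~\ref{MRprop} and Theorem~\ref{numu}, exploiting the tautological identities $\tilde{\mu}'\cdot\ol{\mu}'=-\mu'_2\mu'+\mu'\mu'_2=0$ and $\tilde{\mu}\cdot\ol{\mu}=0$ to absorb the cross terms introduced by $f'$.

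\textbf{Part (i).} For (a), I would invoke Proposition~\ref{MRprop} verbatim: its hypothesis $\deg(f')\leq -\ee_n=-e$ is precisely what is assumed, so $\ol{\mu}+f'\cdot\ol{\mu}'$ already belongs to $\MR(s)$. The constant term of its first coordinate is $\mu_0+f'_0\mu'_0$, and the extra hypothesis of (i) is exactly what is needed to promote the pair to $\MR^\bullet(s)$. For (b) I would expand
\[
\tilde{\mu}'\cdot\ol{\mu}^\bullet=\tilde{\mu}'\cdot\ol{\mu}+f'\cdot(\tilde{\mu}'\cdot\ol{\mu}'),
\]
whereupon the first summand is $\nabla_n$ by Theorem~\ref{numu} and the second vanishes identically.

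\textbf{Part (ii).} Setup: since $e=\ee_n>0$ and $\mu_0=0$, Corollary~\ref{nonzero} forces $\mu'_0\neq 0$. Applying Theorem~\ref{bit} (equivalently, Corollary~\ref{noindices}(b)) to the extended sequence $t$ in the $e>0$ branch yields $\mu^{(n+1)}=\Delta'_{n+1}\cdot x^e\mu-\Delta_{n+1}\cdot\mu'$, with $\mu'^{(n+1)}=\mu$ and $\ee_{n+1}=1-e$. The crucial observation is that the constant term $\mu^{(n+1)}_0=-\Delta_{n+1}\mu'_0$ is nonzero. With this in hand, Proposition~\ref{MRprop} applied to $t$ under the assumption $\deg(f')\leq e-1=-\ee_{n+1}$ gives $\ol{\mu}^{(n+1)}+f'\cdot\ol{\mu}'^{(n+1)}=\ol{\mu}^{(n+1)}+f'\cdot\ol{\mu}\in\MR(t)$; its first-coordinate constant term is $\mu^{(n+1)}_0+f'_0\mu_0=\mu^{(n+1)}_0\neq 0$, so $\ol{\mu}^\bullet\in\MR^\bullet(t)$. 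For (b), Theorem~\ref{numu} applied to $t$ gives $\tilde{\mu}'^{(n+1)}\cdot\ol{\mu}^{(n+1)}=\nabla_{n+1}(t)$, which rewrites as $\tilde{\mu}\cdot\ol{\mu}^{(n+1)}=\nabla_{n+1}(t)$ since $\mu'^{(n+1)}=\mu$; the cross term $\tilde{\mu}\cdot(f'\ol{\mu})=f'(-\mu_2\mu+\mu\mu_2)=0$ then finishes the identity.

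The step I expect to be the main obstacle is the inclusion $\MR^\bullet(t)\subseteq\MR^\bullet(s)$, which is non-trivial because the second coordinate of a pair depends on the underlying sequence. I plan to establish it in two steps. First, the minimal complexities agree: by Corollary~\ref{LCbullet}, $\LC^\bullet(s)=n+1-\LC$ (case $\mu_0=0$, $e>0$), while for $t$ one has $\LC(t)=\deg\mu^{(n+1)}=e+\LC=n+1-\LC$ together with $\mu^{(n+1)}_0\neq 0$, so the same corollary yields $\LC^\bullet(t)=\LC(t)=n+1-\LC$. Second, the polynomial parts are consistent: for any $f\in\Ann(t)$ of degree $d\leq n$, the extra contribution $s_{n+1}fx^{-n-1}$ to $f\cdot\ul{t}$ has degree $d-n-1\leq -1$, so $f_2(t)=f_2(s)$, and the embedding of pairs is legitimate.
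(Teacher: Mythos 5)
Your proof is correct and follows essentially the same route as the paper's: part (i) via Proposition \ref{MRprop} and Theorem \ref{numu} with the vanishing cross term $\tilde{\mu}'\cdot\ol{\mu}'=0$, and part (ii) by applying Theorem \ref{bit} to the extended sequence $t$, noting $\mu^{(n+1)}_0=-\Delta_{n+1}\cdot\mu'_0\neq 0$, $\ee_{n+1}=1-e$ and $(n+1)'=n$ before invoking Proposition \ref{MRprop} and Theorem \ref{numu} for $t$. Your explicit two-step verification of $\MR^\bullet(t)\subseteq\MR^\bullet(s)$ (equality of the two $\LC^\bullet$ values plus invariance of the second component under appending $s_{n+1}$) is in fact slightly more careful than the paper, which obtains the degree bound directly from Lemma \ref{x|min} applied to $s$ and leaves the second-component agreement implicit.
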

\begin{proof} As in Proposition \ref{simple}, we know that $\mu'_0\neq 0$ by Proposition \ref{fg}, so the first part of (i) is clear from Proposition \ref{MRprop} and Theorem \ref{numu}, and $\tilde{\mu}'\cdot \ol{\mu}^{\bullet}=\tilde{\mu}'\cdot \ol{\mu}$. 
To prove Part (ii), first note that by Theorem \ref{bit},  $\mu^{(n+1)}=\Delta'_{n+1}\cdot x^{e}\mu-\Delta_{n+1}\cdot\mu'\in \Min(t)$ with  $\deg(\mu^{(n+1)})=n+1-\LC$ and $\Delta'_{n+1}=\Delta_{n'+1}$. Further,
$\mu^{(n+1)}_0=\Delta_{n+1}\cdot\mu'_0\neq 0$ and $\Ann(t)\subseteq\Ann(s)$, so Lemma \ref{x|min} implies that $\ol{\mu}^{(n+1)}\in \MR(t)\cap\MR^\bullet(s)$.  
We have $\ee_{n+1}=-e+1\leq 0$, so    $\ol{\mu}^{(n+1)}+f'\cdot\ol{\mu}'^{({n+1})}\in\MR(t)$ by Proposition \ref{MRprop}. Also, $(n+1)'=n$ since $e>0$. We conclude that  ${\mu}^{\bullet}_0=
{\mu}^{(n+1)}_0\neq 0$, 
$\ol{\mu}^{\bullet}=\ol{\mu}^{(n+1)}+f'\cdot\ol{\mu}
\in\MR^\bullet(t)$ and
$$\tilde{\mu}\cdot \ol{\mu}^{\bullet}=\tilde{\mu}'^{({n+1})}\cdot \ol{\mu}^{\bullet}=\tilde{\mu}'^{({n+1})}\cdot \ol{\mu}^{(n+1)}
=\nabla_{n+1}.$$
\epr
In Example \ref{bullet},  $s_9=0$ forces $\Delta_9=1$. We obtain $\ol{\mu}^{(9)}=\ol{\mu}^{\bullet(8)}$ as before (with $f'=0$). However, since $e=1$, we may also take $f'=1$, and we obtain $\ol{\mu}^{(9)}+\ol{\mu}^{(8)}=\ol{\mu}^{(8)\bullet}+\ol{\mu}^{(8)}= (x^5+x^4+x^2+1,x^3+x+1)\in\MR^\bullet(s)$ too.  This is equivalent to taking $f'=x+1$ in Corollary \ref{simple}.
\brs (i) We could replace $\mu'$ by any element of $\Min(s')$. (ii)  By construction, $\mu^\bullet=q\mu-\Delta\cdot\mu'$ where
$q=\Delta'\cdot x^e+f'$ and $\deg(q)=e$.
\ers
\subsection{A Characterisation} In this subsection, we give an analogue of Theorem \ref{MRthm} for $\Ann^\bullet(s)$ when $\mu$ is unique and $\mu_0=0$: we  characterise $\{f\in\Ann^\bullet(s):\  \deg(f)\mbox{ is minimal}\}$ in this case.  This was stated without proof for $D=\F$ in \cite[proof of Theorem 3.7(iii)]{Salagean}.  We also describe $f_2(x)=\sum_{j=0}^{\deg(f)-1} (f\cdot\ul{s})_j\ x^j$. (The case $e\leq 0$ is an easy consequence of Theorem \ref{MRthm}.)
First a simple but useful proposition.
\bp \label{converse} Let $f,r\in D[x]$ and $n-1\geq k=\deg(f)-\deg(r)\geq 1$. If  $f, f+r\in \Ann(s)$ then $r\in\Ann(s^{(n-k)})$.
\ep
\bpr As $n-k\geq 1$, $\Ann(s^{(n-k)})$ is well-defined. Let $d=\deg(f+r)=\deg(f)=\deg(r)+k$ and $j$ satisfy 
 $\deg(r)-n+k\leq j\leq -1$.  Also, $d-n= \deg(r)-(n-k)$ and so 
 $(r\cdot\ul{s})_j=((f+r)\cdot\ul{s})_j-(f\cdot\ul{s})_j=0$ and hence $r\in\Ann(s^{(n-k)})$.
\end{proof}

It is convenient  to have the notion of a 'jump point'. 
\bd [Jump point] Let $n\geq 2$, $s\in D^n$ and $2\leq j\leq n$. We say that $j$ is a jump point of $s$ if $\LC_j>\LC_{j-1}$ and write $\J(s)$ for the set of jump points of $s$. 
\ed
We do not assume that $\J(s)\neq \emptyset$. 
Evidently, the following are equivalent: (i) $j\in\J(s)$  (ii) $\Delta_j\neq 0$ and $\ee_{j-1}>0$ (iii)  $j'=j-1$ (iv) $\LC_j=j-\LC_{j-1}>\LC_j$.  In Theorem \ref{dge0}, $s_{n+1}$ was chosen so that $n+1$ is a jump point of $(s_1,\ldots,s_{n+1})$. 

\bt \label{char} (Cf. \cite[Theorem 3.7(iii)]{Salagean}) Let $n\geq 2$, $s\in D^n$, $\mu_0=0$, $e>0$ and $\ell=\mu_\LC^{e+1}$. Then $f\in\Ann^\bullet(s)$ has minimal degree if and only if  
 $$\deg(f)= n+1-\LC,\ n'\geq 1\mbox{ and }\ell\cdot\ol{f}=q\cdot \ol{\mu}+\ol{r}$$ where $ q=(\ell\cdot f)\mydiv\mu$,  $r=(\ell\cdot f) \bmod \mu\in\Min(s')$ and $r_0\neq 0$.  In particular, if $D$ is a field  then ${f}=q\cdot {\mu}+{r}$ where $\deg(q)=e$ and ${r}=a\cdot\ol{\mu}'$ for some $a\in D^\times.$
 \et
\bpr  Corollary \ref{simple} implies that $q\ol{\mu}+\ol{r}\in\MR^\bullet(s)$ and hence so is $\ol{f}$.
Conversely, let $f$ be as stated. Then $\deg(f)= n+1-\LC_n$ by  Corollary \ref{LCbullet}. 
 Since $\mu_0=0$, $\LC\geq 1$ and $e>0$ implies that   $\LC\leq \frac{n}{2}$.
Pseudo-dividing $\ell\cdot{f}$ by $\mu$ (as in  \cite[Algorithm R, p. 407]{K2}) gives 
$\ell\cdot f=q\mu+r$ with $\deg(q)=n+1-2\LC=e>0$ and $r=0$ or $\deg(r)<\LC$. Since $r_0=\ell f_0\neq 0$, $0\leq d=\deg(r)<\LC$.
 Let $k=\LC-d\geq 1$. Then $n-k\geq 1$ since $n\geq 2\LC>\LC\geq \LC-d=k$ and $s^{(n-k)}$ is well-defined. By Proposition \ref{converse}, $r\in\Ann(s^{(n-k)})$.
 
First we  show that $n-k=n'$. Since  $d<\LC$ and $n'$ is maximal by \cite[Proposition 4.1]{N95b},  $n-k\leq n'$. In particular, $n'\geq 1$ and we are done if $n'=1$.  Suppose that $n-k< n'$. 
If $\LC'\leq d+1$ then 
\begin{eqnarray}\label{firstcase}
n-\LC+\LC'\leq n-\LC+d+1=n-k+1\leq n'=\LC+\LC'-1
\end{eqnarray}
and $e\leq 0$. Similarly, $\LC'=d$ is impossible and so $d<\LC'$. To treat this case, we generalize (\ref{firstcase}) by partitioning $[\LC_1,\LC')$
into  $[\LC_1,\LC_{j_1}),\ldots, [\LC_{j_{p-1}},\LC_{j_p})$ for some integer $p$. There is a unique integer $t\geq 2$ with $d\in[\LC_{j_{t-1}},\LC_{j_t})$, where $j_{t_1}=1$. Since $d<\LC_{j_t}$, we must have $n-k\leq j_t-1$ as before and therefore
$$n-\LC+\LC_{j_{t-1}}\leq n-\LC+d=n-k\leq j_t-1=\LC_{j_{t}}+\LC_{j_{t-1}}-1\leq \LC+\LC_{j_{t-1}}-1$$
which implies that $e\leq 0$. We conclude that $n-k=n'$
and  $\LC'\leq d$ as $r\in\Ann(s')$. If  $\LC'<d$, then 
 $(n-\LC)+\LC'+1\leq (n-\LC)+d=n-k= n'=\LC+\LC'-1$
and $e\leq -1$, so that $d=\LC'$.

Finally, if $D$ is a field, we have shown that $f=q\cdot\mu+r$ where $r\in\Min(s')$ and $r_0\neq 0$. Moreover, $\LC'\leq \LC-1=n'-\LC'$ so that $\ee_{n'}=n'+1-2\LC'\geq 1$ and $\mu'$ is the unique monic minimal polynomial of $s'$ by \cite[Corollary 3.27 or Theorem 4.16]{N95b}. Hence $r=a\mu'$ for some $a\in D^\times$.
\epr
\br \label{MRchar} In Theorem \ref{char}, we have $\deg(q)+\deg(\mu)=n+1-\LC\leq n$ since $\LC\geq 1$ and it follows easily from \cite[Corollary 3.25]{N95b} (which does not require $D$ to be a field),
that $\ell\cdot\ol{f}=(q\mu+r,(q\mu+r)_2)=q\ol{\mu}+\ol{r}$.
\er

\noindent {\bf Example \ref{mrex}(cont.)}  From Theorem \ref{char} and Remark \ref{MRchar}, $$\MR^\bullet(s)=\{(x^5+x+1,x^3+x^2),(x^5+x^4+x^2+1,x^3+x+1)\}.$$

\begin{center}
{\bf Appendix}
\end{center}
We give an alternative proof of Lemma \ref{noreciprocals} using reciprocals of annihilators and apply it to the complexity of reversed sequences. This generalises  Theorem 3 of \cite[p. 149]{IY} (which was proved using using Hankel matrices over a field) to a factorial domain. 

Let $\tilde{s}$ denote the reverse of $s\in D^n$ and $\tilde{\LC}=\LC(\tilde{s})$. We assume that $\mu^{(n)}\in \Min(s)$.
We begin with a basic result on annihilators and reciprocals.
 
 \begin{proposition} \label{haha} Let  $f\in\Ann(s)$ and  $d=\deg(f)$ and let $k\geq 0$ be the highest power of $x$ dividing $f$. Then $f^\ast\in \Ann(s_n,\ldots,s_{k+1})$. In particular, if $x\nmid f$ then $f^\ast\in \Ann(\tilde{s})$.
\end{proposition}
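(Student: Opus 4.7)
The plan is to reduce to the coprime-to-$x$ case using Proposition \ref{shift}(i), then do a single reindexing calculation showing that reciprocation sends annihilators to annihilators of the reversed sequence.

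First I would write $f = x^k g$ where $g = f/x^k$ satisfies $x \nmid g$ and $\deg(g) = d-k$. A direct computation shows $f^\ast = g^\ast$: indeed $f^\ast(x) = x^d f(x^{-1}) = x^d \cdot x^{-k} g(x^{-1}) = x^{d-k} g(x^{-1}) = g^\ast(x)$. By Proposition \ref{shift}(i), $g \in \Ann(s_{k+1},\ldots,s_n)$. So it suffices to establish the ``in particular'' statement: if $x \nmid g$ and $g \in \Ann(u)$ for some $u \in D^m$, then $g^\ast \in \Ann(\tilde u)$.

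To prove this reduced claim, let $e = \deg(g) = \deg(g^\ast)$ (equal because $g_0 \neq 0$). The annihilation condition $g \in \Ann(u)$ says
\[
\sum_{i=0}^{e} g_i\, u_{j-e+i} = 0 \qquad\text{for } e+1 \leq j \leq m.
\]
Using $(g^\ast)_i = g_{e-i}$ and $\tilde u_l = u_{m+1-l}$, the corresponding annihilation sum for $g^\ast$ acting on $\tilde u$ at index $j$ is
\[
\sum_{i=0}^{e} (g^\ast)_i\, \tilde u_{j-e+i} = \sum_{i=0}^{e} g_{e-i}\, u_{m+1-j+e-i} = \sum_{r=0}^{e} g_{r}\, u_{m+1-j+r},
\]
after reindexing with $r = e-i$. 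Setting $j' = m+1-j+e$ turns this into $\sum_{r=0}^{e} g_r\, u_{j'-e+r}$, and $j$ ranges over $[e+1, m]$ iff $j'$ ranges over $[e+1, m]$. This sum vanishes by the annihilator condition on $g$, so $g^\ast \in \Ann(\tilde u)$.

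Applying this to $g$ with $u = (s_{k+1},\ldots,s_n)$ (whose reverse is $(s_n,\ldots,s_{k+1})$) gives $g^\ast \in \Ann(s_n,\ldots,s_{k+1})$, and since $f^\ast = g^\ast$ we are done; the ``in particular'' case is $k = 0$. The only real obstacle is bookkeeping in the reindexing, making sure the index ranges match up after substitution, but no new ideas are required.
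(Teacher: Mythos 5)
Your proof is correct and takes essentially the same route as the paper: both arguments reduce to the case $x\nmid f$ via Proposition \ref{shift}(i) together with the observation that $f^\ast=(f/x^k)^\ast$, and then settle that case by a single reindexing computation (the paper phrases it as the Laurent-coefficient identity $(f^\ast\ul{\tilde{s}})_j=(f\ul{s})_{d-j-n-1}$, you as an explicit coefficient sum, but these are the same calculation). No gaps.
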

\begin{proof}  First assume that $k=0$.
For $d-n\leq j\leq -1$, 
$$(f^\ast\ul{\tilde{s}})_j
=(f(x^{-1})\ul{\tilde{s}})_{j-d}=(f\cdot(s_nx+\cdots+s_1x^n))_{d-j}=(x^{n+1}f\ul{s})_{d-j}=(f\ul{s})_{d-j-n-1}=0$$ 
since $d-n\leq d-j-n-1\leq-1$. In general, let $f'=f/x^k$. Then $f'\in \Ann(
s_{k+1},\ldots,s_n)$ by Proposition \ref{shift}(i) and
$f^\ast=f'^\ast\in \Ann(s_n,\ldots,s_{k+1})$ by the case $k=0$.
\end{proof}

Next we give an alternative proof of Lemma  \ref{noreciprocals} using reciprocals of annihilators.
\begin{lemma} \label{Dbound} Let $n\geq 2$, $f,g\in\Ann(s)$. If $x|f$,  $f/x\not\in\Ann(s^{(n-1)})$ and $x\nmid g$, then $\deg(g)\geq n+1-\deg(f)$.
\end{lemma}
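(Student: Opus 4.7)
The plan is to transfer the hypotheses through reciprocal polynomials to statements about annihilators of the reversed sequence $\tilde s$, and then to invoke Lemma \ref{elegantproof}(i). By Proposition \ref{haha}, the hypothesis $x\nmid g$ combined with $g\in\Ann(s)$ gives $g^\ast\in\Ann(\tilde s)$, with $\deg(g^\ast)=\deg(g)$. So it suffices to exhibit a polynomial $F\in\Ann(\tilde s^{(n-k+1)})^\times$ of degree $d-k$ (for an appropriate $k$) whose discrepancy at position $n-k+1$ is nonzero; then Lemma \ref{elegantproof}(i), applied to the sequence $\tilde s^{(n-k+1)}$ and the polynomial $F$, forces $\deg(g^\ast)\geq (n-k+1)-(d-k)=n+1-d$.

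Let $k\geq 1$ be maximal with $x^k \mid f$ and put $F=f^\ast$, so that $\deg(F)=d-k$, and by Proposition \ref{haha}, $F\in\Ann(\tilde s^{(n-k)})$. First I would observe that $d<n$: if $d\geq n$, then $\deg(f/x)\geq n-1$ makes $f/x$ vacuously annihilate $s^{(n-1)}$, contradicting the hypothesis. Under $d<n$, direct expansion of $(h\cdot \ul{s^{(n-1)}})_j$ with $h=f/x$, using $\ul{s^{(n-1)}}=\ul s - s_n x^{-n}$ together with $(f\ul s)_j=0$ for $d-n\leq j\leq -1$, shows that $h\not\in\Ann(s^{(n-1)})$ is equivalent to $(f\ul s)_0\neq 0$.

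For the discrepancy, I would use the identity $\ul{\tilde s}(x)=x^{-n-1}\ul s(x^{-1})$, which yields $(f^\ast\cdot\ul{\tilde s})_j=(f\cdot\ul s)_{d-n-1-j}$ for every $j$; in particular $(f^\ast\ul{\tilde s})_{d-n-1}=(f\ul s)_0$. The quantity $\Delta_{n-k+1}(f^\ast,\tilde s^{(n-k+1)})=(f^\ast\ul{\tilde s^{(n-k+1)}})_{d-n-1}$ differs from $(f^\ast\ul{\tilde s})_{d-n-1}$ only by the correction $\sum_{m=n-k+2}^n \tilde s_m\,(f^\ast)_{d-n-1+m}$, all of whose indices $d-n-1+m$ lie in $[d-k+1,d-1]$; but $(f^\ast)_i=f_{d-i}=0$ for $d-i\in[1,k-1]$, so this correction vanishes and the discrepancy equals $(f\ul s)_0\neq 0$. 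Lemma \ref{elegantproof}(i), applied to $\tilde s^{(n-k+1)}$ (of length at least $2$, since $1\leq k\leq d<n$) with $f^\ast$, then gives $\deg(G)\geq n+1-d$ for any $G\in\Ann(\tilde s^{(n-k+1)})^\times$; since $g^\ast\in\Ann(\tilde s)\subseteq\Ann(\tilde s^{(n-k+1)})$ and $g^\ast\neq 0$, we conclude $\deg(g)=\deg(g^\ast)\geq n+1-d$. The main obstacle will be the careful index bookkeeping in the discrepancy computation for general $k$; once one confirms that every coefficient of $f^\ast$ appearing in the correction term vanishes because $x^k\mid f$, the rest is a routine assembly.
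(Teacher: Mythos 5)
Your proof is correct and follows essentially the same route as the paper's own proof of Lemma \ref{Dbound}: pass to reciprocals, observe that $f^\ast=\hat{f}^\ast$ (with $\hat{f}=f/x^{k}$) has degree $\deg(f)-k$, annihilates $(s_n,\ldots,s_{k+1})$ but not $(s_n,\ldots,s_{k})$, and then apply Lemma \ref{elegantproof} to $f^\ast$ and $g^\ast$ on the reversed truncated sequence. The only (cosmetic) difference is that you certify the nonzero discrepancy by the explicit computation $\Delta_{n-k+1}(f^\ast,\tilde{s}^{(n-k+1)})=(f\cdot\ul{s})_0\neq 0$, whereas the paper argues contrapositively that $\hat{f}^\ast\in\Ann(s_n,\ldots,s_k)$ would force $f/x\in\Ann(s)\subseteq\Ann(s^{(n-1)})$ via Proposition \ref{shift}.
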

\begin{proof}  Since $g_0\neq 0$, $g\neq 0$ and $g^\ast\in\Ann(\tilde{s})$ by Proposition \ref{haha}(ii). Let $k$ be the highest power of $x$ dividing $f \in\Ann(s)$ and $\hat{f}=f/x^k$, so that $\deg(\hat{f}^\ast)=\deg(\hat{f})=\deg(f)-k$.   Firstly, $\hat{f}\in\Ann(s_{k+1},\ldots,s_n)$ by Proposition \ref{haha}(i) and $x\nmid \hat{f}$, so $\hat{f}^\ast\in\Ann(s_n,\ldots, s_{k+1})$ by Proposition \ref{haha}(ii).
But $\hat{f}^\ast\not\in\Ann(s_n,\ldots, s_{k})$, for otherwise $\hat{f}\in\Ann(s_k,\ldots,s_n)$ and hence $f/x=x^{k-1}\hat{f}\in\Ann(s)$ by Proposition \ref{haha}.  This is impossible since $\Ann(s)\subseteq\Ann(s^{(n-1)})$. 
We can now apply Lemma \ref{elegantproof} to $\hat{f}^\ast$ and $g^\ast$, yielding
$\deg(g^\ast)\geq n-k+1-\deg(\hat{f}^\ast)= n+1-\deg(f)$ i.e. $\deg(g)\geq n+1-\deg(f)$.
\end{proof}

\bl \label{geewiz} (i) If  $\mu_0\neq  0$ then $\tilde{\LC}\leq \LC$ and (ii) if $\mu_0=0$ then $\tilde{\LC}\geq n+1-\LC$.
\el
\bpr If  $\mu_0\neq  0$ then $\mu^\ast\in \Ann(\tilde{s})$
by Proposition \ref{haha}(ii), so $\tilde{\LC}\leq \deg(\mu^\ast)=\deg(\mu)=\LC$. Suppose now that $\mu_0=0$ and let $x^k|\mu$ with $k\geq 1$ maximal and put $g=\mu/x^k$. Then
$g_0\neq 0$, $\deg(g)=\LC-k$ and  $g\in\Ann(s_{k+1},\ldots,s_n)$ by Proposition \ref{haha}(ii). 
Hence ${g}^\ast\in \Ann(s_n,\ldots,s_{k+1})$. However, $g^\ast\not\in \Ann(s_k,\ldots,s_n)$, for otherwise $g\in \Ann(s_k,\ldots,s_n)$ and $\mu/x=x^{k-1}g\in \Ann(s)$ which  contradicts the minimality of $\mu$. We now apply Lemma \ref{elegantproof} to $g^\ast$:
$$\tilde{\LC}\geq \tilde{\LC}_{n-k-1}\geq n-k+1-\deg({g}^\ast)=n+1-{\LC}.$$
\epr
\begin{example} Let $D=\F_2$ and $s=(1,1,0,0)$. One easily checks that $x^2$ is the unique minimal polynomial.  By Lemma \ref{geewiz}, $\tilde{\LC}\geq 3$. In fact  
$\mu^{(3)}=x^3+x^2$ and so $\tilde{\LC}=3$.  Thus the bound of Lemma \ref{geewiz} is tight.
\end{example}

We will now show that in the special case $n=2\LC$ that 
$\LC=\LC+1$ when $D$ is factorial and $\mu^{(n)}_0=0$. Recall that if $D$ is a factorial domain and $n\geq 2\LC$ then $s$ has a unique minimal polynomial.
\bc (Cf. \cite[Theorem 3]{IY}). Let $D$ be a factorial domain, $n=2\LC$ and $\mu$ be the unique minimal polynomial of $s$. Then
$$\tilde{\LC}=\left\{   \begin{array}{ll} 
 \LC       & \mbox{ if }  \mu_0\neq 0\\
            \LC+1 & \mbox{ if }  \mu_0=0.
                                  \end{array}
          \right.
          $$ 
\ec
\bpr (i) We have  $\tilde{\LC}\leq \LC$ by Lemma \ref{geewiz}(i).  Let $\tilde{\mu}\in \Min(\tilde{s})$. If $\tilde{\mu}_0\neq 0$, $\tilde{\mu}^\ast\in \Ann(s)$, $\LC=\tilde{\tilde{\LC}}\leq \tilde{\LC}$ and we are done. If $\tilde{\mu}_0=0$, Lemma \ref{geewiz}(ii)  implies that $\LC\geq n+1-\tilde{\LC}=2\LC+1-\tilde{\LC}$. Hence
$\tilde{\LC}\geq \LC+1\geq \LC$ and $\tilde{\LC}=\LC$.

(ii) Suppose that $\mu_0=0$. Let $f$ be such that $f\in \Ann(s)$, $f_0\neq 0$ and $\deg(f)$ is minimal. Then $\deg(f)= n+1-\LC$ by Lemma 
\ref{noreciprocals}  and $\tilde{\LC}\leq n+1-\LC\leq \LC+1$ by Proposition \ref{haha} and the fact that $n=2\LC$. 
We now show that $\tilde{\LC}\leq \LC$ is impossible. 

Let $\tilde{f}\in \Min(\tilde{s})$. If $\tilde{f}_0\neq 0$, then $\tilde{f}^\ast\in\Ann(s)$ and so
$\LC\leq \deg(\tilde{f})=\tilde{\LC}$. We cannot have $\LC=\tilde{\LC}$, for then $\tilde{f}^\ast=\mu$ and $\mu_0\neq 0$. Hence $\LC< \tilde{\LC}$ and we cannot have $\tilde{\LC}\leq \LC$. By the first paragraph, $\tilde{\LC}=\LC+1$.
On the other hand, if $\tilde{f}_0=0$, then by Lemma \ref{geewiz},
$\tilde{\LC}\geq n+1-{\LC}=\LC+1$. Thus in either case,
$\tilde{\LC}\geq \LC+1$ and so $\tilde{\LC}= \LC+1$.
\epr
\begin{center}
{\bf Typographical Error}
\end{center}
In \cite{N10a}, immediately after Definition 3.3, $\Delta(f,s)=\sum_{k=0}^d f_k\ s_{n-d+k}$.

\end{document}